\documentclass[10pt,journal,compsoc]{IEEEtran}

\usepackage{tikz}
\usepackage{amsmath}

\usepackage{filecontents}
\usepackage{amsmath}                                                                                                                                                                    
\usepackage{amssymb}
\usepackage{amsfonts}
\usepackage{amsthm}
\usepackage{algorithm}
\usepackage{algpseudocode}
\usepackage{graphicx}
\usepackage{textcomp}
\usepackage{xcolor}
\usepackage{soul}
\usepackage{enumitem}

\usepackage{enumitem}
\setlist[enumerate]{itemsep=0mm}
\usepackage{stackengine}
\usepackage{bm}
\usepackage{tikz}
\usepackage{filecontents}
\usepackage{multirow}
\usepackage{booktabs}
\usepackage{color}
\usepackage{algorithm}  
\usepackage{algorithmicx}  
\usepackage{subcaption}
\usepackage{caption}
\usepackage[normalem]{ulem}
\usepackage{threeparttable}
\usepackage{mathtools}
\usepackage{hyperref}
\makeatletter

\newcommand{\Rmnum}[1]{\expandafter\@slowromancap\romannumeral #1@}
\makeatother

\newcommand{\CD}{\ensuremath{\mathcal{D}}}

\newcommand{\CL}{\ensuremath{\mathcal{L}}}

\newcommand{\CN}{\ensuremath{\mathcal{N}}}

\newcommand{\CR}{\ensuremath{\mathcal{R}}}

\newcommand{\CT}{\ensuremath{\mathcal{T}}}
\newcommand{\CU}{\ensuremath{\mathcal{U}}}
\newcommand{\CV}{\ensuremath{\mathcal{V}}}
\newcommand{\CW}{\ensuremath{\mathcal{W}}}
\newcommand{\CX}{\ensuremath{\mathcal{X}}}

\newcommand{\CZ}{\ensuremath{\mathcal{Z}}}

\newcommand{\BR}{\ensuremath{\mathbb{R}}}
\newcommand{\BP}{\ensuremath{\mathbb{P}}}

\newcommand{\BSI}{\ensuremath{\boldsymbol{I}}}
\newcommand{\BSBETA}{\ensuremath{\boldsymbol{\beta}}}

\definecolor{darkblue}{RGB}{48, 84, 151}
\definecolor{lightblue}{RGB}{222, 235, 246}
\definecolor{ao(english)}{rgb}{0.0, 0.5, 0.0}
\definecolor{applegreen}{rgb}{0.55, 0.71, 0.0}

\useunder{\uline}{\ul}{}

\DeclareMathOperator*{\argmax}{arg\,max}

\usepackage[english]{babel}
\newtheorem{theorem}{Theorem}

\theoremstyle{definition}
\newtheorem{definition}{Definition}  

\begin{document}

\title{``Training robust watermarking model may hurt authentication!'' Exploring and Mitigating the Identity Leakage in Robust Watermarking}


\author{Xinyu Zhang$^{1,2,3}$, Ziping Dong$^{1,2}$, Qingyu Liu$^{1,2}$, Yuan Hong{$^{4,*}$}, Zhongjie Ba$^{1,2,*}$, and Kui Ren$^{1,2}$ \\
\textsuperscript{1}State Key Laboratory of Blockchain and Data Security, Zhejiang University \\
\textsuperscript{2}Hangzhou High-Tech Zone (Binjiang) Institute of Blockchain and Data Security \\
\textsuperscript{3}Tongyi Lab, Alibaba Group 
\textsuperscript{4}University of Connecticut
\thanks{{$^*$}Corresponding author: Zhongjie Ba and Yuan Hong. \\ Emails: \{xinyuzhang53, zhongjieba\}@zju.edu.cn, yuan.hong@uconn.edu.}
}






\maketitle


\begin{abstract}




The rapid advancement of generative AI has underscored the critical need for identifying image ownership and protecting copyrights. This makes post-processing image watermarking an essential tool---it involves embedding a specific watermark message into an image, with successful verification if a similar message can be decoded from the watermarked image. However, this method is susceptible to both adversarial attacks that manipulate the watermarked image to yield an unverified message upon decoding, and the proposed identity leakage-related attacks (e.g., forging watermarked images). The threat of identity leakage is particularly exacerbated in both empirical and certified robust watermarking methods.  

To defend against the aforementioned attacks, we propose W-IR, the first image watermarking framework that simultaneously incorporates identity protection and robustness. To enhance model robustness, we introduce a novel randomized smoothing technique as part of a robust watermarking, that offers certified robustness against perturbations across two distinct transformation spaces: pixel-level and coordinate-level.
Moreover, to further mitigate identity leakage, we propose a new strategy based on residual information loss, aimed at minimizing the mutual information between the residual and watermarked images. Our work strikes a superior balance between robustness and identity leakage mitigation. Extensive experiments demonstrate that our W-IR framework achieves high certified accuracy for authenticity while effectively reducing identity leakage.
\footnote{The code is available at \url{https://github.com/holdrain/W-I-R}.}

\end{abstract}

\begin{IEEEkeywords}
Post-processing image watermarking, Adversarial attack, Identity leakage attacks, Randomized smoothing, Mutual information, Identity protection.
\end{IEEEkeywords}




\section{Introduction}
With the rapid advancement and widespread adoption of generative AI technologies like DALL·E~\cite{DALL·E3O73:online} and Stable Diffusion~\cite{StableDi13:online}, 
the generation of high-quality, commercially valuable images has become increasingly prevalent.
In this scenario, deep learning-based post-processing image watermarking~\cite{zhu2018hidden,stegastamp,zhang2020udh} emerges as a crucial tool for copyright protection, ownership tracking, and identification~\cite{MetaGoog27:online,ray2020recent}, and is widely adopted in real-world applications~\cite{rombach2022high,dathathri2024scalable}.  %
Among these techniques, the industry~\cite{Watermar98:online,Announci38:online} favors invisible watermarks~\cite{wolfgang1996watermark,zhu2018hidden} due to their ability to preserve visual fidelity while resisting straightforward removal attempts~\cite{zhao2024invisible}.
Invisible image watermarking can assist in verifying the image copyright through a three-step process--- 
embedding a unique and imperceptible copyright watermark message into an image with an \textit{encoder network} to produce a watermarked image, extracting the message with a \textit{decoder network}, and verifying the match with the original watermark message through a \textit{verification function}. 
The integration of the decoder network and the verification function is termed \textit{watermarking authentication}. 
Similarly, invisible image watermarking enables creators to track illegal distribution channels by embedding and authenticating unique secret watermarks for each channel. 


\begin{figure}[!t]
  \centering
  \includegraphics[width=\linewidth]{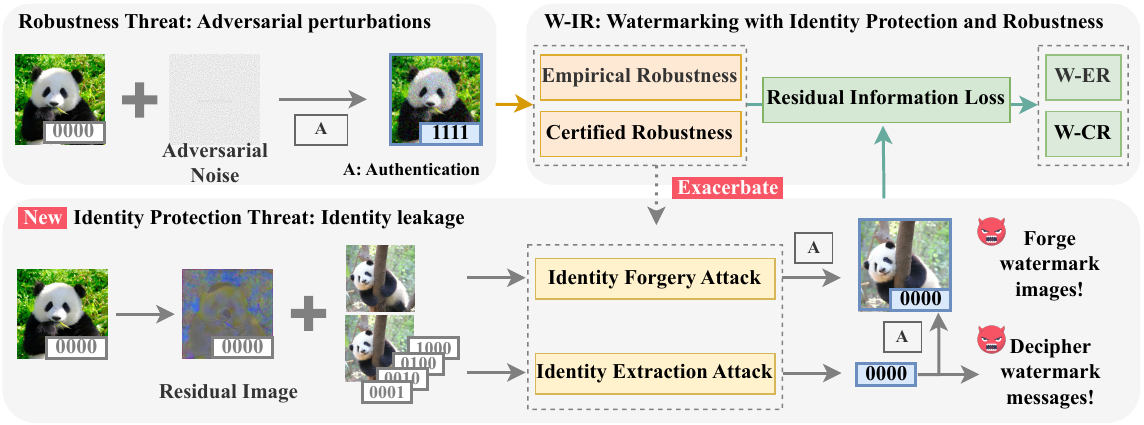}
  \vspace{-4mm}
  \caption{Our main contributions are: (1) the discovery of identity leakage and corresponding attacks in post-processing watermarking (which are exacerbated in the robust case), and (2) W-IR: robust watermarking with mitigated identity leakage.} 
  \label{fig:intro_framework}
  \vspace{-4mm}
\end{figure}

Recent studies have revealed that deep learning-based post-processing watermarking techniques are susceptible to adversarial attacks~\cite{goodfellow2014explaining,engstrom2019exploring,saberi2023robustness}. Adversaries can alter watermarked images in ways that are imperceptible to humans, but cause the decoder network to extract the message incorrectly, leading to authentication failure~\cite{jiang2023evading}. 
However, these attacks focus solely on robustness vulnerabilities, leaving the vulnerabilities related to identity leakage (especially critical in invisible watermarks) and the relationship between robustness and identity leakage largely unexplored.

\begin{table*}[!t]
\setlength\tabcolsep{2pt}
\footnotesize
\caption{Comparison of different types of image watermarking defenses against identity leakage attacks.}
\vspace{-2mm}
\begin{threeparttable}
\resizebox{\linewidth}{!}{
\begin{tabular}{lccccl}
\toprule
Method & Identity leakage threat & Detection Type~\cite{wan2022comprehensive} & Required storage & Key management & Computational Cost \\
\midrule
Baseline (no protection) & Linking \& forgery \& extraction & Blind & Watermarked image only & Universal watermark message~\cite{jiang2024watermark} & Low \\
One watermark per image & Forgery \& extraction & Non-blind & Original \& watermarked image & Image-specific watermark message & Low \\
Crypto-based$^\dagger$ & Forgery \& extraction & Non-blind & Original \& watermarked image & Universal key~\cite{sanivarapu2022digital} & High$^*$ \\
W-IR (Ours) & Mitigated all attacks & Blind & Watermarked image only & Universal watermark message & Low \\
\bottomrule
\end{tabular}
}
\begin{tablenotes}
    \item {$^\dagger$ For example, watermark message $t=\mathtt{func}(\mathtt{HMAC}(\text{key}, \text{original image}))$.}
    \item {$^*$Additional watermark message computation cost for encoding and authentication.}
\end{tablenotes}
\end{threeparttable}
\label{tab:intro}
\vspace{-4mm}
\end{table*}

In this paper, we uncover a new serious threat to watermarking:
identity leakage. This vulnerability allows 
adversaries to forge watermarked images carrying the genuine watermark message of a target user without ever decoding the watermark itself (i.e., \textit{identity forgery attacks}), using only a single image. Different from existing attacks that alter watermarked images directly, this attack crafts fake watermarked images to \textit{deceive watermark authenticity models}. Further, adversaries can decipher the specific watermark message bit by bit by reasonably accessing the watermark embedding services (i.e., \textit{identity extraction attacks}). This attack enables more accurate forgeries of watermarked images that \textit{deceive watermark authenticity verification}, severely compromising ownership tracking and identification, particularly in critical scenarios like police evidence authentication. 
%
This phenomenon arises primarily due to the decoder network's proficiency in accurately \textit{extracting watermark messages from watermarked and residual images}, as illustrated in Figure~\ref{fig:intro_framework}. 
Here, the residual image is defined as the difference between the watermarked image and the original one, and can be estimated using methods such as low-pass filtering~\cite{kutter2000watermark} and Variational Autoencoder (VAE) \cite{rombach2022high}.

\begin{figure}[!t]
  \centering
  \includegraphics[width=\linewidth]{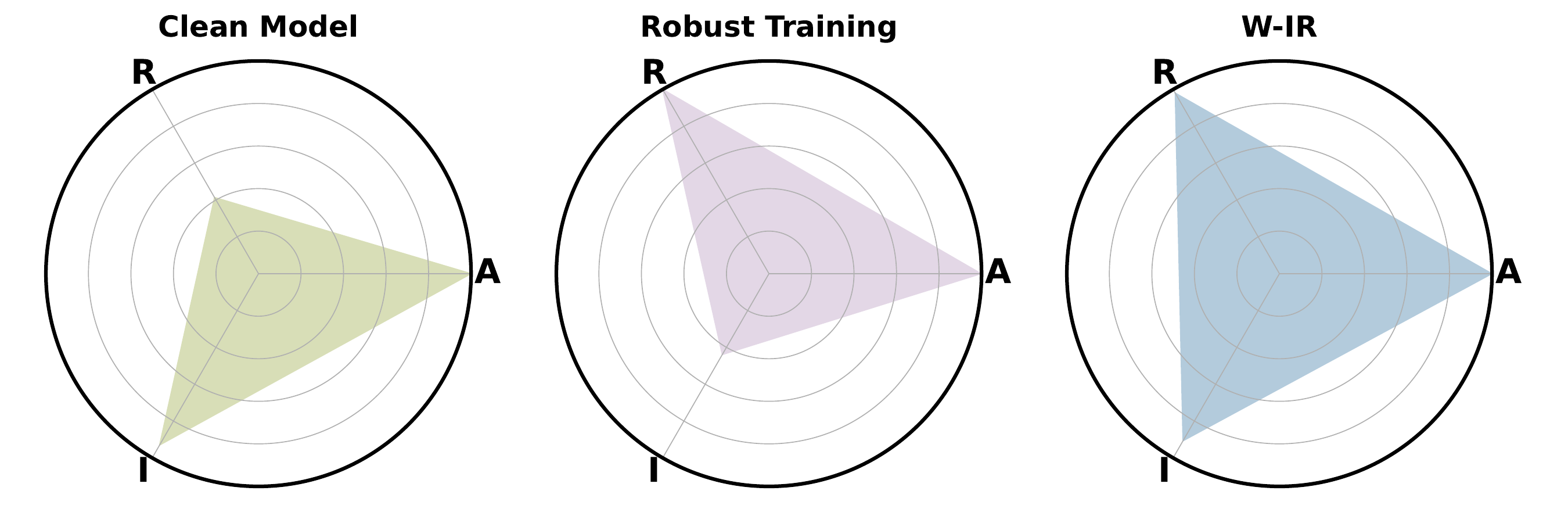}
  \vspace{-4mm}
  \caption{Visualization for the three-facet performance (A -- Authenticity, R -- Robustness, I -- Identity Protection) of three types of watermarking strategies. 
  }  
  \label{fig:visualize}
   \vspace{-4mm}
\end{figure}

To address the aforementioned vulnerabilities, 
we introduce W-IR, the first image watermark framework that integrates identity protection with robustness. W-IR can be universally combined with both empirical robustness (W-ER) and certified robustness (W-CR), as illustrated in Figure~\ref{fig:intro_framework}.


To enhance watermark robustness, researchers have developed various empirical defenses, including the integration of image augmentation~\cite{stegastamp,jia2021mbrs} and adversarial training~\cite{zhu2018hidden} into the training phases of encoder and decoder networks. 
However, they are broken by adaptive or stronger attacks~\cite{jiang2023evading,saberi2023robustness}.
Certified defenses~\cite{li2023sok} end the cat-and-mouse game between attacks and defenses primarily for classification models, and offers provable robustness guarantees against adversarial perturbations. Among them, randomized smoothing (RS)~\cite{cohen2019certified,zhang2020black} is the state-of-the-art and widely adopted due to its applicability to any model and its ability to achieve acceptable accuracy and efficiency on large-scale datasets (see Section~\ref{sec:RS}). 
Motivated by the success of RS in image classification, we aim to adapt it to counter adversarial perturbations in image watermarking, a largely unexplored field. 

We formulate the watermark authenticity, which encompasses the decoder network and verification function, as a classification model. Building on this model, we propose a novel randomized smoothing method to ensure certified robustness for image watermarking against two forms of image perturbations: modifying the pixel values and/ or pixel coordinates. Specifically, we model image perturbations as combinations of pixel and coordinate transformations and then offer provable robustness guarantees within the respective pixel and coordinate spaces. 


More crucially, 
while robust training bolsters model resistance against adversarial attacks, it simultaneously greatly exacerbates identity leakage (see Figure \ref{fig:visualize}).
This is because robust training methods (e.g., adversarial training and randomized smoothing) incorporate noise to counter adversarial perturbations, inadvertently enhancing the decoder network's capacity for extracting messages. 
For instance, we observe that the residual images obtained from robust models contain more watermark messages than those from normally trained (clean) models.  
As also illustrated in Figures~\ref{fig:id_link} and \ref{fig:id_link_robust}, 
residual images corresponding to the same message are closer on robust models than clean models.

To further enhance identity protection in robust watermarking, we theoretically analyze the mutual information to sufficiently preserve the identity information in watermarked images while reducing the watermark message in residual images. Although crypto-based methods (e.g., watermark message $t=\texttt{func}(\texttt{HMAC}(\mathrm{key}, \mathrm{original\ image}))$) can mitigate identity linking attacks, they remain vulnerable to identity forgery and extraction attacks while incurring high storage and computational overhead (see Table~\ref{tab:intro}). The mutual information estimation, however, is still a well-known challenging problem~\cite{poole2019variational,belghazi2018mutual}. Inspired by ~\cite{tian2021farewell}, we further use variational inference to derive \textit{residual information loss} (see Section~\ref{sec:ril}) to fit this mutual information objective but without explicitly estimating it. Figure \ref{fig:visualize} visualizes the performance improvement of our W-IR on authenticity, robustness, and identity protection/leakage. 



Therefore, our main contributions are as follows:


\begin{itemize}[leftmargin=*]

    
    \item We explore various vulnerabilities in post-processing watermarking techniques, particularly the risks of identity leakage that may be exacerbated by robust models. 


    \item We introduce W-IR, the first image watermarking framework that ensures both identity protection and robustness. W-IR can be universally integrated with different robustness methods, e.g., empirical robustness (W-ER) and certified robustness (W-CR). 
    
    Specifically, we develop a certified robust watermarking method that addresses pixel and coordination-level image perturbations. To protect identity, we mathematically formulate a mutual information objective that preserves necessary identity information in watermarked images while effectively discarding it in residual images.

    





    \item We evaluate the proposed identity leakage attacks and the W-IR defenses (W-ER and W-CR) on multiple datasets and post-processing watermarking methods.
    The results show that the three attacks can achieve high success rates. Further, our W-IR largely reduces the attack performance (providing identity protection) without compromising robustness.


    
\end{itemize}

\section{Preliminaries}
\subsection{Post-processing Image Watermarking}

\begin{figure}[t]
  \centering
  \includegraphics[width=\linewidth]{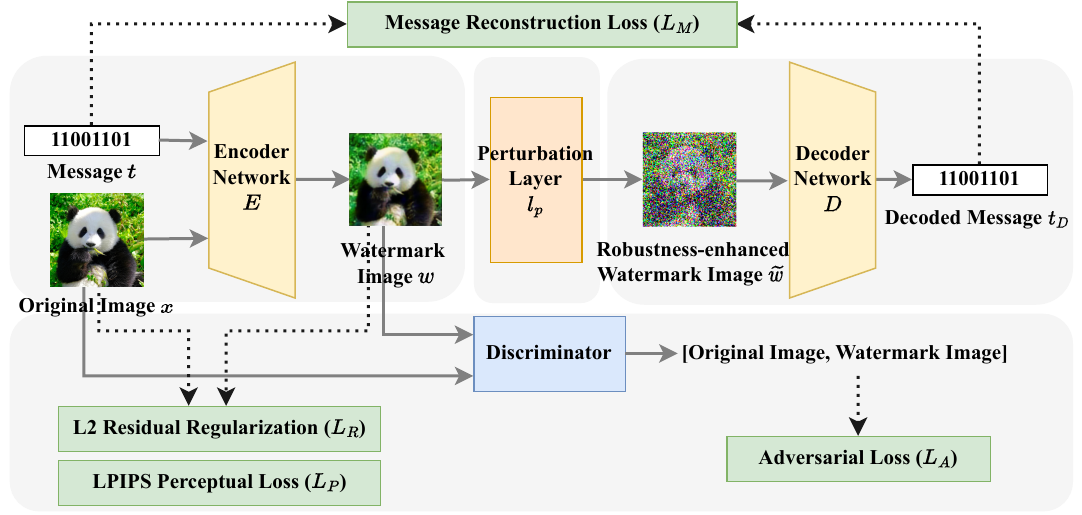}
  \vspace{-4mm}
  \caption{Training phase of image watermarking.} 
  \label{fig:watermarking}
   \vspace{-4mm}
\end{figure}

Image watermarking is a technique employed to embed a unique identifier (e.g., signal, pattern,  information) into an image. This process asserts ownership, protects intellectual property, tracks provenance, and provides authentication proof. 
%
The generic neural network-based post-processing watermarking process comprises four modules: encoder, image-enhancement module, decoder, and visual quality augmentation module (see Figure~\ref{fig:watermarking}). The encoder 
embeds a watermark message into an original image, transforming it into a watermarked image. 
The robustness-enhancement module performs data augmentation to empirically increase the watermark's resilience against various perturbations (e.g., geometric transformations, filters, compression). The decoder 
retrieves the embedded message from the robustness-enhanced watermarked image. Finally, the visual quality augmentation module maintains the imperceptibility of the watermark by minimizing its impact on the original image's visual fidelity, thereby preserving its aesthetic quality for human observers or computational analysis.

\vspace{0.05in}
\noindent\textbf{Training Phase: Encoder-Decoder Optimization.} 
Formally, given the original image $x$ and an $n$-bit binary string representing the watermark message $t$ (e.g., $11001101$), the encoder network $E$ obtains the watermarked image as $w=E(x, t)$. The perturbation layer $l_p$ then obtains the image-enhanced watermarked image $\tilde{w}=l_p(w)$ and the decoder network $D$ extracts the message from it. After deriving the decoded message, the decoder and encoder use a message reconstruction loss $\CL_M$ to derive the loss $\CL_M(D(l_p(E(x, t))),t)$. Besides the reconstruction loss, there is also the visual quality loss to improve the image's visual fidelity. Usually, $\CL_2$ residual regularization loss $\CL_R(E(x, t), x)$ and LPIPS perceptual loss~\cite{johnson2016perceptual,zhang2018unreasonable} $\CL_P(E(x, t), x)$ are used to calculate the difference between the watermarked image and the original image. It is also possible to construct an additional discriminator for distinguishing the original watermarked image and calculating the adversarial (discriminator) loss $\CL_A(E(x, t), x)$~\cite{zhu2018hidden}. The weighted sum of these loss components is used to update the encoder and decoder:

{
\begin{equation}
    \CL = \eta_M \CL_M + \eta_R \CL_R + \eta_P \CL_P + \eta_{A} \CL_{A}   
\label{eq:L_loss}
\end{equation}
}


\par
\noindent\textbf{Authentication Phase: Extracting and Verifying the Watermark.}
\label{sec:inference}
Once the encoder and decoder have been trained, given an original image $x$ and watermark message $t$ (i.e., $n$-bit $01$ string), the encoder can embed the watermark message into the image to produce a watermarked image $w=E(x, t)$, as shown in Figure~\ref{fig:authentication_watermarking}. During the authentication phase, when the watermarked image is fed into the decoder, a string of information $t_D$ with the same length as $t$ is obtained as $t_D=D(w)$. 
The bit accuracy is calculated by counting the number of matching positions between $t_D$ and $t$. If the bit accuracy exceeds a certain threshold $\tau$, the watermark is considered to be decoded correctly and verified successfully, which can be formulated as $V(t_D, t, \tau) = 1$. Otherwise, the decoding is deemed incorrect, $V(t_D, t, \tau) = 0$. 
The threshold $\tau$ is often determined by the false positive rate and the length of $t$~\cite{lukas2023ptw,stablesignature,jiang2023evading}.

\begin{figure}[t]
  \centering
  \includegraphics[width=\linewidth]{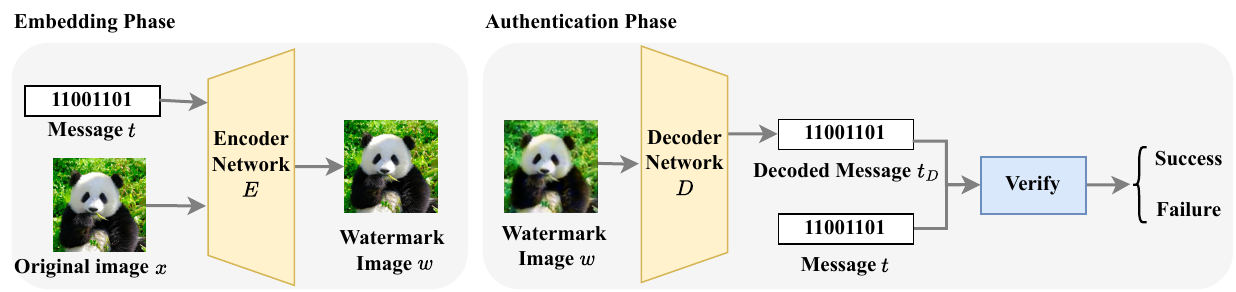}
  \vspace{-4mm}
  \caption{Authentication phase of image watermarking. 
  } 
  \label{fig:authentication_watermarking}
   \vspace{-4mm}
\end{figure}


\subsection{Watermarking Adversarial Perturbation}

Adversarial attacks on image watermarking involve slight modifications to a watermarked image 
to generate a perturbed watermarked image 
(aka. adversarial example) during the authentication phase~\cite{jiang2023evading}. These modifications are imperceptible to human observers but can deceive the decoder, resulting in low bit accuracy and failed verification. 
A direct way for generating adversarial perturbation on watermarked images is adding random noise, such as \textit{Gaussian noise}, to the watermarked image. 
Another type of adversarial perturbation is created through transforming the images, while preserving their semantics~\cite{an2024benchmarking}, e.g.,  
\textit{affine} transformation~\cite{an2024benchmarking, ma2022towards}. 
A third type of method is based on more refined and purposeful perturbations, e.g., precisely adjusting the perturbation on each image pixel via an optimization algorithm~\cite{jiang2023evading}. 


Rule-based adversarial perturbation, including random noise and semantic transformation, is intuitive and easy-to-implement, 
which is the main focus of  the paper. 



\subsection{Randomized Smoothing for Classification}
\label{sec:RS}

Randomized smoothing is a widely adopted defense strategy that offers provable robustness guarantees to classifiers against adversarial examples~\cite{cohen2019certified,lecuyer2019certified,jia2020certified,wang2021certified,zhang2023text}. This technique is distinguished by its applicability across various classifier architectures and its ability to maintain satisfactory accuracy with large models and large-scale datasets. The rationale behind this method involves incorporating noise into the classifier's training and inference phases, which effectively smooths the decision boundaries and mitigates steep transition areas exploited by adversarial examples.

Initially, the randomized smoothing 
defines a base classifier $h$, which assigns an input $x$ to a label $y$  from the label set $\mathcal{Y}$. Then a \textit{smoothed classifier} $g$ is constructed based on 
$h$ by incorporating random noise $\epsilon$, sampled from an application-dependent distribution,  
into 
$x$. 
Mathematically, $g(x) = \argmax_{c\in\mathcal{Y}}\mathbb{P}(h(x + \epsilon) = c)$. Finally, the certified robustness, in the form of \emph{certified radius}, can be derived based on 
$g$. 
Particularly, $p_A, p_B\in [0,1]$, the 
highest and second highest probabilities outputted by $\mathbb{P}(h(x + \epsilon)) $, are first calculated, and the corresponding classes are denoted as $y_A$ and $y_B$. 
Then $g$ consistently has an accurate prediction 
$y_A$ for 
$x$ with the adversarial perturbation $\delta$ falling within the \textit{certified radius} $R$, i.e., $g({x} + \delta) = y_A, \forall ||\delta||_p \leq R$, where $||\cdot||_p$ is an $\ell_p$ norm. $R$ is often determined by the variance of the noise $\epsilon$ and the $p_A, p_B$.  
A larger $R$ implies that $h$ (or $g$) is more certified robust. 
\section{Identity Protection Threat: Identity Leakage}


In this section, we present critical identity leakage vulnerabilities. Adversaries may execute 
an \textit{identity forgery attack}, that uses one image from a target user to create counterfeit watermarked images without decoding the secret watermark and a more potent \textit{identity extraction attack}, that accurately retrieves the secret watermark. These attacks compromise watermark-based image copyright protection and ownership traceability, particularly in invisible watermarks.


\begin{figure}[!t]
\centering
  \begin{subfigure}[b]{0.32\linewidth} 
    \includegraphics[width=\linewidth]{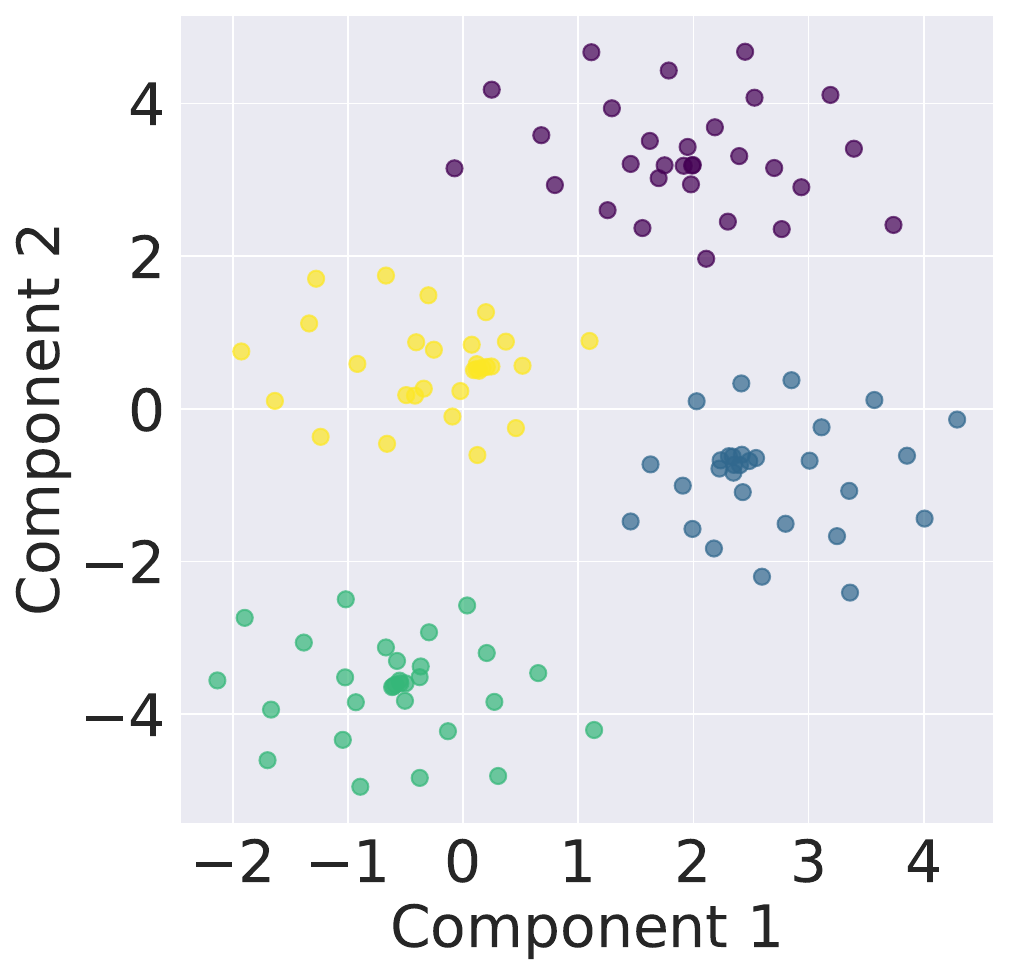}
    \label{fig:clean_coco}
  \end{subfigure}
  \begin{subfigure}[b]{0.32\linewidth}
    \includegraphics[width=\linewidth]{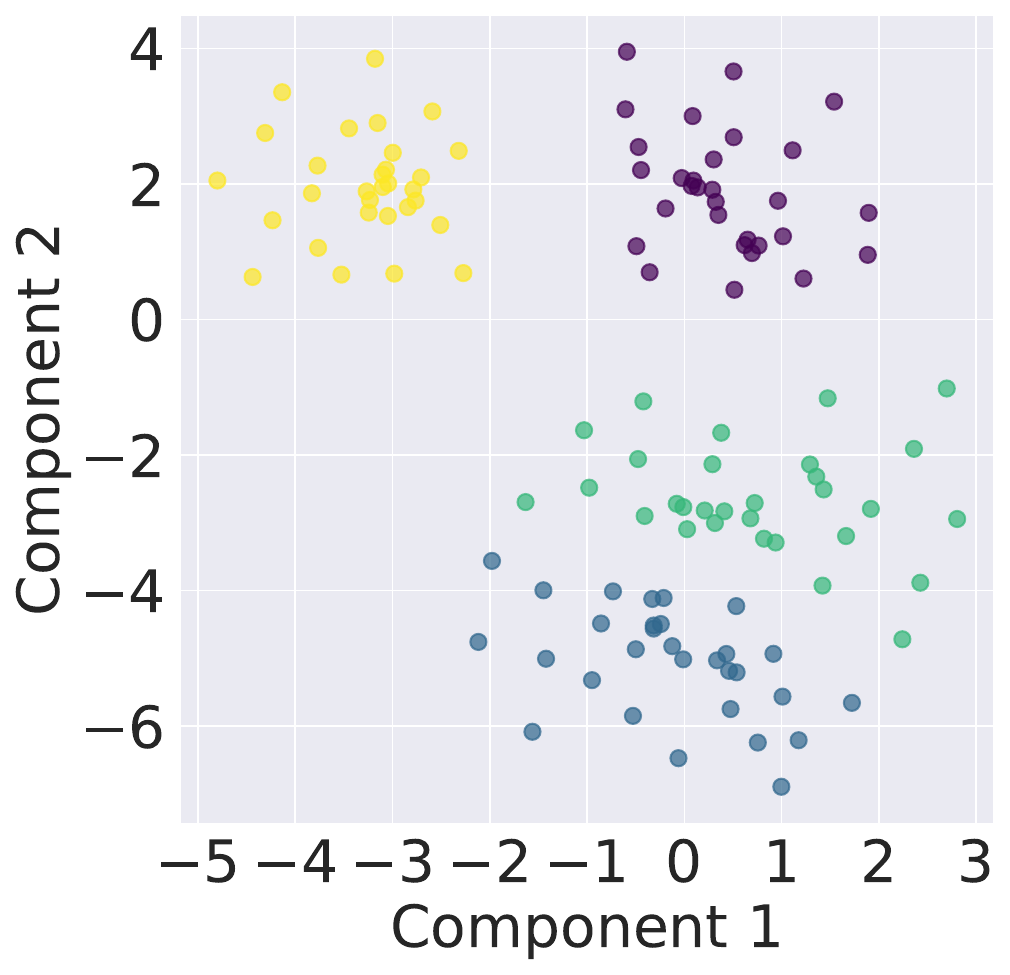}
    \label{fig:clean_celeb}
  \end{subfigure}
  
\vspace{-4mm}
\caption{Cluster results of residual images from four users on HiDDeN. Datasets: left-COCO and right-CelebA.
The intra-cluster distances of the four secret watermarks are (a)-$[0.78,0.83,0.94,0.78]$ and (b)-$[1.17,0.86,0.98,1.13]$.
}
\label{fig:id_link}
\vspace{-4mm}
\end{figure}

\subsection{System Model}
We consider a practical scenario in which a user or entity uses watermarking technology for copyright protection and ownership identification of their images~\cite{ding2021generalized}. The process generally involves two parties: the user (or entity), who seeks to safeguard their images, and a third-party watermarking service provider that exclusively offers black-box watermark embedding and authentication services. 
Initially, the user selects a secret watermark message \( t \) (i.e., $n$-bit binary string) to serve as a unique identifier and submits \( t \) with the clean image \( x \) to the third-party service provider. The provider then embeds \( t \) into \( x \) and returns the watermarked image \( w \) to the user. Upon receiving \( w \), the user can publish \(w\) on the Internet. When authentication for copyright or ownership is required, the user sends both the watermarked image \( w \) and the secret watermark \( t \) to the third-party provider. The provider subsequently returns a result indicating whether the secret matches the watermarked image, with responses categorized as "yes"/ "no". \textit{Note that users typically maintain consistent secret watermarks for practical management purposes, with secret watermark $t$ serving as their identity representation.} Additionally, we evaluate scenarios involving secret watermark modifications to ensure comprehensive analysis.



\subsection{Threat Model} \label{sec:threat_model_new}
We consider an adversary aiming to forge watermarked images and extract the target user's identity using only a single image. While we focus on the challenging single-image scenario, using multiple images from the same target user can improve attack success rates.
The \textbf{adversary's goals} are twofold: 
1) to perform an identity forgery attack by crafting watermarked images that appear to carry the user's secret watermark; and 2) to execute an identity extraction attack to recover user's secret watermark to produce more accurate forged watermarked images.
Both attacks fundamentally undermine watermark authentication by maliciously misattributing images to users who did not generate them.
The adversary possesses the same black-box encoder access \textbf{capability} as benign users but lacks knowledge of users' secret watermarks.
We assume the adversary can access the watermarked images from users, each with their unique secret watermark, e.g., downloaded from the Internet. 
The adversary can obtain the corresponding 
residual images $z$, 
using watermark removal techniques such as VAE~\cite{rombach2022high} and low-pass filters\cite{kutter2000watermark}. 



For defense, we consider the third-party watermarking provider as the defender with legitimate ownership of watermarking models~\cite{jiang2023evading}. The \textbf{defender's capabilities} encompass the training of encoder-decoder networks, for which we propose an identity protection-enhanced training strategy (see Section~\ref{sec:ril}) to mitigate user identity leakage.


\subsection{Identity Leakage Attacks} \label{sec:leakage_clean} 

\begin{figure}[!t]
  \centering
  \includegraphics[width=\linewidth]{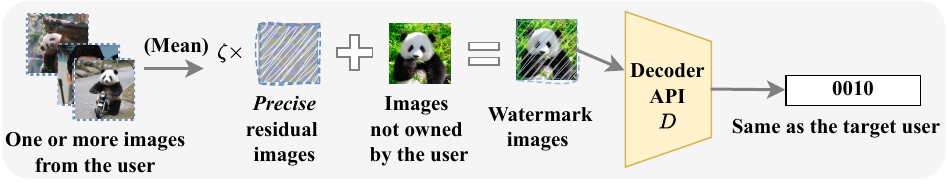}
  \vspace{-4mm}
  \caption{Identity forgery attacks. }
  \vspace{-4mm}
  \label{fig:id_attack}
\end{figure}

\noindent\textbf{Observation.}
We observe that residual images from the same user exhibit consistent watermark patterns. To demonstrate this, we extract salient features from
residual images from four users using $t$-SNE~\cite{van2008visualizing} dimensionality reduction, and subsequently apply $k$-means clustering~\cite{krishna1999genetic} to identify image clusters that share similar watermark characteristics. 
As shown in Figure~\ref{fig:id_link}, images from the same user form coherent clusters, indicating that the residual images preserve stable and user-specific watermark patterns rather than mere random noise. This reveals that substantial watermark-related information is exposed in the residual domain, which we later exploit to mount practical attacks.

\vspace{0.05in}
\noindent\textbf{Identity Forgery Attacks.} 
Given a set of $m$ ($m\geq 1$) images from the target user, 
the adversary can launch identity forgery attacks. Since residual images contain both secret watermarks and image content, 
directly overlaying a single residual image onto a clean image can already produce a forged watermarked image, but often at the cost of noticeable visual artifacts. 
Drawing from previous collusion attacks~\cite{doerr2005collusion}, averaging multiple images embedded with the same secret watermark 
can estimate and remove the watermark from the image. Thus, adversaries can further neutralize image content by averaging residual images from the same user, resulting in a more \textit{precise} residual image. As depicted in Figure~\ref{fig:id_attack}, the adversary overlays $\zeta$ times 
of \textit{precise} residual images on an original image to obtain the forged watermarked image. Figure~\ref{fig:id_forgery} illustrates the effect of overlaying varying numbers of residual images on a clean original image and the corresponding bit accuracy of watermark decoding and authentication. 
The results imply that even a single residual image retains the primary secret watermark and allows the decoder network to correctly decode much of the embedded message. Moreover, as the number $m$ of averaged residual images increases, the visual quality of the forged images improves.



\begin{figure}[!t]
  \centering
  \begin{subfigure}[b]{0.23\columnwidth}
        \centering
        \includegraphics[width=\textwidth]{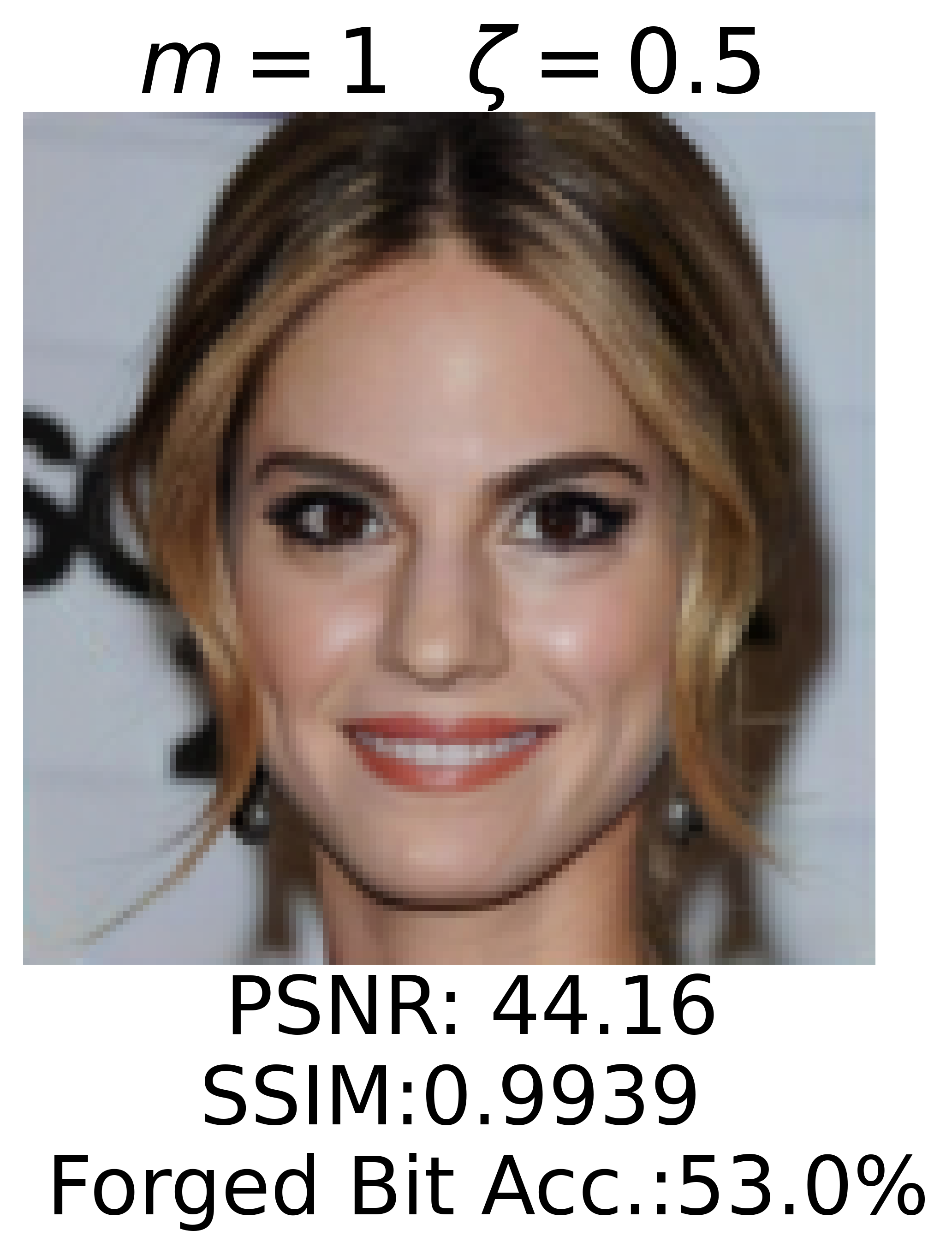}
    \end{subfigure}
    \begin{subfigure}[b]{0.23\columnwidth}
        \centering
        \includegraphics[width=\linewidth]{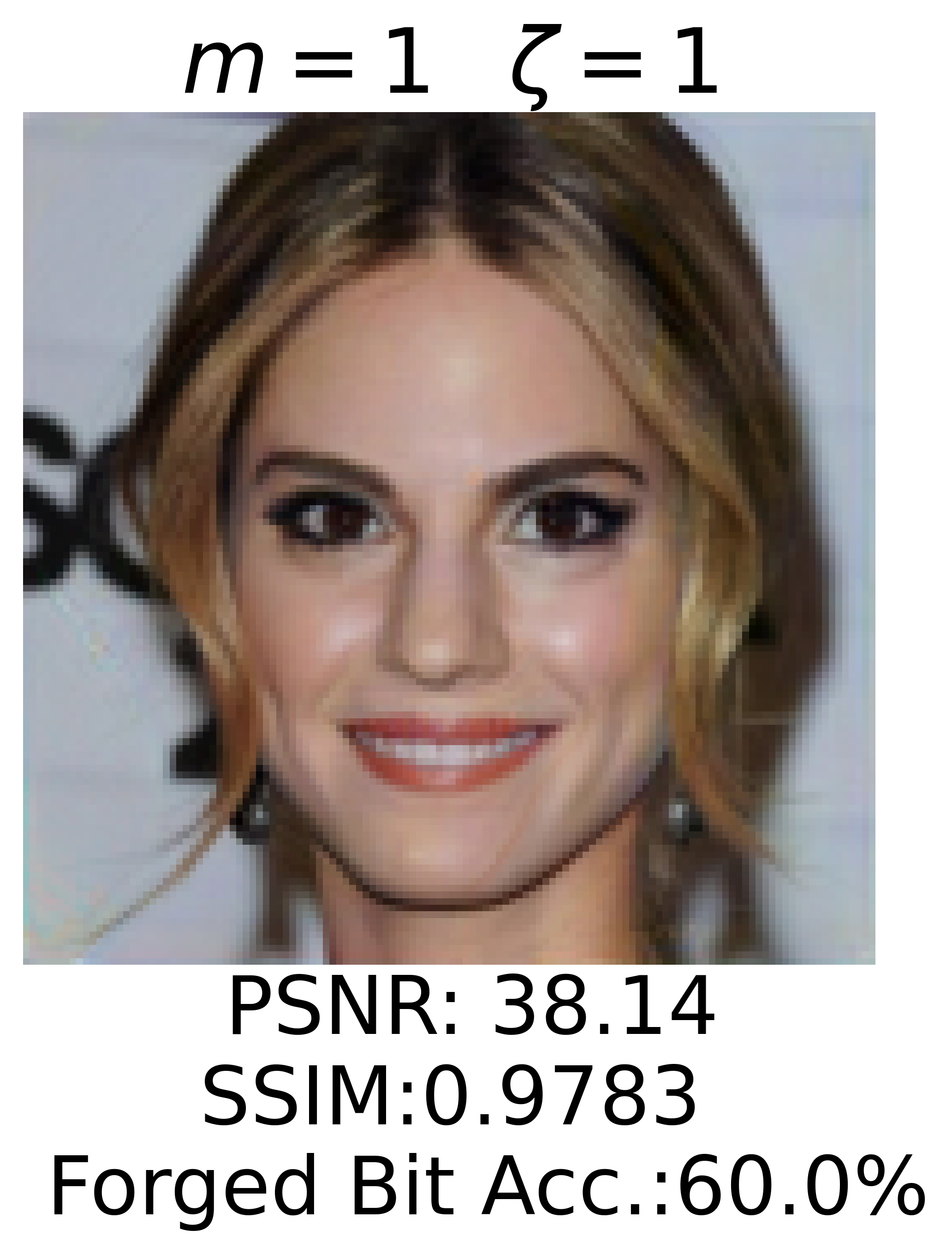}
    \end{subfigure}
    \begin{subfigure}[b]{0.23\columnwidth}
        \centering
        \includegraphics[width=\textwidth]{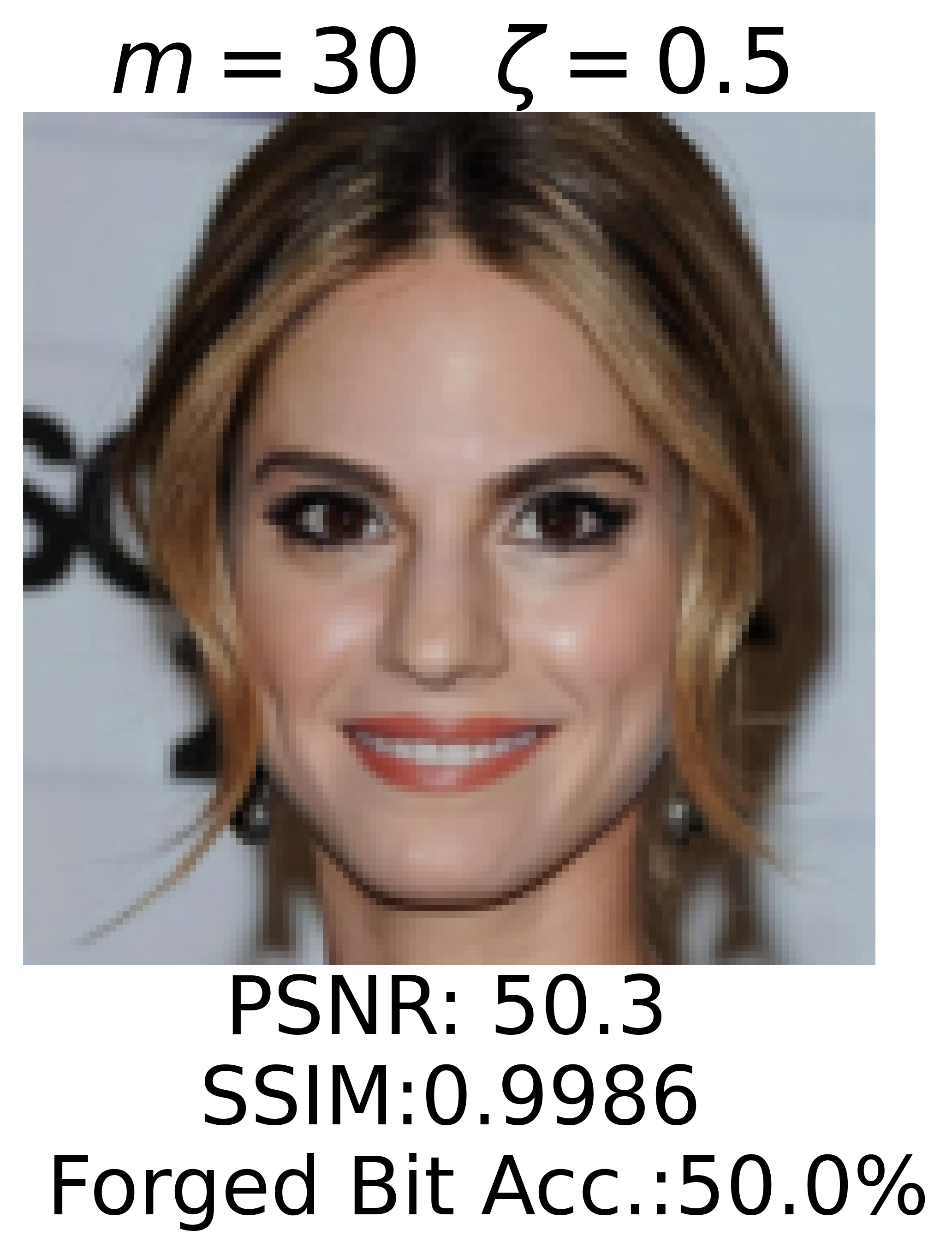}
    \end{subfigure}
    \begin{subfigure}[b]{0.23\columnwidth}
        \centering
        \includegraphics[width=\linewidth]{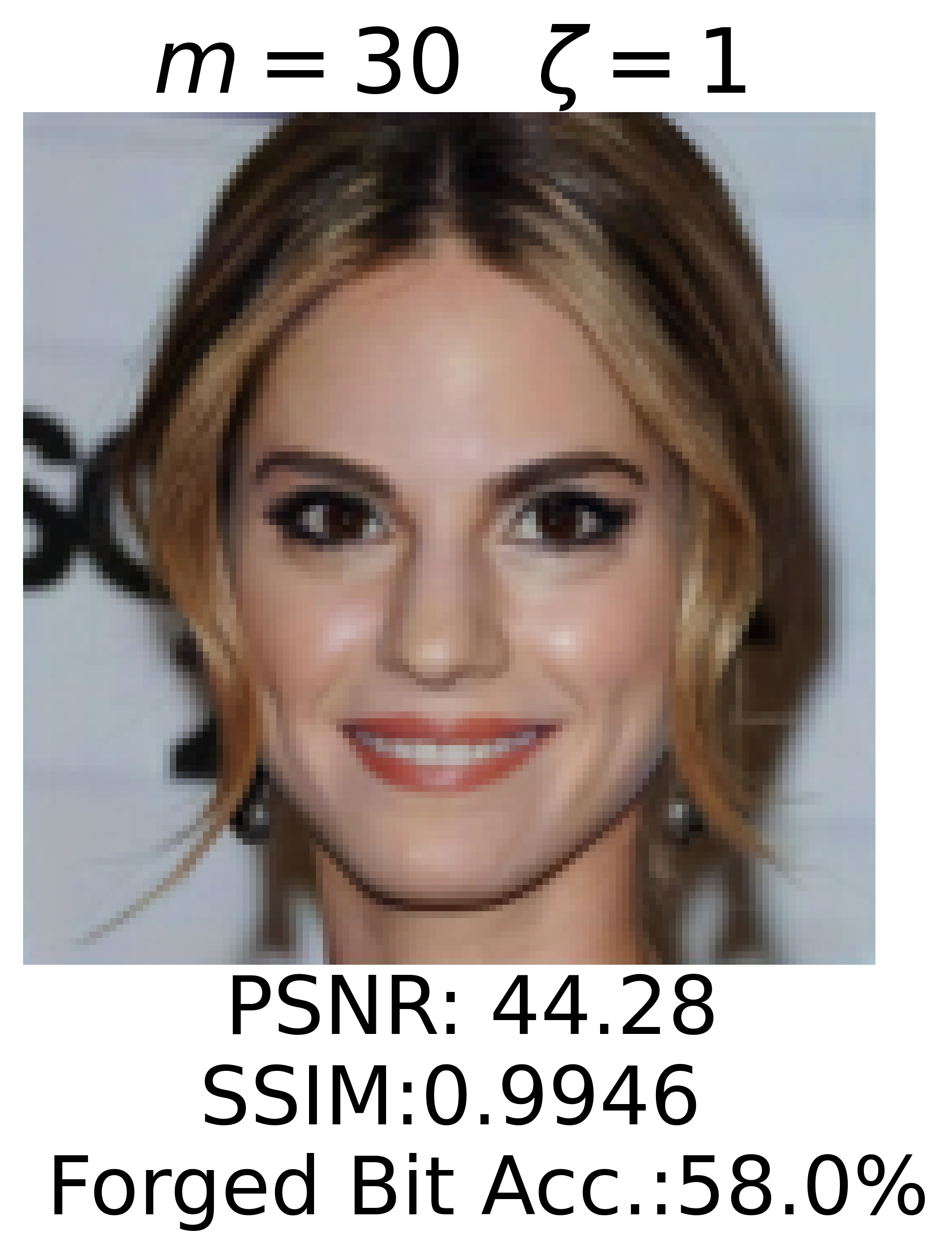}
    \end{subfigure} 
\vspace{-2mm}
  \caption{Illustration of forging watermarked images and attack effect on StegaStamp. PSNR(↑) and SSIM(↑) represent the image quality. Forged bit accuracy represents the accuracy of secret watermark decoding from the forgery image. 
}
\vspace{-4mm}
  \label{fig:id_forgery}
\end{figure}

\begin{algorithm}[!b]
\small
\caption{Identity Extraction Algorithm}  
\begin{algorithmic}[1]
\Require Target user's watermarked image $w_\texttt{tg}$ and its residual image $z_\texttt{tg}$, clean image $x$, secret watermark length $n$, encoder $E$.
\State $t\gets$ a bit string of length $n$ initialized to all `$0$' 
\For {$\iota = 1$ to $n$}  \Comment{Iterate over each bit}
    \State Set $t' \gets t$, and $t'[\iota] \gets 1$
    \State $w_{t}\gets E(x, t)$, $w_{t'}\gets E(x, t')$ 
        \Comment{Watermarked image}
    \State $z_t \gets w_{t} -x$, $z_{t'} \gets w_{t'} - x$ 
        \Comment{Residual image}
    \State $\textrm{dist}_{t} \gets \|z_t,\ z_\texttt{tg}\|,\ \textrm{dist}_{t'} \gets \|z_{t'},\ z_\texttt{tg}\|$
        \Comment{Euclidean distance}
    \If{$\textrm{dist}_{t'} < \textrm{dist}_{t}$} 
    \State $t \gets t'$ \Comment{Update $t[\iota]\gets 1$}
    \EndIf
\EndFor
\State \textbf{Output:} Extracted secret watermark message $t$.
\end{algorithmic}
\label{alg:identity_detection}
\end{algorithm}

\vspace{0.05in}
\noindent\textbf{Identity Extraction Attacks.}
Given one watermarked image from the target user, the adversary can illicitly decode a $n$-bit secret watermark by accessing the watermark embedding services fewer than $2\times n$ times. This attack arises from the inadvertent inclusion of watermark information in the residual image; the closer the secret watermark is to a given image, the more similar the corresponding residual images become. For instance, when applying $100$ secret different watermarks to the same image, the Pearson correlation coefficient ($\in[-1,1]$)~\cite{cohen2009pearson} between the secret watermark distance and the residual image distance series may reach as high as $0.95$ on StegaStamp (CelebA), indicating a strong positive correlation between these two sequences. 
As illustrated in Figure~\ref{fig:id_extract} and Algorithm~\ref{alg:identity_detection}, to determine whether the $\iota$-th bit of the target user’s watermark is $0$ or $1$, the adversary sends two pairs, $(x, t)$ and $(x, t')$, to the watermark service provider, where $x$ is a clean image, and $t$ and $t'$ differ only in the $\iota$-th bit. After obtaining the residual images $z_t$ and $z_{t'}$ for their own watermarked image, the adversary can decide whether the $\iota$-th bit of the target user’s watermark corresponds to $t$ or $t'$ by evaluating which of the residual images $z_t$ or $z_{t'}$ is closer to the target user’s residual image.

\begin{figure}[!t]
  \centering
  \includegraphics[width=\linewidth]{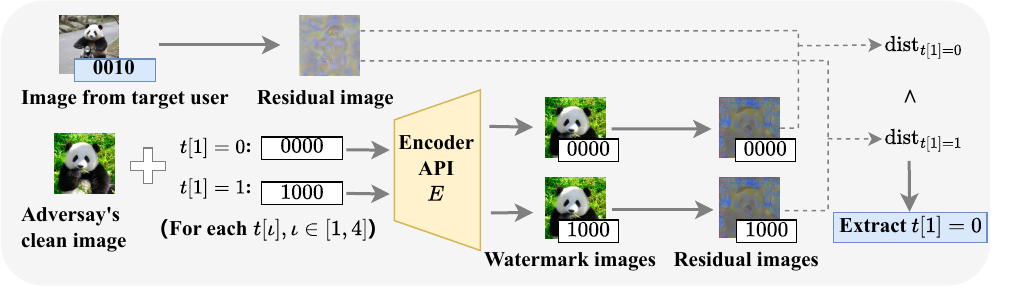}
  \vspace{-4mm}
  \caption{Identity extraction attack.}
  \vspace{-4mm}
  \label{fig:id_extract}
\end{figure}

\section{Robustness Exacerbates Identity Leakage} \label{sec:leakage_distort}

In this section, we first present two robust watermarking approaches: an empirically robust method (see Section~\ref{sec:W-ER}) and a certified robust method (see Section~\ref{sec:W-CR}), both designed to defend against adversarial attacks. Next, we discuss how robust watermarking may inadvertently exacerbate identity leakage (see Section~\ref{sec:id_leakage_robust}).

\subsection{Empirically Robust Watermarking}
\label{sec:W-ER}
These methods typically integrate adversarial training~\cite{zhu2018hidden,stegastamp,CIN,jia2021mbrs} into the encoder and decoder's training phases to defend against adversarial attacks. We use two representative open-sourced robust watermarking, referred to as W-ER: 1) StegaStamp$^+$\footnote{We use \textit{StegaStamp} and \textit{HiDDeN} to identify clean watermarking methods that do not incorporate empirically robustness training. We use \textit{StegaStamp$^+$} and \textit{HiDDeN$^+$} to signify watermarking methods that include empirically robustness training, as detailed in \cite{stegastamp} and \cite{zhu2018hidden}.}\cite{stegastamp} improves model robustness through a perturbation layer including various distortions (e.g., perspective warp, Gaussian noise, JPEG compression) to approximate distortions in real printing and photography. 2) HiDDeN$^+$\cite{zhu2018hidden} enhances invisibility via a discriminator and incorporates a noise simulation layer (e.g., Gaussian noise, JPEG compression) to bolster robustness against real-world perturbations.



\subsection{Certified Robust Watermarking}
\label{sec:W-CR}

In this section, 
we first 
introduce the notations used in the paper, followed by the formulation of 
image watermark authentication as a classification model. We then formalize adversarial watermarks as a transformation of image pixels or/and coordinates. 
Finally, we introduce W-CR as the solution to certify watermarking against such adversarial watermarks.

\subsubsection{Watermark Authenticity Formulation} 

\begin{table}[]
\setlength\tabcolsep{2pt}
\centering
\footnotesize
\caption{Frequently used notations}
\vspace{-2mm}
\begin{tabular}{@{}ll@{}}
\toprule
Term      & Description               \\
\midrule
$\CX\subseteq \BR^d$ & Original image space \\
$\CT\subseteq \{0,1\}^{n}$ & Secret watermark message space \\
$\CW\subseteq \BR^d $ & Watermarked image space \\
$\CZ\subseteq \BR^d$ & Residual image space \\
$\CV\subseteq \BR^{2d}$ & Coordinate space \\
$\CW\oplus\CV \subseteq \BR^d$ & Pixel space situated at coordinate space \\
$\tau\subseteq \BR$ & Verification (tolerance) threshold \\
$D: \CW\oplus\CV \to \CT$ & The decoder model \\
$f_V: \CT \times \CT \times \BR \to \{0,1\}$ & The watermark verification function \\
$h\!:\!(\CW\oplus\CV) \times \CT \times \BR \!\to\! \{0,1\}$ & The watermark authentication model \\
$\phi(w, u): \CW \times \CU$ & Pixel trans. with parameter $u$ on $w$ \\ 
$\theta(v, r): \CV \times \CR$ & Coordinate trans. with parameter $r$ on $v$ \\ 
$f_I: \CW\times \CV\to \CW$ & The interpolation function \\
$\langle i, j \rangle$ & $x$-axis and $y$-axis of coordinates \\
$n$ & Length of the secret watermark message \\
$d$ & Size of original (watermarked) image \\
$\delta$ & Perturbations of pixel/coordinate space \\
$R$ & Certified radius \\

\bottomrule
\end{tabular}
\label{tab:notation}
 \vspace{-4mm}
\end{table}

\textbf{Notations.}
We denote the set of all possible original images as $\CX$, the set of all possible secret watermark messages as $\CT$, and the set of all watermarked images as $\mathcal{W}$. For each original image $x \in \CX$ and secret watermark $t \in \CT$, there exists a corresponding watermarked image $w \in \mathcal{W}$. 
We use $\mathcal{V} = \{v=\langle i, j \rangle\}$ to denote the image's coordinate space, where $i$ and $j$ represent the coordinates along the $x$-axis and $y$-axis of a pixel in the image. 
Then, a coordinated watermarked image can be defined as $w\oplus v$, which means the pixels $w$ positioned at coordinates $v$\footnote{For ease of description, we will interchangeably use $w$ and $w\oplus v$ to denote a watermarked image.}. We introduce a decoder model $D: \CW\oplus\CV \to \CT$, where for any watermarked image $w$, the decoder extracts an estimated secret watermark denoted as $t_D = D(w\oplus v)$.  
We denote the verification function as $f_V: \CT \times \CT \times \BR \to \{0,1\}$. 
The function $V(t_D, t, \tau)$ evaluates the match between the decoded secret watermark $t_D$ and the original secret watermark $t$. This function outputs $1$ (indicating a verification success) if the correspondence between $t_D$ and $t$ meets or exceeds the threshold $\tau$, and $0$ otherwise, signifying a verification failure. We formalize the watermark authentication phase as follows.


\begin{definition}[Watermark Authentication]
\label{def:watermark_det}
We define the watermark authentication model as $h: (\CW\oplus\CV) \times \CT \times \BR \to \{0,1\}$, which combines both the decoding and verification steps. Specifically, it first applies the decoder $D$ to a watermarked image $w$, to extract a decoded message. The decoded message is then assessed in conjunction with the original secret watermark $t$ and a predefined tolerance threshold $\tau$ via the verification function $V$. Consequently, the model $h$ evaluates the congruence between the decoded message and $t$, factoring in the threshold $\tau$, denoted as below: 
{
\begin{equation}
    h(w\oplus v, t, \tau) = f_V(D(w\oplus v), t, \tau), 
\label{eq:auth_def}
\end{equation}
}

\vspace{-1mm}

\par
\noindent where a value of $1$ signifies a successful match, and a value of $0$ indicates a mismatch.

\end{definition}

Note that the input image space of $h$ ($\CW\oplus\CV \subseteq \BR^d$) is consistent with $\CW$, and the training process of $h$ (including decoder $D$ and verification function $f_V$) aligns with the classic training of the encoder-decoder watermarking framework. Table~\ref{tab:notation} shows the commonly used notations.



\vspace{0.05in}
\subsubsection{Adversarial Watermarking Formulation}
\label{sec:perturbation}
Recall that a watermarked image comprises image pixels $w$ and coordinates $v$. Perturbing images can be the pixel-level ($\phi$) and coordinate-level ($\theta$) modifications. Considering the pixel of a given image ($w$), the effect of a perturbation on transformation is mathematically denoted as:
\begin{equation}
    w'\oplus v = \phi(w)
\end{equation}


In this scenario, without alterations to the coordinates, the resultant image pertains solely to the pixel values. Conversely, image coordinate transformation 
produces new pixel positions ($v'$), potentially not aligning precisely with the original image's coordinates. We introduce the interpolation function $f_I(w, v')$~\cite{perez20223deformrs, alfarra2022deformrs}, which is commonly employed to estimate the pixel values based on the neighboring pixels. This coordinate transformation is expressed as:
{
\begin{equation}
    w\oplus v'= f_I(w, \theta(v))
\end{equation}
}
%

\noindent Next, we present examples depicted in Figure \ref{fig:noise} and offer precise definitions for $\phi$ and $\theta$ corresponding to the perturbations. 

\begin{figure}[t]
  \centering
  \begin{subfigure}[b]{0.15\columnwidth}
        \centering
        \includegraphics[width=\textwidth]{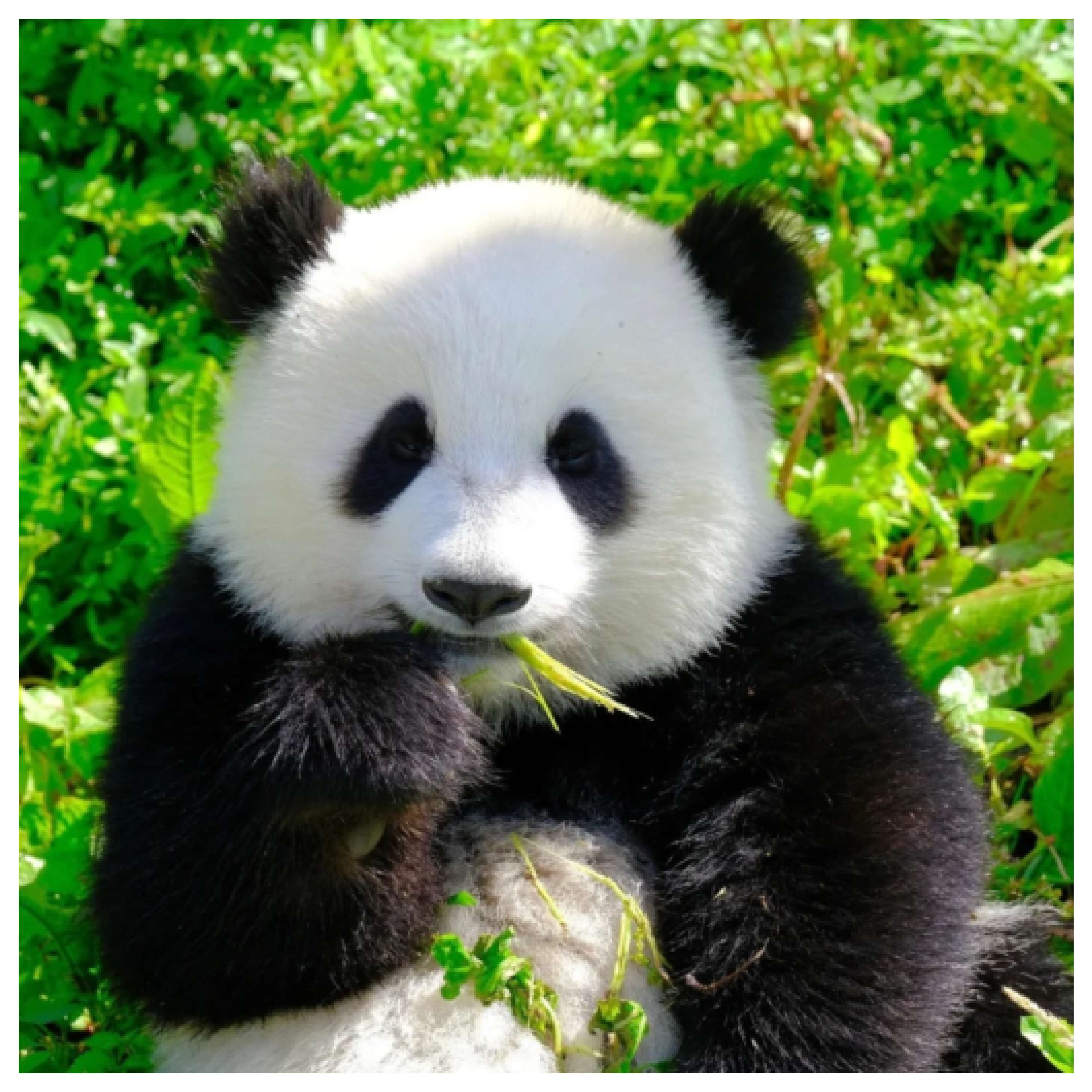}
    \end{subfigure} 
\begin{subfigure}[b]{0.15\columnwidth}
        \centering
        \includegraphics[width=\textwidth]{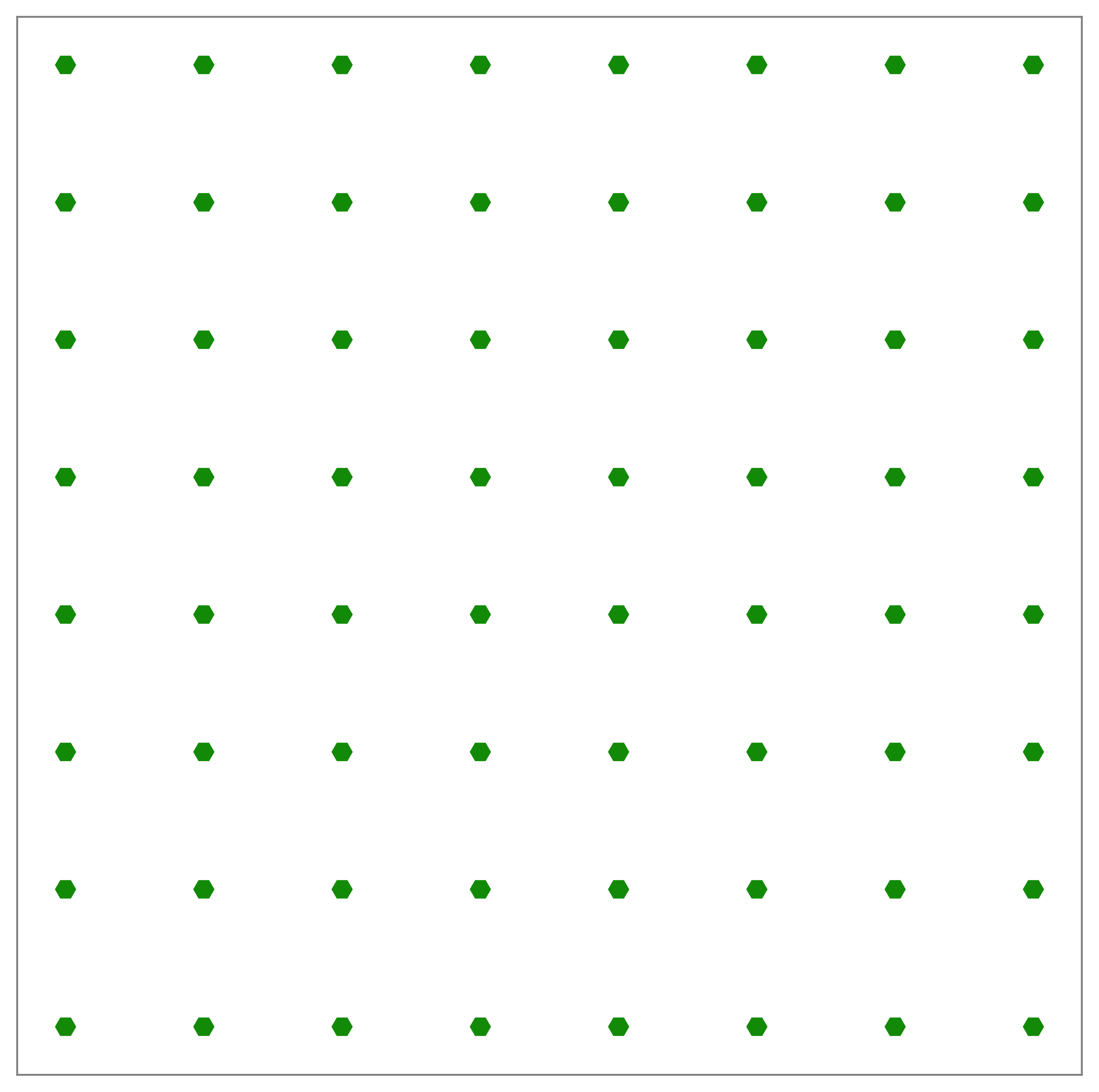}
    \end{subfigure}  
    \begin{subfigure}[b]{0.15\columnwidth}
        \centering
        \includegraphics[width=\linewidth]{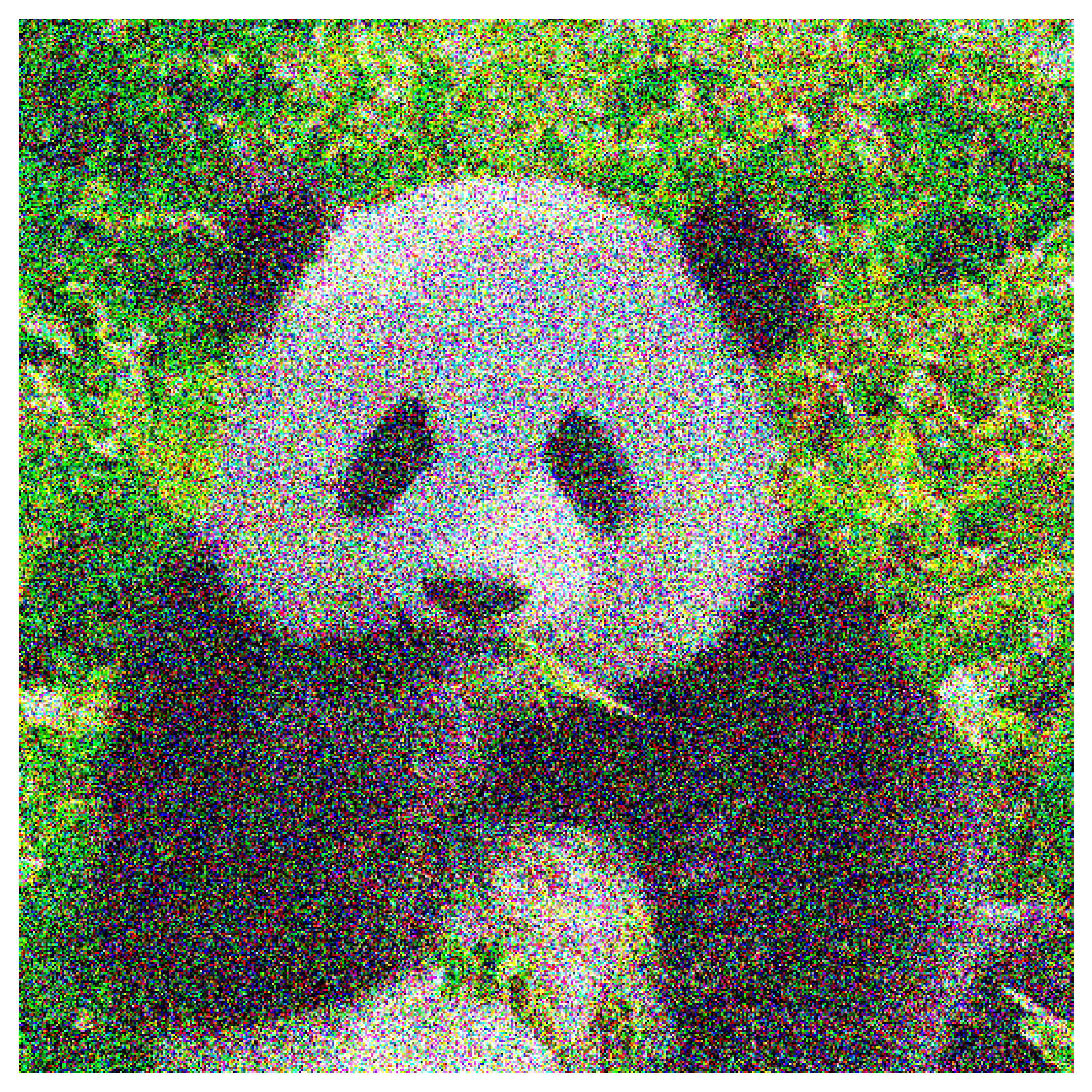}
    \end{subfigure}  
\begin{subfigure}[b]{0.15\columnwidth}
        \centering
        \includegraphics[width=\linewidth]{figures/rule-based-pic/GN_vfield.png}
    \end{subfigure}  
    \begin{subfigure}[b]{0.15\columnwidth}
        \centering
        \includegraphics[width=\linewidth]{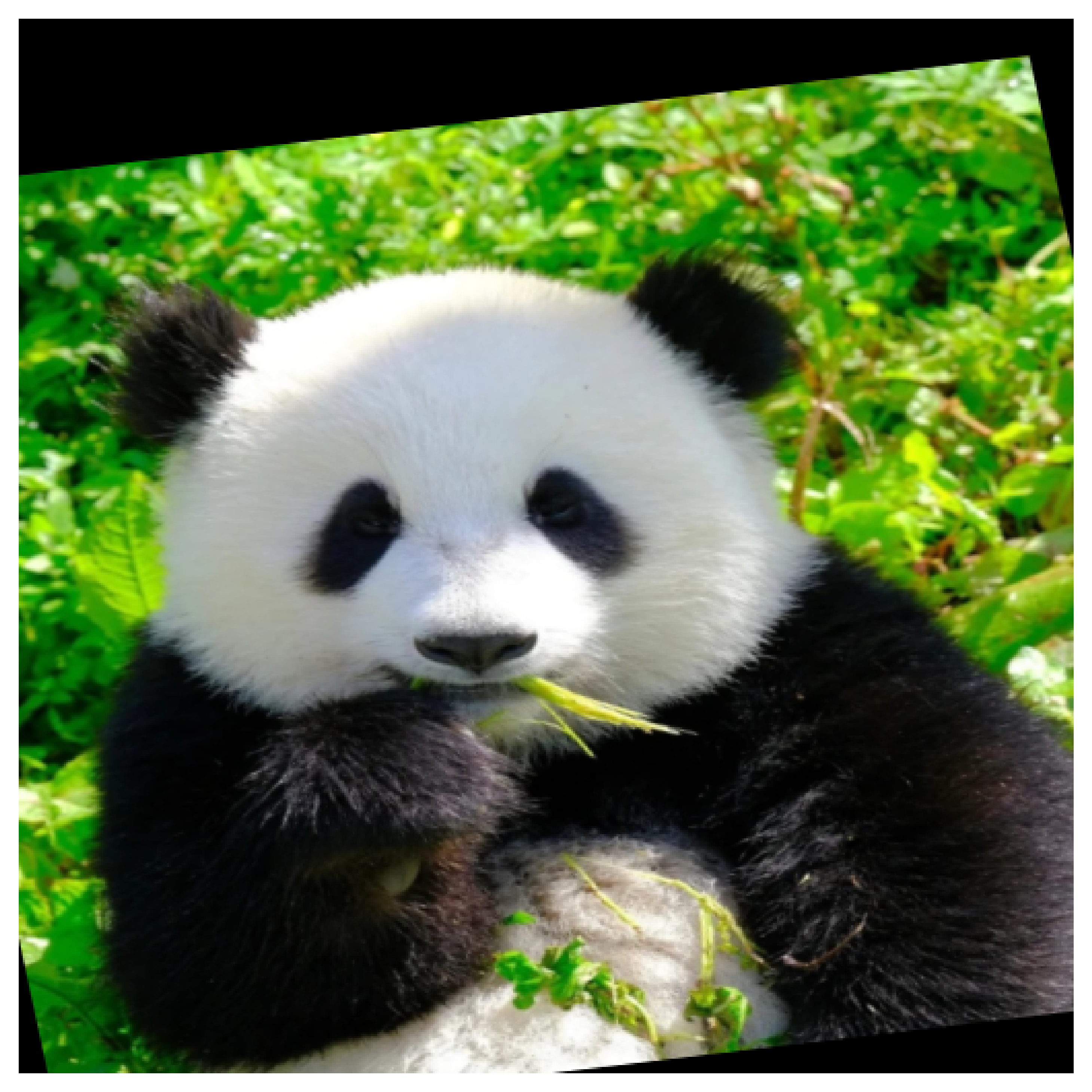}
    \end{subfigure} 
    \begin{subfigure}[b]{0.15\columnwidth}
        \centering
        \includegraphics[width=\linewidth]{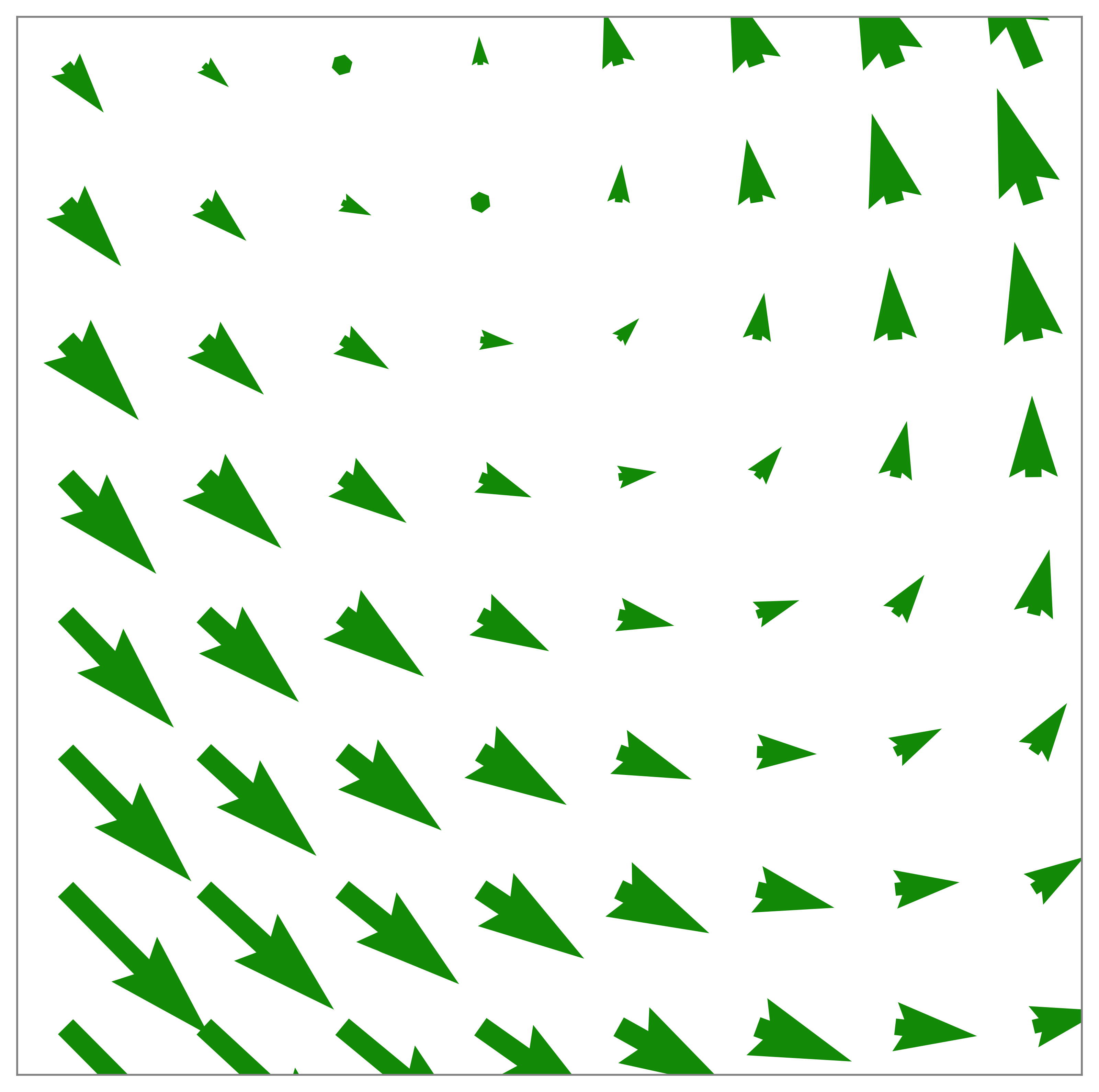}
    \end{subfigure} 
    \vspace{-2mm}
  \caption{Ruled-based perturbations. Left to right: original image and coordinates, Gaussian noisy image and coordinates, and affine transformed image and coordinates. 
  } 
  \vspace{-4mm}
  \label{fig:noise}
\end{figure}

\vspace{0.05in}

\noindent\textbf{Gaussian Noise.}
It adds Gaussian noise as the perturbation 
to image pixels $w$, denoted as 
{
\begin{equation}
\phi_G(w, \sigma) = w + \alpha,\ \alpha\sim \CN(\mathbf{0}, \sigma^2 \BSI)
\label{eq:trans_gaussian}
\end{equation}
}
\noindent where $\alpha\sim \CN(\mathbf{0}, \sigma^2 \BSI)$ denotes Gaussian distribution with mean $0$ and variance $\sigma$. The process generates and inject the noise for each pixel, which adhere to $\CN(\mathbf{0}, \sigma^2 \BSI)$.




\vspace{0.05in}

\noindent\textbf{Affine Transformation.}
It relocates image coordinates $v$ to $v'$ via a combination of linear transformations (such as scaling, rotation, shearing, and translation). This can be mathematically represented as $\theta_F( v, \BSBETA)$ as follows
{
\begin{equation}
\theta_F(v, \BSBETA) 
= \left(\begin{array}{c}\beta_1 i+\beta_2 j+\beta_3\\ \beta_4 i+\beta_5 j+\beta_6\end{array}\right),\ \BSBETA\sim \CN(\mathbf{0}, \sigma^2 \BSI_6)
\label{eq:trans_affine}
\end{equation}
}
where {
$\CN(\mathbf{0}, \sigma^2 \BSI_6)$ denotes that each component in $\BSBETA$ is independently distributed according to a Gaussian distribution with a mean of $0$ and a variance of $\sigma$. 


We consider modifications within either the pixel or coordinate space, as each of them represents a typical adversarial perturbation on watermarking in practice\cite{jiang2023evading,an2024benchmarking}.



\subsubsection{W-CR Framework}
\label{sec:W-CR-robust}

\textbf{\Rmnum{1}. Overview and Certification Goal.}
By framing watermark authentication as a binary classification model, we can apply the certified robustness framework from image classification. This involves transforming perturbations to watermarked images into pixel and coordinate spaces (see Section~\ref{sec:perturbation}) to ensure certified robustness for various operations in each space (see Section~\ref{sec:distort_framework}).

Since operations that perturbed watermarked images affect either pixel space or coordinate space, any adversary is essentially adjusting pixel transformation $u$ or coordinate transformation $r$.  Our goal is to ensure the watermark authentication model's robustness against attacks defined by $u$ or $r$. Specifically, for pixel or coordinate transformation, we aim to identify a set of pixel parameters $S_{adv}^u \subseteq \CU$ or coordinate parameters $S_{adv}^r \subseteq \CR$, such that the model $h$'s prediction remains consistent for any $u$ or $r$:
\par

\vspace{-4mm}

{
\begin{align}
    h(w\oplus v, t, \tau)&=h(\phi(w,u), t, \tau) &\forall u\in S_{adv}^u \\
    h(w\oplus v, t, \tau)&=h(f_I(w,\theta(v, r)), t, \tau) &\forall r\in S_{adv}^r
\end{align}
}


\par
\noindent\textbf{\Rmnum{2}. Certified Robustness.}
\label{sec:distort_framework}
By leveraging pixel or coordinate transformations, we construct a new type of transformation smoothed watermark authentication model $g$ from an arbitrary base watermark authentication model $h$. Specifically, this transformation-smoothed model $g$ is designed to predict the class with the highest probability returned by $h$ when the input for watermark authentication, i.e., $w \oplus v$, is perturbed via the pixel or coordinate transformation. The definitions of pixel and coordinate transformation smoothed watermark authentication models are shown below.


\begin{definition}[Pixel Transformation Smoothed Watermark Authentication]
\label{def:smooth_watermark_w}
Let $\phi\!:\! \CW\!\times\! \CU\!\to\! \CW$ be a pixel transformation, and let $h\!:\! (\CW\!\oplus\!\CV)\!\times\!\CT\!\times\!\BR \!\to\! \{0,1\}$ be an arbitrary base watermark authentication model. Taking random variables $\epsilon\!\sim\! \BP_\epsilon$ from $\CU$, we define pixel transformation smoothed watermark authentication model $g_\phi\!:\!(\CW\oplus\CV)\times\CT\times\BR \!\to\! \{0,1\}$ as
{
\begin{equation}
\begin{aligned}
    g_\phi(w\oplus v, t, \tau) 
    =\argmax_{y\in \{0,1\}}\BP(h(\phi(w, \epsilon), t, \tau))
\label{eq:smooth_watermark_w}
\end{aligned}
\end{equation}
}
\end{definition}
\begin{definition}[Coordinate Transformation Smoothed Watermark Authentication]
\label{def:smooth_watermark_v}
Let $\theta: \CV\times \CR\to \CV$ be a coordinate transformation, $f_I: \CW\times \CV\to \CW$ be an interpolation function, and let $h: (\CW\oplus\CV)\times\CT\times\BR \to \{0,1\}$ be an arbitrary base watermark authentication model. Taking random variables $\rho\sim \BP_\rho$ from $\CR$, we define coordinate transformation smoothed watermark authentication model $g_\theta: (\CW\oplus\CV)\times\CT\times\BR\to \{0,1\}$ as
{
\begin{equation}
    g_\theta(w\oplus v, t, \tau)=\argmax_{y\in \{0,1\}}\BP(h(f_I(w, \theta(v, \rho)), t, \tau))
\label{eq:smooth_watermark_v}
\end{equation}
}
\end{definition}

To ensure certified robustness for the transformation-smoothed watermark authentication model against the two perturbations, we refer to methods proven for the image classification~\cite{cohen2019certified, perez20223deformrs, alfarra2022deformrs, li2021tss}. 
We adopt common types of noise, such as Gaussian and uniform distributions, to formulate pixel transformations $\phi$ and coordinate transformations $\theta$. We then describe the certification theorems against these perturbations.

\vspace{0.05in}
\noindent {\bf Certified Robustness to  Gaussian Noise.}
 Gaussian noise only changes the image pixel values, without affecting the pixel coordinates. We invoke Theorem 1 (binary case) in \cite{cohen2019certified} and adapt it to our pixel transformation smoothed watermark authentication, as detailed below:

\begin{theorem}
Let $\phi_G: \CW\times\CU \to \CW$ be the pixel transformation based on Gaussian noise $\epsilon\sim\CN(\mathbf{0}, \sigma^2 \BSI)$. Let $g_{\phi_G}$ be the smoothed watermark authentication model from a base watermark authentication model $h$ as in Eq.~\ref{eq:smooth_watermark_w}, and suppose $\underline{p_A}\in (\frac{1}{2}, 1]$ satisfies $\BP(h(\phi(w,\epsilon), t, \tau)) \leq \underline{p_A}$. Then $g_{\phi_G}(\phi_G(w,\delta_G)\oplus v,t,\tau) = c_A$ for all $\|\delta_G\|_2 < R_G$, where
{
\begin{equation}
     R_G = \sigma \Phi^{-1}(\underline{p_A})
\label{eq:gaussian_r}
\end{equation}
}
\label{thm:gaussian}
\end{theorem}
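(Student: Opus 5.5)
The plan is to reduce the statement to Theorem 1 (binary case) of Cohen et al.~\cite{cohen2019certified}. We fix $t$, $\tau$ and the coordinates $v$, and define the binary base classifier $\tilde h(w) := h(w\oplus v, t, \tau)\in\{0,1\}$ on $\CW\subseteq\BR^d$. Since $\phi_G(w,\epsilon)=w+\epsilon$, the smoothed model of Definition~\ref{def:smooth_watermark_w} is exactly the Gaussian-smoothed classifier $\tilde g(w)=\argmax_{y}\BP(\tilde h(w+\epsilon)=y)$. The perturbed input $\phi_G(w,\delta_G)=w+\delta_G$ is an additive $\ell_2$ perturbation. The hypothesis should be read as $\BP(h(\phi_G(w,\epsilon),t,\tau)=c_A)\ge \underline{p_A}>\tfrac12$; the ``$\leq$'' in the statement is a typo, since a lower bound is what is needed.

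The key steps, in order, are as follows.
\begin{enumerate}
\item Set $A=\{z:\tilde h(z)=c_A\}$ and $p(w')=\BP(w'+\epsilon\in A)$. Then $\tilde g(w')=c_A$ iff $p(w')>\tfrac12$, because the classes are binary.
\item Apply the Neyman--Pearson lemma to the pair $X=w+\epsilon$ and $Y=w+\delta_G+\epsilon$. Among all sets $A$ with $\BP(X\in A)\ge\underline{p_A}$, the quantity $\BP(Y\in A)$ is minimized by the half-space $\{z:\delta_G^\top(z-w)\le \sigma\|\delta_G\|_2\Phi^{-1}(\underline{p_A})\}$.
\item Evaluate the probability on that half-space, which gives
\[
\BP(Y\in A)\ \ge\ \Phi\!\left(\Phi^{-1}(\underline{p_A})-\frac{\|\delta_G\|_2}{\sigma}\right).
\]
\item This lower bound exceeds $\tfrac12$ exactly when $\|\delta_G\|_2<\sigma\Phi^{-1}(\underline{p_A})=R_G$. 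That gives $g_{\phi_G}(\phi_G(w,\delta_G)\oplus v,t,\tau)=c_A$.
\end{enumerate}

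The only genuinely nontrivial step is step 2, the Neyman--Pearson worst-case argument. We do not redo it; we invoke it directly from \cite{cohen2019certified}. The binary case there gives $R=\sigma\Phi^{-1}(\underline{p_A})$, which is the same as their general radius $\tfrac\sigma2(\Phi^{-1}(\underline{p_A})-\Phi^{-1}(\overline{p_B}))$ with $\overline{p_B}=1-\underline{p_A}$.

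The remaining care is mostly bookkeeping. We must check that $h$ being a composition of $D$ and $f_V$ does not matter, and it does not, since the cited theorem allows an arbitrary measurable base classifier. We must also check that fixing $v,t,\tau$ makes the reduction exact, and that the pixel perturbation acts additively on $w$ alone without touching $v$.
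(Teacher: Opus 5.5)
Your proposal is correct and takes the same route as the paper, which offers no independent argument and simply invokes the binary case of Theorem~1 of \cite{cohen2019certified}, after viewing $g_{\phi_G}$ (for fixed $v,t,\tau$) as a Gaussian-smoothed binary classifier in $w$; your Neyman--Pearson sketch just spells out what that citation contains. You are also right to read the hypothesis as the lower bound $\BP(h(\phi_G(w,\epsilon),t,\tau)=c_A)\ge\underline{p_A}$, since the ``$\leq$'' in the statement is a typo.
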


\vspace{-6mm}

\par
Per Eq.~\ref{eq:trans_gaussian}, $\phi_G$ indicates the addition of noise to the pixels. Thus $\delta_g$ represents the distance at the pixel level between the original and perturbed images.
Theorem~\ref{thm:gaussian} states that $g_{\phi_G}$ can defend against adding Gaussian noise perturbations as long as the addition noise $\|\delta_G\|_2<R_G$ in Eq.~\ref{eq:gaussian_r}. We observe that the certified radius $R_G$ is larger when the noise level $\sigma$ is higher and/ or $\underline{p_A}$ is larger.



\vspace{0.05in}
\noindent {\bf Certified Robustness to Affine Transformation.}
Affine transformation only alters the image coordinate with the parameter owning six components (i.e., $\BSBETA=[\beta_1, \cdots, \beta_6]$) applying to the coordinate. 
We invoke Theorem 1 in \cite{alfarra2022deformrs} and adapt it to our (binary) coordinate smoothed watermark authentication, as detailed below:
\begin{theorem}
Let $\theta_F: \CV\times\CR \to \CV$ be the coordinate transformation based on Gaussian noise $\rho\sim\CN(\mathbf{0}, \sigma^2 \BSI_6)$. Let $g_{\theta_F}$ be the smoothed watermark authentication model from a base model $h$ as in Eq.~\ref{eq:smooth_watermark_v}, and suppose $\underline{p_A}\in (\frac{1}{2}, 1]$ satisfies $\BP(h(f_I(\theta_F(v, \rho)), t, \tau)) \leq \underline{p_A}$. Then $g_{\theta_F}(w\oplus\theta_F(v,\delta_F), t, \tau) = c_A$ for all $\|\delta_F\|_2 < R_F$, where
{
\begin{equation}
     R_F = \sigma \Phi^{-1}(\underline{p_A})
\label{eq:affine_r}
\end{equation}
}
\label{thm:affine}
\end{theorem}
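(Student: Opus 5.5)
The plan is to reduce Theorem~\ref{thm:affine} to the Gaussian case of Theorem~\ref{thm:gaussian}, i.e.\ to Cohen et al.'s binary certificate, in parameter space rather than pixel space. This follows Theorem~1 of DeformRS~\cite{alfarra2022deformrs}.

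\textbf{Step 1: pass to a base model on the six affine parameters.} Fix $w$, $t$ and $\tau$, and define the function $\tilde h:\BR^6\to\{0,1\}$ by
\[
\tilde h(\BSBETA)=h\bigl(f_I(w,\theta_F(v,\BSBETA)),t,\tau\bigr).
\]
Here $\BSBETA$ is understood as an offset added to the identity map, so that $\BSBETA=\mathbf{0}$ returns $v$. This convention is implicit in the statement and should be made explicit. With it, $g_{\theta_F}$ evaluated at $w\oplus v$ is exactly the Gaussian-smoothed classifier $\tilde g(\mathbf{0})=\argmax_y\BP(\tilde h(\rho)=y)$ with $\rho\sim\CN(\mathbf{0},\sigma^2\BSI_6)$.

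\textbf{Step 2: show that composing affine maps acts as a translation in parameter space.} I need
\[
\theta_F(\theta_F(v,\delta_F),\rho)=\theta_F(v,\delta_F+\rho)
\]
for the additive parametrization. Once this holds, smoothing the perturbed image $w\oplus\theta_F(v,\delta_F)$ amounts to evaluating $\tilde g(\delta_F)$, because $\BP(\tilde h(\delta_F+\rho)=y)$ is just the Gaussian smoothing of $\tilde h$ centred at $\delta_F$. This step is the main obstacle. Composition of general affine maps is not additive in $\BSBETA$: the linear parts multiply. The clean way through is the one DeformRS takes, which is to model the perturbation as acting on the original coordinates $v$, with the interpolation $f_I$ always applied to the original pixels $w$. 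The composition of the adversarial and random deformation fields is then additive at the level of displacement fields. I would state this modelling assumption explicitly, namely that the adversary's and the smoothing noise's parameters add. A literal composition of affine maps would instead need a reparametrized distance.

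\textbf{Step 3: apply the Neyman--Pearson argument.} After Steps 1 and 2 the problem is exactly Cohen et al.'s binary setting in $\BR^6$. The hypothesis, correctly read as $\BP(\tilde h(\rho)=c_A)\ge\underline{p_A}>\tfrac12$, together with the Neyman--Pearson lemma, gives the following. The worst-case base model is a half-space, and the bound $\BP(\tilde h(\delta_F+\rho)=c_A)\ge\Phi\bigl(\Phi^{-1}(\underline{p_A})-\|\delta_F\|_2/\sigma\bigr)$ holds. This exceeds $\tfrac12$ whenever $\|\delta_F\|_2<\sigma\Phi^{-1}(\underline{p_A})=R_F$, so $g_{\theta_F}$ still outputs $c_A$, which proves the theorem. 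The inequality in the hypothesis as written should be read as ``$\ge$''. I will also note that $f_I$ needs no regularity, since it is absorbed into the arbitrary base model $\tilde h$.
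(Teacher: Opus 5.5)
Your proposal is correct and follows the paper's route. The paper gives no independent argument: it invokes Theorem~1 of DeformRS~\cite{alfarra2022deformrs}, which is Cohen et al.'s Neyman--Pearson certificate applied in the six-dimensional affine parameter space, and specializes it to the binary case; your Steps~1--3 do exactly this. The paper's citation leaves implicit the conventions you state: $\BSBETA$ is an offset from the identity, adversarial and smoothing parameters add as displacement fields on the original grid, and the hypothesis should read $\BP(h(\cdot)=c_A)\ge\underline{p_A}$.
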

\vspace{-6mm}

As indicated by Eq.~\ref{eq:trans_affine}, $\theta_F$ represents the linear transformation of $x$-axis and $y$-axis, i.e., $v=\langle i, j \rangle$, by the parameter list $\BSBETA$. Then $\delta_F$ denotes the distance of the coordinate transformation coefficients, denoted as $\|\delta_F\|_2=\sqrt{\sum_{\kappa} \beta_\kappa^2}$~\cite{alfarra2022deformrs}.
Theorem~\ref{thm:affine} states that $g_{\theta_F}$ can defend against affine perturbations as long as the condition about $\delta_F$ in Eq.~\ref{eq:affine_r} satisfies. We observe that the certified radius $R_F$ is larger when the uniform noise level $\sigma$ is higher and/ or $\underline{p_A}$ is larger.


\subsubsection{Practical Algorithms}

\vspace{0.05in}
\noindent\textbf{Training.} 
Note that, though we combine 
decoding and verification into a single function $h$, 
the verification function is parameter-free. During the training phase, we update only the decoder $D$ (and encoder $E$).
The training process is akin to classical robust watermarking training~\cite{stegastamp,zhu2018hidden}. We introduce perturbations solely within the perturbation layer. Receiving a watermarked image output from the encoder, we obtain the corresponding pixel value $w$ and pixel coordinate $v$. Then we perturb images through transformations applied to either the pixel transformation $\phi(w, u)$ or the coordinate transformation $\theta(v, r)$. Finally, we feed these new images into the decoder to train and update the watermark's encoder and decoder. 

\vspace{0.05in}

\noindent\textbf{Certification.}
The overall process of the certification is similar to the classical randomized smoothing in \cite{cohen2019certified}. As described in Algorithm~\ref{alg:certify}, we first obtain the watermarked image, which comprises image pixels $w$ and coordinate $v$. Then we utilize pixel transformation $\phi_P$ or coordinate transformation $\theta_P$ on the image and draw $N_0$ samples to formulate an initial guess for $y_A$. The input parameters for $h$, i.e., $w, v, t, \tau$, are as specified in Eq.~\ref{eq:auth_def}.
We then employ a larger number of sample sets to calculate an estimate of $\underline{p_A}$. Finally, we certify robustness on $N$ samples and output the robust prediction.
\subsection{Identity Leakage Exacerbation} \label{sec:id_leakage_robust}

\noindent\textbf{Clean v.s. Robust:} 
Both empirically and certified robust watermarking methods
are more susceptible to identity leakage risks than clean models. 
%
As illustrated in Figure~\ref{fig:id_link_robust}, the clustering effect on residual images under robust models (i.e., the effectiveness of identity linking attacks) outperforms that of clean models in Figure~\ref{fig:id_link}, characterized by the reduced intra-cluster distances and increased inter-cluster distances. Similarly, as Figure~\ref{fig:id_forgery_robust} shows, the capacity of residual images under robust models to forge watermarked images also exceeds that of clean models. The decoded accuracy of the forged bit accuracy is significantly higher than that of the clean model in Figure~\ref{fig:id_forgery}. Moreover, the Pearson correlation coefficients for empirical and certified robust models are $0.96$ and $0.97$, which is higher than that for clean models. 
This indicates a stronger correlation between the residual image distance and the secret watermark distance, which results in a higher success rate for watermark extraction attacks.

\begin{figure}[!t]
\centering
  \begin{subfigure}[b]{0.24\linewidth} 
    \includegraphics[width=\linewidth]{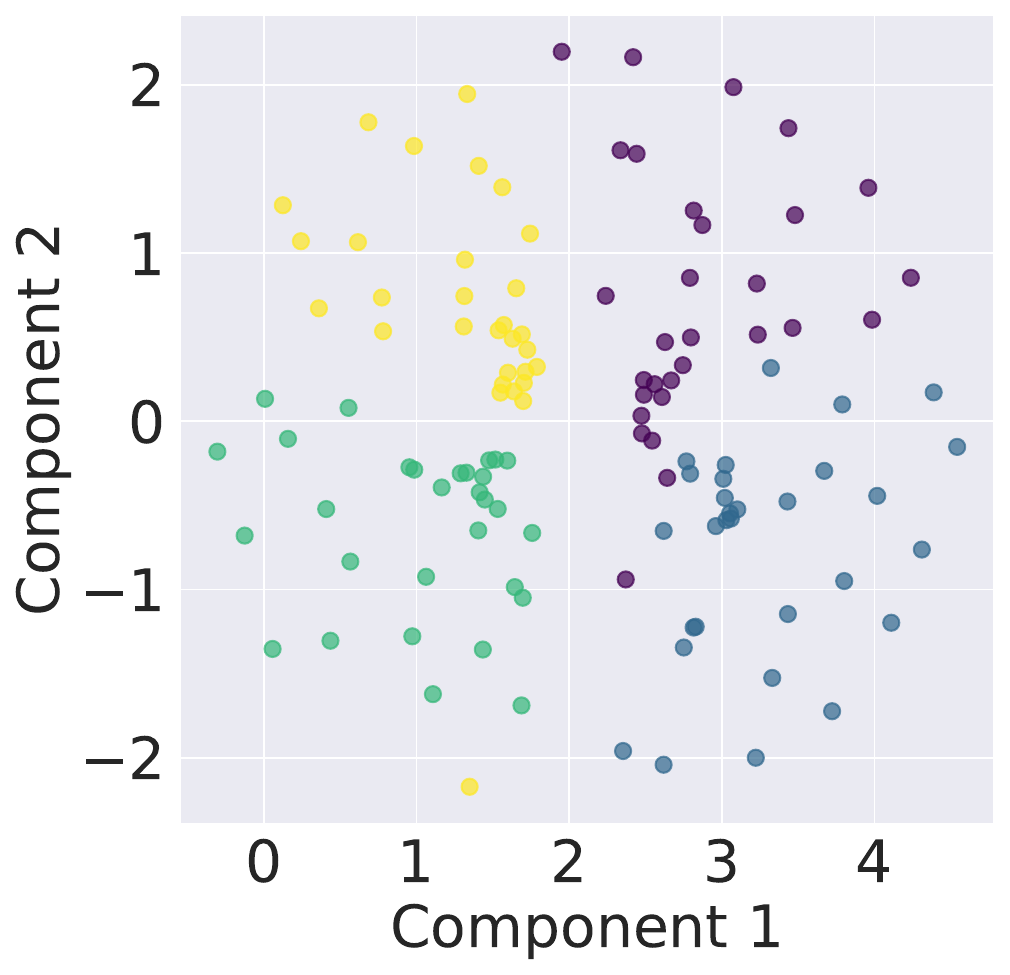}
    \caption{COCO}
    \label{fig:emprical_coco}
  \end{subfigure}
  \begin{subfigure}[b]{0.24\linewidth}
    \includegraphics[width=\linewidth]{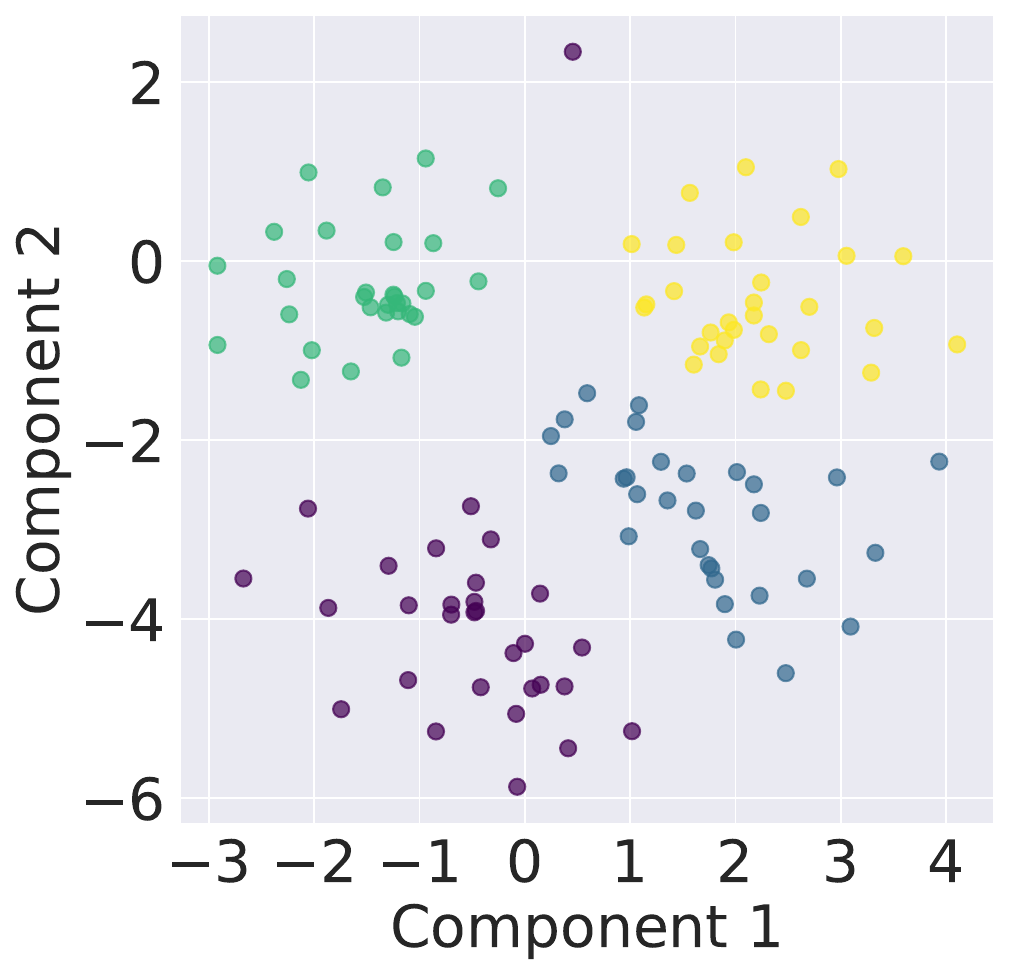}
    \caption{CelebA}
    \label{fig:emprical_celeb}
  \end{subfigure}
    \begin{subfigure}[b]{0.24\linewidth} 
    \includegraphics[width=\linewidth]{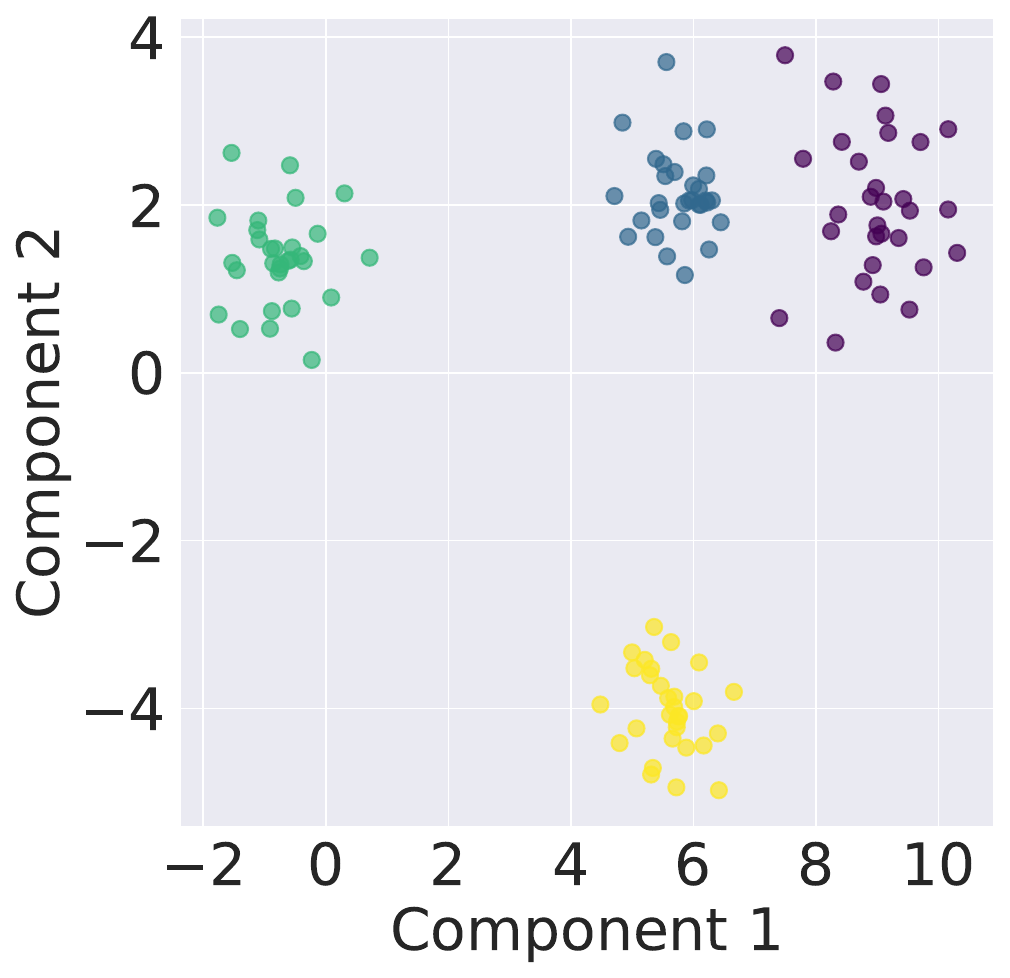}
    \caption{COCO}
    \label{fig:distortrs_coco}
  \end{subfigure}
  \begin{subfigure}[b]{0.24\linewidth}
    \includegraphics[width=\linewidth]{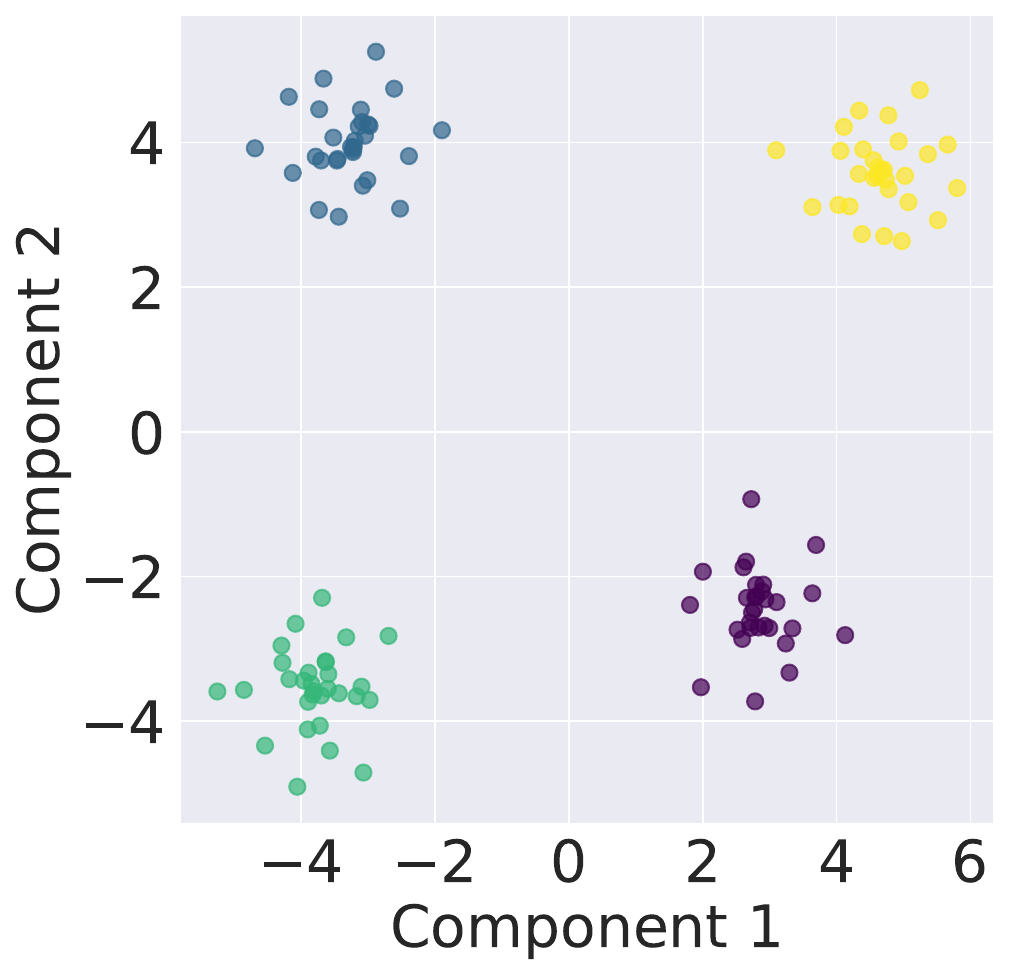}
    \caption{CelebA}
    \label{fig:distortrs_celeb}
  \end{subfigure}
\vspace{-3mm}
\caption{The cluster results for residual images on HiDDeN under empirical (\ref{fig:emprical_coco} and \ref{fig:emprical_celeb}) and certified (\ref{fig:distortrs_coco} and \ref{fig:distortrs_celeb}) robust models. The intra-cluster distances 
are as follows: (a)-$[0.77,0.79,0.80,0.72]$, (b)-$[0.85,1.06,1.00,0.89]$, (c)-$[0.66,0.68,0.60,0.86]$, (d)-$[0.64,0.62,0.58,0.62]$. 
}
\vspace{-4mm}
\label{fig:id_link_robust}
\end{figure}


These observations indicate that in order to mitigate the impact of adversarial perturbations on watermarked images, the encoder network embeds a greater number of secret watermarks into the images, leading to residual images that contain a greater quantity of secret watermarks. Furthermore, incorporating noise during the robust training process enhances the decoder network's capacity to decode perturbed images correctly. This inadvertently enhances the decoder's ability to extract messages from residual images.

\vspace{0.05in}
\noindent\textbf{Empirically Robust (W-ER) v.s. Certified Robust (W-CR):} 
The degree of identity leakage differs between W-ER and W-CR methods due to variations in training noise magnitude and combination, making it inappropriate to rank their robustness uniformly. 
Figure~\ref{fig:id_link_robust} shows that the W-CR leads to greater identity leakage than the W-ER (HiDDeN$^+$). Conversely, Figure~\ref{fig:id_forgery_robust} shows that the W-CR yields less identity leakage than the W-ER (StegaStamp$^+$).

\begin{figure}[t]
  \centering
  \begin{subfigure}[b]{0.24\columnwidth}
        \centering
        \includegraphics[width=\textwidth]{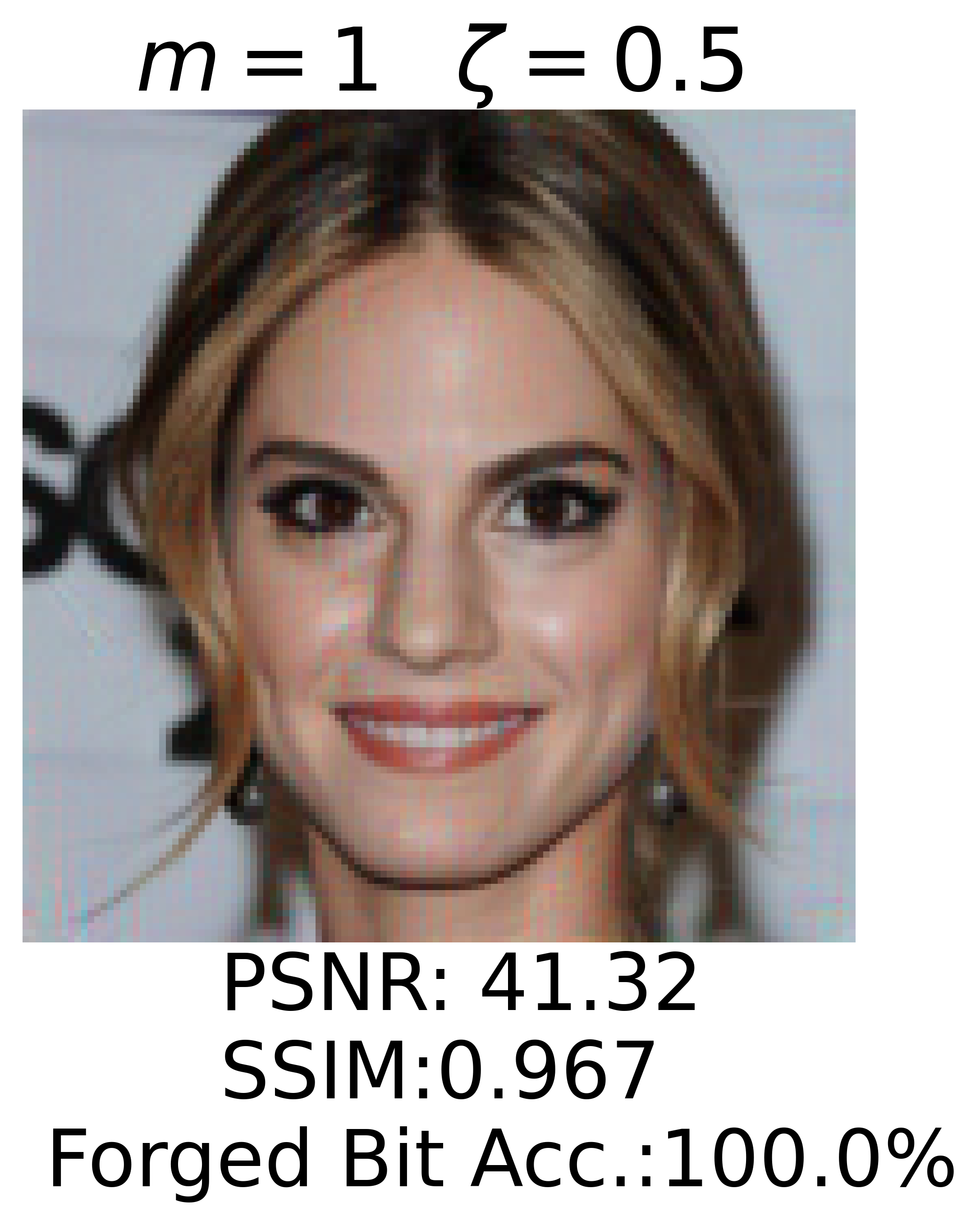}
    \end{subfigure}
    \begin{subfigure}[b]{0.24\columnwidth}
        \centering
        \includegraphics[width=\linewidth]{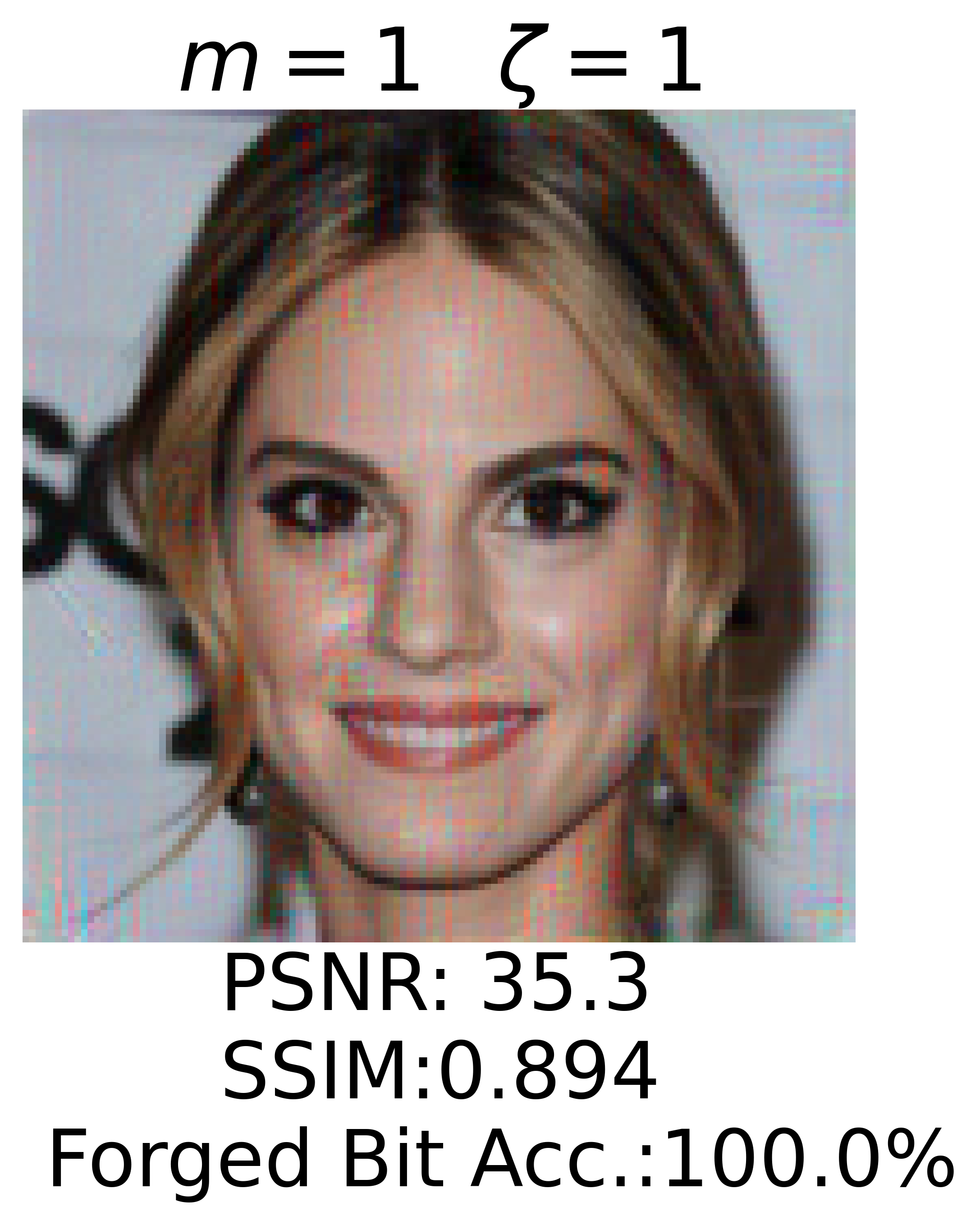}
    \end{subfigure}
    \begin{subfigure}[b]{0.24\columnwidth}
        \centering
        \includegraphics[width=\textwidth]{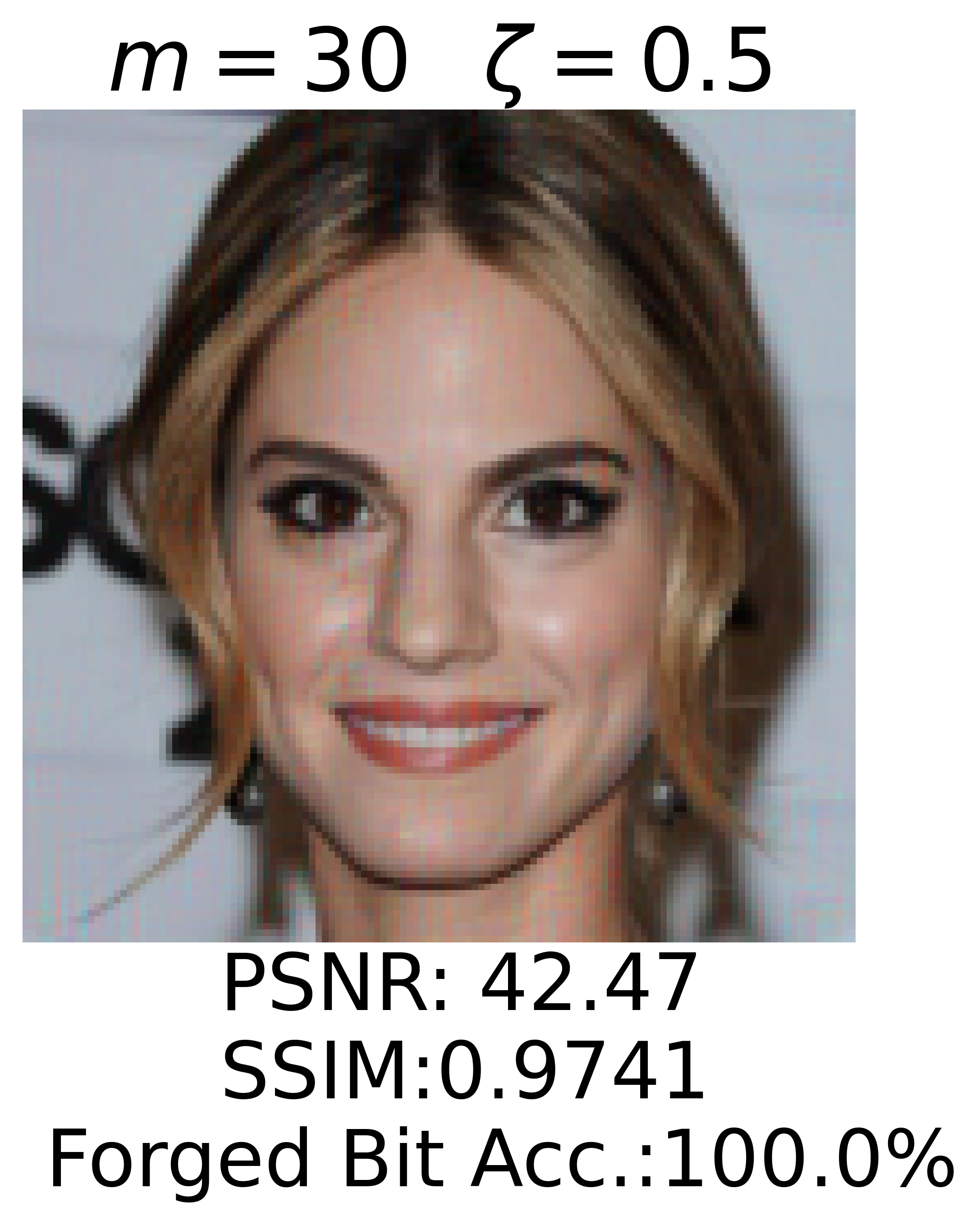}
    \end{subfigure}
    \begin{subfigure}[b]{0.24\columnwidth}
        \centering
        \includegraphics[width=\linewidth]{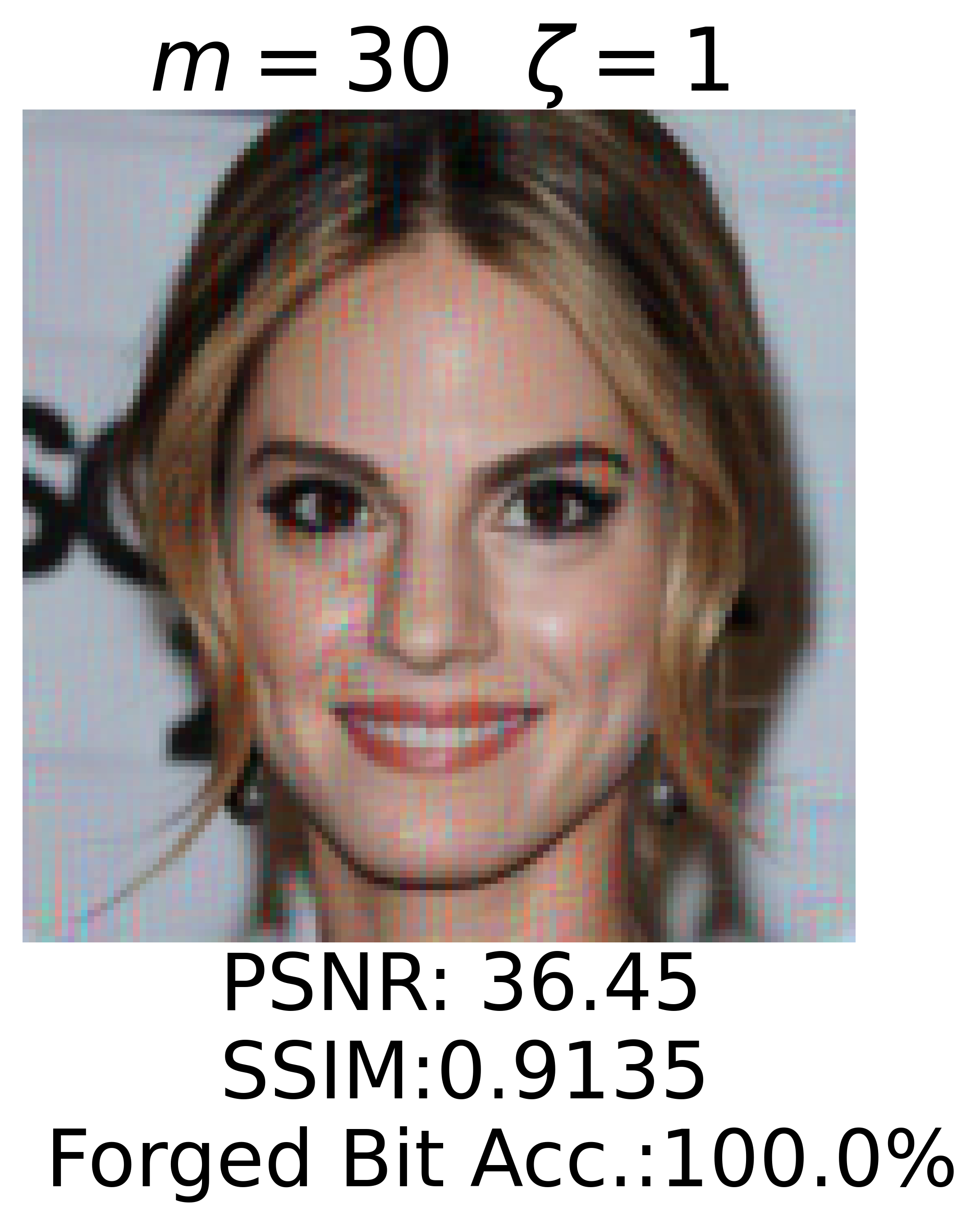}
    \end{subfigure} 

      \begin{subfigure}[b]{0.24\columnwidth}
        \centering
        \includegraphics[width=\textwidth]{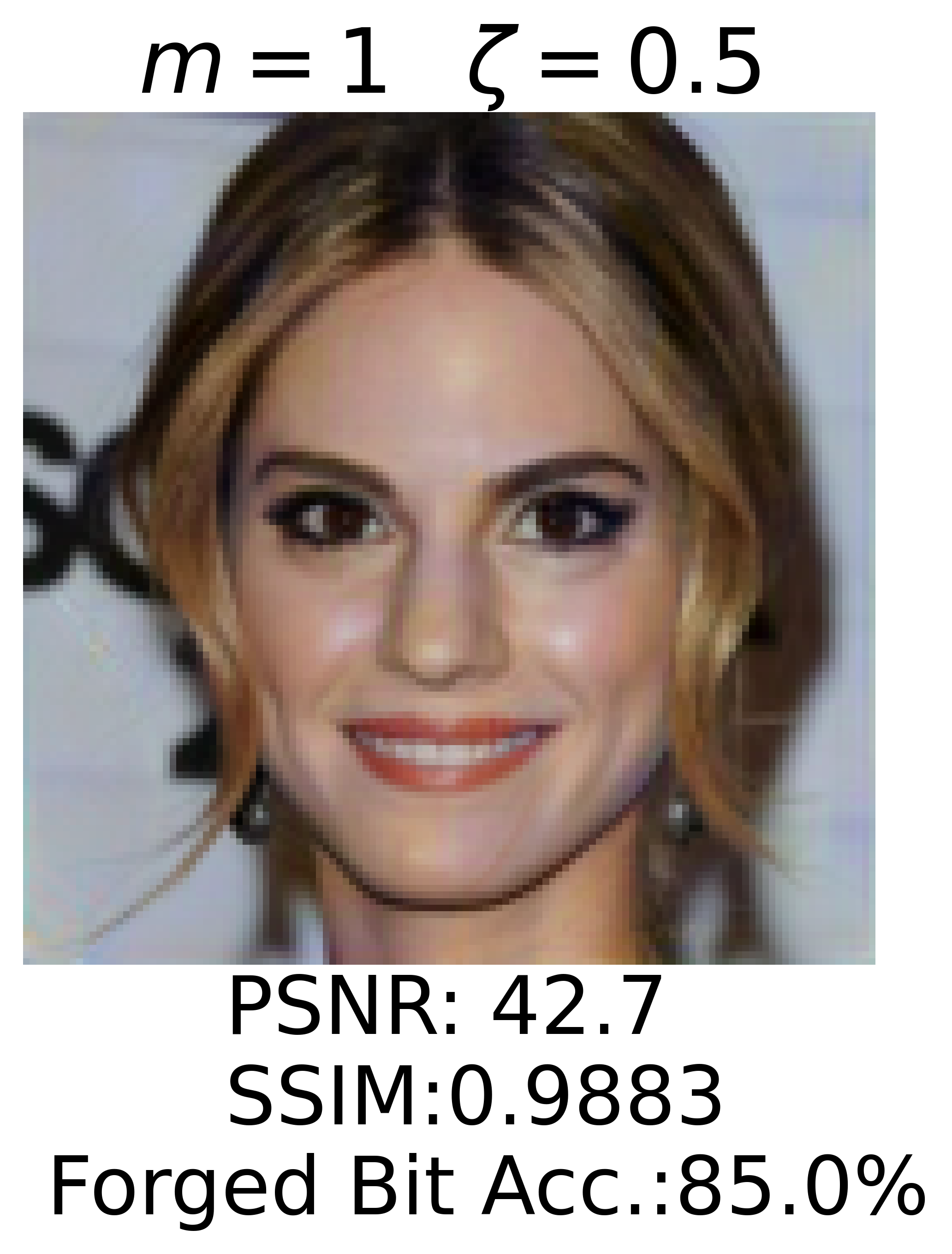}
    \end{subfigure}
    \begin{subfigure}[b]{0.24\columnwidth}
        \centering
        \includegraphics[width=\linewidth]{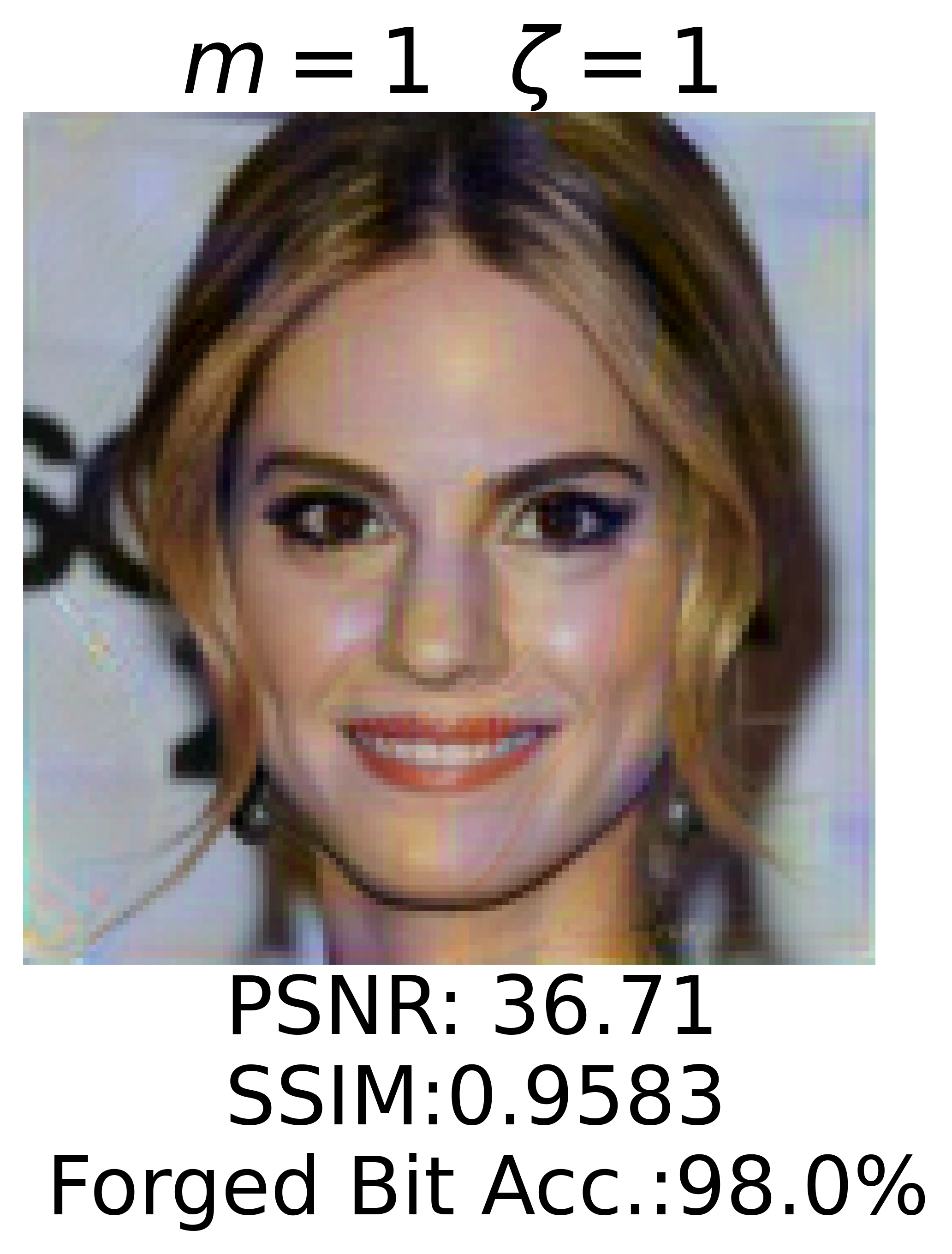}
    \end{subfigure}
    \begin{subfigure}[b]{0.24\columnwidth}
        \centering
        \includegraphics[width=\textwidth]{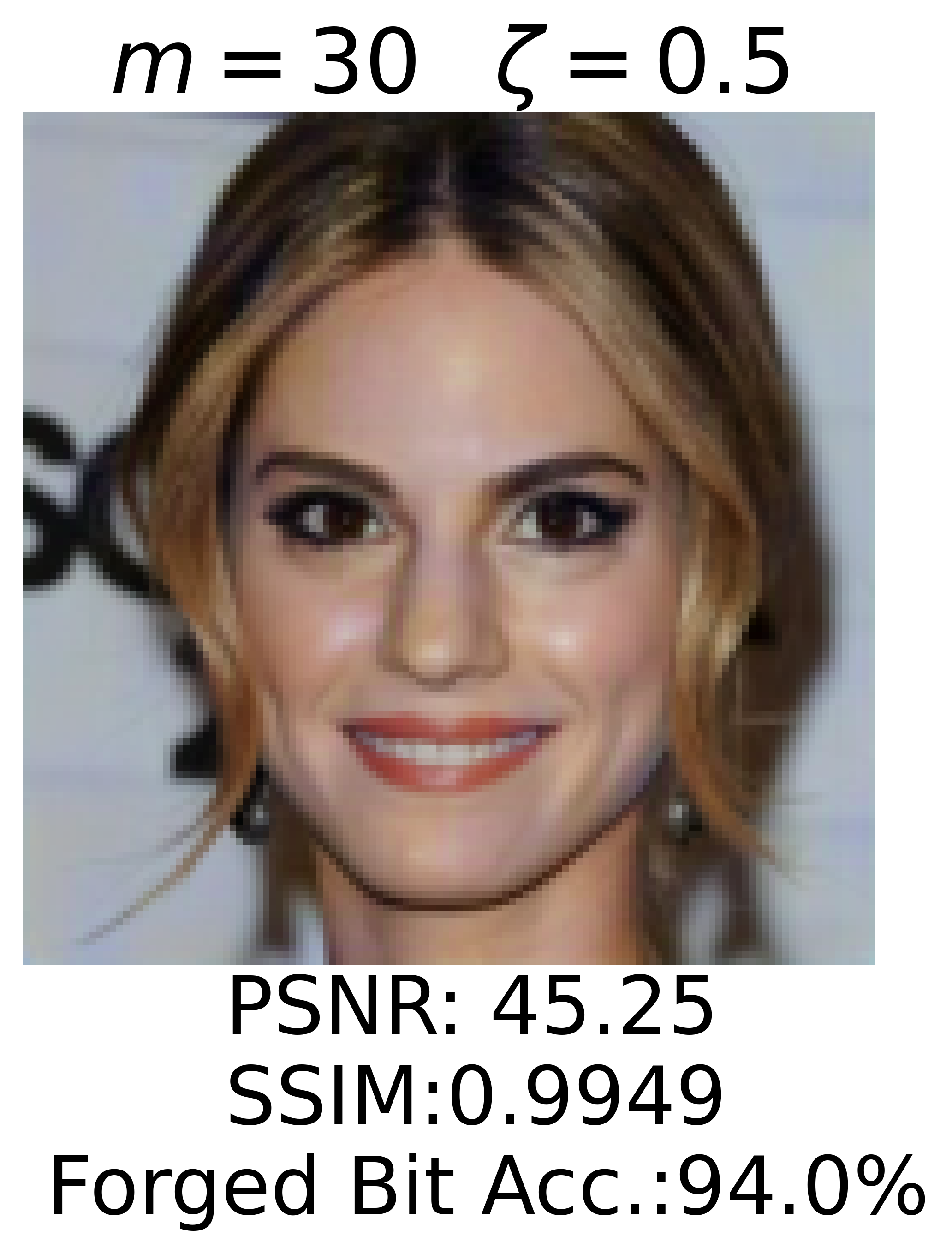}
    \end{subfigure}
    \begin{subfigure}[b]{0.24\columnwidth}
        \centering
        \includegraphics[width=\linewidth]{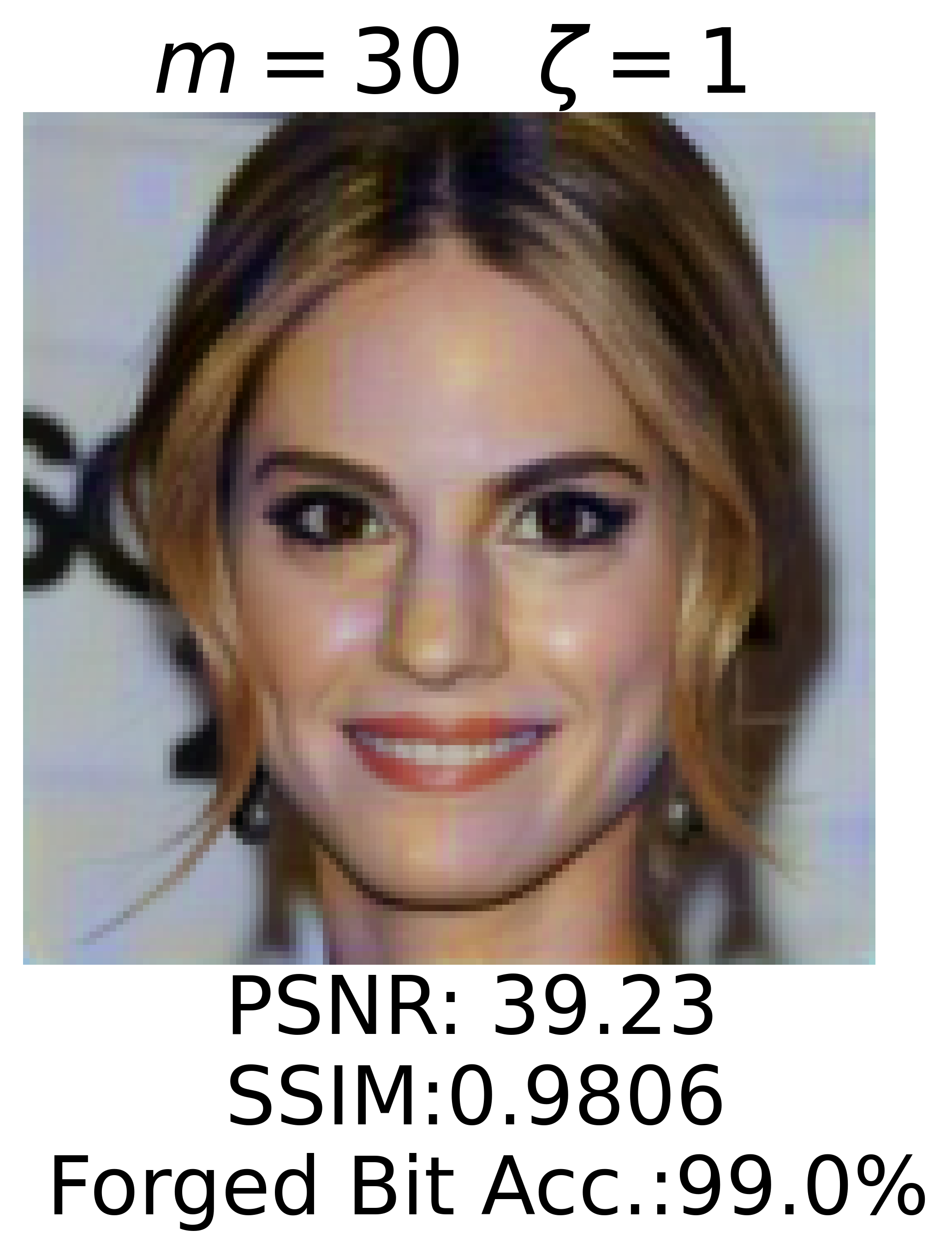}
    \end{subfigure} 
    \vspace{-3mm}
  \caption{Forging watermarked images on StegaStamp under empirical (top) and certified (bottom) robust models. 
}\vspace{-4mm}
  \label{fig:id_forgery_robust}
\end{figure}






\section{Mitigating Identity Leakage} \label{sec:ril}


During the embedding, the secret watermark $t$ is incorporated into the watermarked image $w$ via the encoder network. Ideally, per the Information Bottleneck (IB) principle, $w$ should retain all the information from both the secret watermark $ t $ and the clean image $ x $. However, identity extraction attack results (see Section~\ref{sec:leakage_t}) reveal that secret watermark $t$ can be deciphered by a simple operation: the residual difference between $w$ and $x$. \textit{This means that the watermark information $t$ inadvertently flows into the residual image $z$.}



\begin{figure}[!h]
  \centering
  \includegraphics[width=0.98\linewidth]{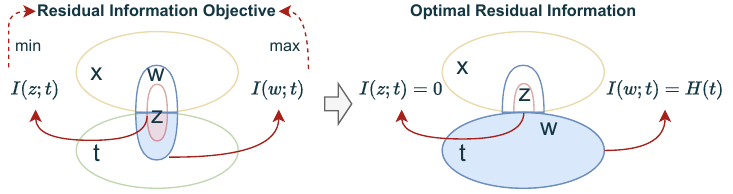}
  \vspace{-3mm}
  \caption{Information content of feature representations.} 
  \label{fig:vene}
   \vspace{-2mm}
\end{figure}

To address the issue, 
we introduce the residual information objective from the information theory perspective, as shown in Figure~\ref{fig:vene}. Particularly, our two tasks in mitigating identity leakage 
are: 1) ensuring that $ w $ retains all of $ t $ (sufficiency of $ w $ for $ t $), which involves maximizing the mutual information between $ w $ and $ t $; 2) ensuring that $ z $ does not contain any information about $ t $, which involves minimizing the mutual information between $ z $ and $ t $.
Note that the absence of \( t \) in \( z \) does not indicate that \( t \) is also absent in \( w \). This distinction arises from the fact that the transmission of information does not occur solely through subtraction.
By combining these two tasks, our residual information objective will be converted to: 
{
\begin{equation}
     \max I(w;t) - I(z;t),
     \label{eq:residual_objective}
\end{equation}
}

\par
\noindent where the mutual information $I(A;B)$ between two variables $ A $ and $ B $ is defined as: 
\vspace{-1mm}
{
\begin{equation}
I(A;B) = \int p(A, B) \log \frac{p(A, B)}{p(A) p(B)} d A d B . 
\end{equation}
}

\par
However, many studies have shown that mutual information cannot be directly calculated in high-dimensional space. Alternatively, it is 
approximated using neural estimators~\cite{poole2019variational,belghazi2018mutual,nguyen2010estimating}. 
To deal with the estimation of mutual information in Eq.~\ref{eq:residual_objective}, we introduce the following theorem:

\begin{theorem} 
Eq.~\ref{eq:residual_objective} can be estimated as follows:
{\small
\begin{equation}
\begin{aligned}
\max I(w;t) - I(z;t)  \Leftarrow \max f_{KL}[\mathbb{P}_w \| \mathbb{P}_z] - f_{KL}[\mathbb{P}_z \| \mathbb{P}_w],
\end{aligned}
\end{equation}
}

\par
\noindent where $\mathbb{P}_w=p(t|w), \mathbb{P}_z=p(t|z)$ denote the probability distributions, and $f_{KL}$ represents the KL divergence.
\label{theorem:mutual_info}
\end{theorem}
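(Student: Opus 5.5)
The argument follows the multi-view information bottleneck reasoning of \cite{tian2021farewell}, which the paper cites as its inspiration, and I will treat ``$\Leftarrow$'' as ``the right-hand side is a surrogate whose maximization drives the left-hand side up.'' The idea is to rewrite each mutual information term so that the only parts depending on the pair $(w,z)$ are expectations of the log-ratio $\log \frac{p(t|w)}{p(t|z)}$. Those expectations are exactly the two KL divergences that appear in Theorem~\ref{theorem:mutual_info}.

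\textbf{Step 1: write the objective as an expected log-ratio.}
From the definition of mutual information,
\[
I(w;t)=\mathbb{E}_{p(w,t)}\Bigl[\log \tfrac{p(t|w)}{p(t)}\Bigr], \qquad
I(z;t)=\mathbb{E}_{p(z,t)}\Bigl[\log \tfrac{p(t|z)}{p(t)}\Bigr].
\]
The key point is that $z=w-x$ is computed deterministically from $(x,t)$, just as $w=E(x,t)$ is. So $w$ and $z$ can be treated as two ``views'' generated by the same sample $(x,t)$, and both expectations can be taken under one joint distribution $p(x,t)$. The $\log p(t)$ terms then cancel, giving
\[
I(w;t)-I(z;t)=\mathbb{E}_{x,t}\Bigl[\log \tfrac{p(t|w)}{p(t|z)}\Bigr].
\]

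\textbf{Step 2: split the log-ratio into the two KL terms.}
Conditioned on the realized pair $(w,z)$, I average over $t$ in two ways:
\begin{itemize}
\item Under $t\sim p(t|w)$, the average of $\log \frac{p(t|w)}{p(t|z)}$ is $f_{KL}[\mathbb{P}_w\|\mathbb{P}_z]$.
\item Under $t\sim p(t|z)$, the average of the same quantity is $-f_{KL}[\mathbb{P}_z\|\mathbb{P}_w]$.
\end{itemize}
Because $w$ is trained to be sufficient for $t$, the posterior $\mathbb{P}_w$ should be close to the true posterior of $t$. In the symmetric variational form of \cite{tian2021farewell}, the $t$-average is written as a mixture of the two posteriors. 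That mixture yields $\tfrac12\bigl(f_{KL}[\mathbb{P}_w\|\mathbb{P}_z]-f_{KL}[\mathbb{P}_z\|\mathbb{P}_w]\bigr)$ plus a remainder controlled by how far the true posterior is from $\mathbb{P}_w$ and $\mathbb{P}_z$.

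\textbf{Step 3: bound the remainder and make it computable.}
I bound the remainder from below, either by nonnegativity of KL or by arguing it is constant with respect to the encoder once sufficiency holds. This gives
\[
I(w;t)-I(z;t)\ \ge\ c\bigl(f_{KL}[\mathbb{P}_w\|\mathbb{P}_z]-f_{KL}[\mathbb{P}_z\|\mathbb{P}_w]\bigr)-\text{const}
\]
for some constant $c>0$, which justifies the ``$\Leftarrow$.'' To turn this into the residual information loss, I replace the intractable posteriors $p(t|w)$ and $p(t|z)$ with variational ones given by the decoder's outputs $D(w)$ and $D(z)$, read as per-bit Bernoulli distributions. Substituting variational posteriors only adds further KL slack of fixed sign, in the same way as standard variational bounds (\cite{poole2019variational}), so the inequality direction is preserved.

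\textbf{Main obstacle.}
Step 2 is where I expect the difficulty. Both $w$ and $z$ are functions of $(x,t)$, so the averaging measure over $t$ is not uniquely determined, and the remainder term is not obviously of fixed sign. My first attempt would be to invoke sufficiency of $w$ for $t$ (the first task of Eq.~\ref{eq:residual_objective}) to identify $p(t|w)$ with the true posterior, which makes the remainder vanish. If that is too strong, I will settle for an approximate, surrogate-style statement, which matches the ``$\Leftarrow$'' (``is estimated by'') phrasing of the theorem rather than an exact equality.
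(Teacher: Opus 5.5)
Your route differs from the paper's. The paper inserts $p(t|z)$ into $I(w;t)$ and $p(t|w)$ into $I(z;t)$, obtaining $I(w;t)-I(z;t)=V_1-V_2+Q$ with $Q=\int p(t)\log\frac{p(t|z)}{p(t|w)}\,dt$. It then argues that once training makes $z$ uninformative, $p(t|z)\approx p(t)$, so $Q\approx f_{KL}[p(t)\|p(t|w)]\ge 0$ and $V_1-V_2$ is approximately a lower bound. You start instead from the exact identity $I(w;t)-I(z;t)=H(t|z)-H(t|w)=\mathbb{E}\bigl[\log\frac{p(t|w)}{p(t|z)}\bigr]$ under the joint law of $(x,t)$. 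You also rely on sufficiency of $w$ rather than on uninformativeness of $z$. This buys two things. Your Step 1 respects the coupling $z=w-x$, whereas the paper's factorization pulls $\log\frac{p(t|z)}{p(t)}$ out of the $w$-integral as though $z$ did not depend on $w$. And your assumption is not the very property the optimization is supposed to produce.

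Step 2 as written is muddled, though, and should be replaced by the direct argument. The averaging measure is not ambiguous: given $(w,z)$, $t$ is distributed as $p(t|w,z)$. Hence $I(w;t)-I(z;t)=\mathbb{E}_{w,z}\sum_t p(t|w,z)\log\frac{p(t|w)}{p(t|z)}$. Sufficiency in the form $I(t;z\mid w)=0$, i.e.\ $p(t|w,z)=p(t|w)$, then yields exactly
$I(w;t)-I(z;t)=\mathbb{E}\,f_{KL}[\mathbb{P}_w\|\mathbb{P}_z]\ge\mathbb{E}\{f_{KL}[\mathbb{P}_w\|\mathbb{P}_z]-f_{KL}[\mathbb{P}_z\|\mathbb{P}_w]\}$,
with $c=1$ and no constant, so the mixture is unnecessary.

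If you keep the mixture, note that sufficiency does not make the remainder vanish. It equals $\tfrac12\bigl(f_{KL}[\mathbb{P}_w\|\mathbb{P}_z]+f_{KL}[\mathbb{P}_z\|\mathbb{P}_w]\bigr)$, which is nonnegative (so $c=\tfrac12$ works) but depends on the encoder. Your ``constant with respect to the encoder'' alternative is therefore false.

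Your claim that substituting decoder posteriors preserves the inequality is also wrong. A variational posterior gives a lower bound on $I(z;t)$, which is the wrong direction for $-I(z;t)$, and a difference of two KLs has no fixed-sign slack. Since the theorem concerns the true posteriors, simply drop that step.

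Finally, a caveat shared with the paper. If $E(x,\cdot)$ is injective, $(w,z)$ determine $t$, so exact sufficiency forces $\mathbb{P}_w$ to be a point mass. Then $f_{KL}[\mathbb{P}_z\|\mathbb{P}_w]=\infty$ unless $\mathbb{P}_z$ is the same point mass. The bound is therefore informative only in the approximate regime, which is consistent with the theorem's ``can be estimated'' phrasing.
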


\vspace{-2mm}
 \par
\begin{proof}
    Detailed proof is deferred to Appendix~\ref{appendix:RIL_proof}.
\end{proof}

 With it, we define the \textit{Residual Information Loss} as:
 {\small
\begin{equation}
\begin{aligned}
     \CL_{RIL}=\min_{E} \mathbb{E}_{w \sim \mathbb{E}_E(\CW \mid \CT,\CX)}\{f_{KL}[\mathbb{P}_z \| \mathbb{P}_w] - f_{KL}[\mathbb{P}_w \| \mathbb{P}_z] \},
\end{aligned}
\label{eq:RIL_loss}
\end{equation}}

\par
\noindent where $E$ denotes the parameters of the encoder network, as shown in Figure~\ref{fig:residual_information_loss}. $\mathbb{P}_w$ and $\mathbb{P}_z$ are the probabilities of extracting identity information (via secret $t$) from $w$ and $z$, respectively. Both $\mathbb{P}_w$ and $\mathbb{P}_z$ can be parameterized via a decoder $D$, and the expectation is approximated via a batch of samples. 
Due to the range and asymmetry of KL divergence, the estimate (in Eq.~\ref{eq:residual_objective}) ranges from $[0, +\infty]$. Upon continuously optimizing for this estimate and updating the encoder $E$, the identity information in $w$ will be retained to the maximum extent, while that in $z$ will approach 0. In this way, we can obtain a sufficient watermarked image with minimized residual identity information, i.e., an optimal watermarked image.

\begin{figure}[!t]
  \centering
  \includegraphics[width=0.98\linewidth]{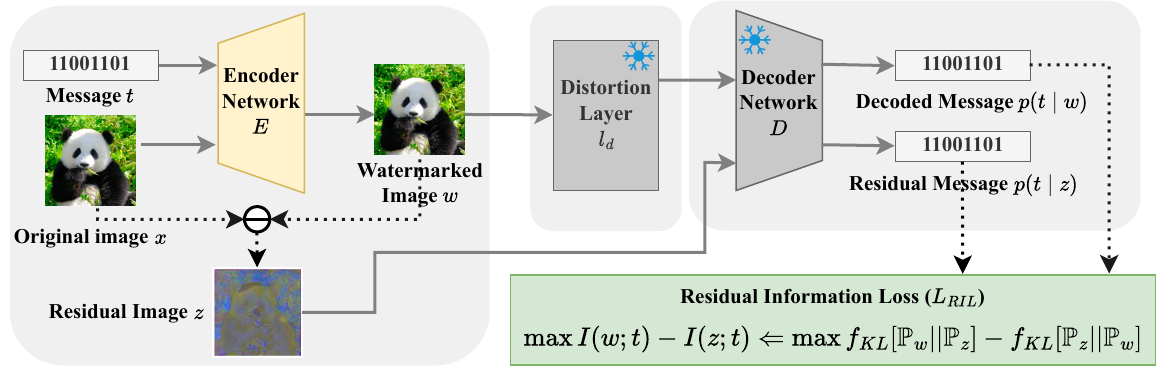}
  \vspace{-2mm}
  \caption{Identity leakage mitigation in watermarking.} 
  \label{fig:residual_information_loss}
   \vspace{-4mm}
\end{figure}

In Appendix~\ref{sec:W-IR_train}, we describe the complete training process of the W-IR, which incorporates the designed loss $\CL_{RIL}$.




\section{Experiments}

\subsection{Experimental Setup}

\noindent\textbf{Datasets.} 
We utilized two widely recognized datasets for training and testing our watermarking models: CelebA-HQ (abbreviation CelebA)~\cite{karras2018progressive} and MS-COCO (abbreviation COCO)~\cite{MS-COCO}. CelebA is a high-resolution face image dataset, 
while COCO is a large-scale dataset supporting tasks such as object detection, segmentation, key-point detection, and captioning.
On CelebA, we allocate $24,000$ images for training, $6,000$ for validation, and $2,000$ for testing. On COCO, 
we select $10,000$ images for training, 
$5,000$ images for validation, and 
$1,000$ images for testing. 
All images were rescaled to a resolution of $d=128\times128$ pixels.

\vspace{0.05in}

\noindent\textbf{Watermarking Methods.}
We use two representative open-source invisible watermarking schemes for our experiments: 
StegaStamp\cite{stegastamp} and HiDDeN\cite{zhu2018hidden}. 
For 
the secret watermark, we set its length to be $n=100$ bits and bit values are random. 
During the clean model training, we did not use the noise layer. 
All other parameters follow the default settings unless otherwise noted~\cite{zhu2018hidden,stegastamp}. 



\vspace{0.05in}

\noindent\textbf{Robust Training.} 
For empirically robust methods StegaStamp$^+$ and HiDDeN$^+$, we adhered to the noise settings specified in \cite{stegastamp} and \cite{zhu2018hidden}.
For the certified robust method W-CR, we train each scheme against additive Gaussian noise or affine transformation. 
For additive Gaussian noise, we set the variance $\sigma\in\{0.1,0.25,0.5\}$. 
For affine transformation, following the setup in~\cite{alfarra2022deformrs} on the large-scale dataset (i.e., ImageNet), we set the noise intensity to be $\sigma \in \{0.01, 0.02, 0.03\}$ in $\BSBETA \sim \mathcal{N}(\mathbf{0}, \sigma^2 \BSI_6)$ \footnote{We observed challenges on achieving convergence during adversarial training with larger perturbations, possibly due to this not being used 
in 
the original adversarial training procedures of 
HiDDeN and StegaStamp.}.

\noindent We first train our watermarking model without noise for $500$ epochs to obtain the clean vanilla model. Subsequently, we additionally train $100$ epochs on the noisy data 
to obtain the robust model (i.e., W-CR), except with the highest noise magnitude (i.e., $\sigma=0.5$ in additive Gaussian noise and $\sigma=0.03$ in affine transformation) where we trained $200$ epochs.
\emph{Without otherwise mentioned, we consider that W-CR includes the designed residual information loss.}

\vspace{0.05in}
\noindent\textbf{Identity Leakage Attacks.}
For identity forgery attacks, we evaluate two settings: each user has either $m=1$
or $m=30$ watermarked images. We set the multiplicative coefficient to $\zeta=1$. The 
$m=1$ setting demonstrates that a single watermarked image is already sufficient to forge a watermark, while the $m=30$ setting further improves the visual quality of the forged images.
For identity forgery attacks targeting individual watermarked images and identity extraction attacks, we set $m=1$, where each user possesses a single watermarked image while maintaining other parameters unchanged.


\vspace{0.05in}

\noindent\textbf{Evaluation Metrics.} 
We use \textit{PSNR} (Peak Signal-to-Noise Ratio) and \textit{SSIM} (Structural Similarity Index) to measure the visual quality after embedding secret watermarks.
For authentication, we report the \textit{bit accuracy} and \textit{accuracy} of the model in the clean test set used for vanilla training (\textit{clean vanilla}) and certified robust training (\textit{clean (bit) accuracy}).
For robustness, we evaluate \textit{certified accuracy}, defined as the fraction of the test samples verified correctly and certified robust. We randomly select $500$ examples from the test set in each dataset. For each example, we used $N_{0} = 100$ samples to select the most likely prediction $y_A$, and $N=10^{5}$ samples to estimate the lower confidence limit $\underline{p_A}$. We set $\alpha = 0.001$ for certification, meaning at least $99.9\%$ confidence.

For identity linking attacks, we use the \textit{silhouette score}, the average of the silhouette coefficients for all samples~\cite{rousseeuw1987silhouettes}. The silhouette coefficient $\in[-1,1]$ evaluates the degree of similarity between an object and its cluster compared to other clusters. A high value indicates better overall clustering performance.
To evaluate the success of identity forgery attacks, we utilize the \textit{forged bit accuracy} of forgery images and the \textit{forged accuracy} of forgery images. 
To evaluate the success of the identity extraction attack, we use \textit{attack bit accuracy} to measure the rate at which the extracted secret watermark matches the target secret watermark.



\subsection{Results on Identity Leakage Attacks}
\label{sec:leakage_t}
In this section, we compare the identity leakage results under clean models, empirically robust models (W-ER w/o $\CL_{RIL}$), and certified robust models (W-CR w/o $\CL_{RIL}$). 

\begin{table}[!t]
\centering
\scriptsize
\setlength\tabcolsep{2pt}
\caption{Forged bit accuracy under the $m=30$ setting. A lower value ($\downarrow$) indicates the model leaks fewer identities. The \textbf{bold} indicates the model with the most identity leakage.
}
\vspace{-2mm}
\resizebox{\linewidth}{!}{
\begin{tabular}{ccccccc}
\toprule
 &  &  & \multicolumn{4}{c}{W-CR w/o $\CL_{RIL}$} \\
 \cmidrule(r){4-7} 
\multirow{-2}{*}{\begin{tabular}[c]{@{}c@{}}Dataset\\ (Model)\end{tabular}} & \multirow{-2}{*}{\begin{tabular}[c]{@{}c@{}}Clean \\ vanilla\end{tabular}} & \multirow{-2}{*}{\begin{tabular}[c]{@{}c@{}}W-ER \\ w/o $\CL_{RIL}$\end{tabular}} & Noise($\sigma$) & Gaussian noise & Noise($\sigma$) & Affine \\
\midrule
 &  &  & 0.10 & 91.10\% & 0.01 & 81.34\% \\
 &  &  & 0.25 & 95.22\% & 0.02 & 77.97\% \\
\multirow{-3}{*}{\begin{tabular}[c]{@{}c@{}}COCO\\ (StegaStamp)\end{tabular}} & \multirow{-3}{*}{66.54\%} & \multirow{-3}{*}{\textbf{98.99\%}} & 0.50 & 96.66\% & 0.03 & 81.79\% \\
\midrule
 &  &  & 0.10 & 98.61\% & 0.01 & 94.29\% \\
 &  &  & 0.25 & 98.74\% & 0.02 & 95.39\% \\
\multirow{-3}{*}{\begin{tabular}[c]{@{}c@{}}CelebA\\ (StegaStamp)\end{tabular}} & \multirow{-3}{*}{84.80\%} & \multirow{-3}{*}{\textbf{100.0\%}} & 0.50 & 92.13\% & 0.03 & 91.56\% \\
\midrule
 &  &  & 0.10 & 56.23\% & 0.01 & \textbf{64.90\%} \\
 &  &  & 0.25 & 55.94\% & 0.02 & 54.56\% \\
\multirow{-3}{*}{\begin{tabular}[c]{@{}c@{}}COCO\\ (HiDDeN)\end{tabular}} & \multirow{-3}{*}{54.46\%} & \multirow{-3}{*}{55.93\%} & 0.50 & 58.97\% & 0.03 & 56.50\% \\
\midrule
 &  &  & 0.10 & \textbf{61.21\%} & 0.01 & 57.73\% \\
 &  &  & 0.25 & 57.58\% & 0.02 & 57.97\% \\
\multirow{-3}{*}{\begin{tabular}[c]{@{}c@{}}CelebA\\ (HiDDeN)\end{tabular}} & \multirow{-3}{*}{57.48\%} & \multirow{-3}{*}{58.78\%} & 0.50 & 59.28\% & 0.03 & 59.09\% \\
\bottomrule
\end{tabular}
}
\label{tab:forged_bit_acc}
\vspace{-2mm}
\end{table}

\vspace{0.05in}

\noindent\textbf{Identity Forgery Attack.}
Table~\ref{tab:forged_bit_acc} illustrates the effectiveness of identity forgery attacks across various models and datasets, revealing that nearly all models are susceptible to some extent of identity forgery. 
Under StegaStamp, both W-ER and W-CR show a significant degree of identity leakage, which is predominantly higher than that observed in clean models. Conversely, under HiDDeN, although models employing W-ER and W-CR show increased robustness, their forgery bit accuracy resembles that of the clean model, particularly at high-level noise. This phenomenon is due to the introduction of noise during training, which undermines the authentication performance (see clean bit accuracy in Table~\ref{tab:certified_acc}), leading to decreased accuracy of the decoder compared to the clean model.

Table~\ref{tab:forged_bit_acc_single} demonstrates the effectiveness of identity forgery attacks using a single watermarked image per user. We evaluate the forgery bit accuracy averaged over 1000 experimental iterations. 
The results reveal that even with just one watermarked image, almost all models exhibit some degree of identity leakage. Moreover, models under robustness training generally show more severe identity leakage compared to clean models. At the model level, consistent with previous findings, StegaStamp demonstrates more significant privacy leakage than HiDDeN.



\begin{table}[!t]
\centering
\scriptsize
\setlength\tabcolsep{2pt}
\caption{Forged bit accuracy under the $m=1$ setting. A lower value ($\downarrow$) indicates the model leaks fewer identities. The \textbf{bold} indicates the model with the most identity leakage.}
\vspace{-2mm}
\resizebox{\linewidth}{!}{
\begin{tabular}{ccccccc}
\toprule
 &  &  & \multicolumn{4}{c}{W-CR w/o $\CL_{RIL}$} \\
 \cmidrule(r){4-7} 
\multirow{-2}{*}{\begin{tabular}[c]{@{}c@{}}Dataset\\ (Model)\end{tabular}} & \multirow{-2}{*}{\begin{tabular}[c]{@{}c@{}}Clean \\ vanilla\end{tabular}} & \multirow{-2}{*}{\begin{tabular}[c]{@{}c@{}}W-ER \\ w/o $\CL_{RIL}$\end{tabular}} & Noise($\sigma$) & Gaussian noise & Noise($\sigma$) & Affine \\
\midrule
 &  &  & 0.10 & 79.82\% & 0.01 & 73.12\% \\
 &  &  & 0.25 & 91.23\% & 0.02 & 69.93\% \\
\multirow{-3}{*}{\begin{tabular}[c]{@{}c@{}}COCO\\ (StegaStamp)\end{tabular}} & \multirow{-3}{*}{67.93\%} & \multirow{-3}{*}{\textbf{95.44\%}} & 0.50 & 95.07\% & 0.03 & 69.35\% \\
\midrule
 &  &  & 0.10 & 97.95\% & 0.01 & 91.60\%  \\
 &  &  & 0.25 & 97.94\% & 0.02 & 93.57\% \\
\multirow{-3}{*}{\begin{tabular}[c]{@{}c@{}}CelebA\\ (StegaStamp)\end{tabular}} & \multirow{-3}{*}{83.32\%} & \multirow{-3}{*}{\textbf{100.0\%}} & 0.50 & 87.65\% & 0.03 & 89.26\% \\
\midrule
 &  &  & 0.10 & 62.92\% & 0.01 & 55.72\% \\
 &  &  & 0.25 & \textbf{64.01\%} & 0.02 & 57.37\% \\
\multirow{-3}{*}{\begin{tabular}[c]{@{}c@{}}COCO\\ (HiDDeN)\end{tabular}} & \multirow{-3}{*}{53.3\%} & \multirow{-3}{*}{62.93\%} & 0.50 & 62.35\% & 0.03 & 59.84\% \\
\midrule
 &  &  & 0.10 & \textbf{74.46\%} & 0.01 & 58.01\% \\
 &  &  & 0.25 & 68.48\% & 0.02 & 57.89\% \\
\multirow{-3}{*}{\begin{tabular}[c]{@{}c@{}}CelebA\\ (HiDDeN)\end{tabular}} & \multirow{-3}{*}{55.15\%} & \multirow{-3}{*}{68.93\%} &  0.50 & 60.15\% & 0.03 & 59.57\% \\
\bottomrule
\end{tabular}
}
\label{tab:forged_bit_acc_single}
\vspace{-4mm}
\end{table}

\vspace{0.05in}

\noindent\textbf{Identity Extraction Attack.}
Table~\ref{tab:attack_bit_acc} illustrates the effects of identity extraction attacks under various settings (using only one watermarked image). The results indicate that all models are at risk of secret watermark leakage, with attack bit accuracy (reflecting the extent of identity leakage) increasing under robust training compared to the clean model. Moreover, under Stegastamp, the watermark extraction bit accuracy of W-ER and W-CR exceeds $90\%$, highlighting a substantial risk of identity leakage. 
Notably, the bit accuracy presented in Table~\ref{tab:attack_bit_acc} is generally higher than the values shown in Table~\ref{tab:forged_bit_acc}. The difference is because identity extraction attacks directly compare the extracted secret watermark to the ground truth watermark, without relying on the decoder for decoding. As a result, bit accuracy remains high in instances of reduced decoder efficiency (e.g., under Gaussian noise on HiDDeN).

\begin{table}[!t]
\centering
\scriptsize
\setlength\tabcolsep{2pt}
\caption{Attack bit accuracy under different settings. A lower value ($\downarrow$) indicates the model leaks fewer identities. The \textbf{bold} indicates the model with the most identity leakage.
}
\vspace{-2mm}
\resizebox{\linewidth}{!}{
\begin{tabular}{ccccccc}
\toprule
 &  &  & \multicolumn{4}{c}{W-CR w/o $\CL_{RIL}$} \\
 \cmidrule(r){4-7} 
\multirow{-2}{*}{\begin{tabular}[c]{@{}c@{}}Dataset\\ (Model)\end{tabular}} & \multirow{-2}{*}{\begin{tabular}[c]{@{}c@{}}Clean \\ vanilla\end{tabular}} & \multirow{-2}{*}{\begin{tabular}[c]{@{}c@{}}W-ER \\ w/o $\CL_{RIL}$ \end{tabular}} & Noise($\sigma$) & Gaussian noise & Noise($\sigma$) & Affine \\
\midrule
 &  &  & 0.10 & 90.00\% & 0.01 & 93.00\% \\
 &  &  & 0.25 & 92.00\% & 0.02 & 89.00\% \\
\multirow{-3}{*}{\begin{tabular}[c]{@{}c@{}}COCO\\ (StegaStamp)\end{tabular}} & \multirow{-3}{*}{83.00\%} & \multirow{-3}{*}{\textbf{97.00\%}} & 0.50 & 96.00\% & 0.03 & 82.00\% \\
\midrule
 &  &  & 0.10 & 88.00\% & 0.01 & 88.00\%  \\
 &  &  & 0.25 & 91.00\% & 0.02 & 81.00\% \\
\multirow{-3}{*}{\begin{tabular}[c]{@{}c@{}}CelebA\\ (StegaStamp)\end{tabular}} & \multirow{-3}{*}{76.00\%} & \multirow{-3}{*}{\textbf{98.00\%}} & 0.50 & 93.00\% & 0.03 & 78.00\% \\
\midrule
 &  &  & 0.10 & 69.00\% & 0.01 & 65.00\% \\
 &  &  & 0.25 & \textbf{75.00\%} & 0.02 & 58.00\% \\
\multirow{-3}{*}{\begin{tabular}[c]{@{}c@{}}COCO\\ (HiDDeN)\end{tabular}} & \multirow{-3}{*}{63.00\%} & \multirow{-3}{*}{66.00\%} & 0.50 & 69.00\% & 0.03 & 59.00\% \\
\midrule
 &  &  & 0.10 & \textbf{80.00\%} & 0.01 & 68.00\% \\
 &  &  & 0.25 & 74.00\% & 0.02 & 69.00\% \\
\multirow{-3}{*}{\begin{tabular}[c]{@{}c@{}}CelebA\\ (HiDDeN)\end{tabular}} & \multirow{-3}{*}{67.00\%} & \multirow{-3}{*}{69.00\%} & 0.50 & \textbf{80.00\%} & 0.03 & 60.00\% \\
\bottomrule
\end{tabular}
}
\label{tab:attack_bit_acc}
\vspace{-4mm}
\end{table}

\vspace{0.04in}


\vspace{0.05in}
\noindent\textbf{Impact of Different Parameters ($m$, $\zeta$, Distance).}  \label{sec:impact}
To investigate the impact of various parameters on identity leakage, we explored the variations in forged bit accuracy under different numbers of users ($m$), multiplicative coefficients ($\zeta$), and distances between secret watermark pairs. 

To explore the effect of $m$ and $\zeta$,  
we selected four sets of parameter combinations. The results are shown in Figure~\ref{fig:different params}. 
Generally, increasing $m$ results in higher forged bit accuracy.   
When $m=30$, the accuracy is comparable to, or even surpasses, that obtained with $m=100$. 
Similarly, increasing $\zeta$ leads to higher forged bit accuracy; however, as illustrated in Figure \ref{fig:id_forgery} this also results in a greater loss of image quality. Consequently, in our primary experiments, we have selected the settings $m=30$ and $\zeta=1$, which represent a balanced trade-off between the effectiveness of the attack and the preservation of image quality.

\subsection{Robustness Enhancement in W-IR} 

\textbf{\Rmnum{1}. Robustness Enhancement in W-ER:}
The robustness of StegaStamp$^+$ and HiDDeN$^+$ has been assessed in \cite{stegastamp,zhu2018hidden}, revealing that both methods demonstrate significant enhancements in robustness after adversarial training. Subsequent studies have further evaluated and compared the robustness of these two watermarking techniques. For instance, research findings~\cite{lei2024diffusetrace,gu2023anti} indicate that StegaStamp$^+$ exhibits superior robustness compared to HiDDeN$^+$.

\begin{table*}[!t]
\centering
\scriptsize
\setlength\tabcolsep{2.5pt}
\caption{Certified robustness, authentication effectiveness, and visual quality of W-CR without residual information loss $\CL_{RIL}$.
}
\vspace{-2mm}
\begin{tabular}{ccccccccccccccccc}
\toprule
 & \multicolumn{4}{c}{Clean vanilla} & \multicolumn{6}{c}{W-CR: Additive Gaussian noise} & \multicolumn{6}{c}{W-CR: Affine transformation} \\
 \cmidrule(r){2-5} \cmidrule(r){6-11} \cmidrule(r){12-17} 
\begin{tabular}[c]{@{}c@{}}Dataset\\ (Model)\end{tabular} & PSNR↑ & SSIM↑ & \begin{tabular}[c]{@{}c@{}}Clean \\ Bit Acc.↑\end{tabular} & \begin{tabular}[c]{@{}c@{}}Clean \\ Acc.↑\end{tabular} & \begin{tabular}[c]{@{}c@{}}Noise\\ ($\sigma$)\end{tabular} & PSNR↑ & SSIM↑ & \begin{tabular}[c]{@{}c@{}}Clean\\  Bit Acc.↑\end{tabular} & \begin{tabular}[c]{@{}c@{}}Clean\\  Acc.↑\end{tabular} & \begin{tabular}[c]{@{}c@{}}Certified\\  Acc.↑\end{tabular} & \begin{tabular}[c]{@{}c@{}}Noise\\ ($\sigma$)\end{tabular} & PSNR↑ & SSIM↑ & \begin{tabular}[c]{@{}c@{}}Clean\\  Bit Acc.↑\end{tabular} & \begin{tabular}[c]{@{}c@{}}Clean\\  Acc.↑\end{tabular} & \begin{tabular}[c]{@{}c@{}}Certified \\ Acc.↑\end{tabular} \\
\midrule
\multirow{3}{*}{\begin{tabular}[c]{@{}c@{}}COCO\\ (StegaStamp)\end{tabular}} & \multirow{3}{*}{30.79} & \multirow{3}{*}{0.9314} & \multirow{3}{*}{99.82\%} & \multirow{3}{*}{100.0\%} & 0.10 & 27.22 & 0.8835 & 99.99\% & 100.0\% & 96.60\% & 0.01 & 28.24 & 0.9111 & 99.98\% & 100.0\% & 94.60\% \\
 &  &  &  &  & 0.25 & 23.97 & 0.8217 & 100.0\% & 100.0\% & 98.80\% & 0.02 & 27.18 & 0.9013 & 99.99\% & 100.0\% & 94.80\% \\
 &  &  &  &  & 0.50 & 19.92 & 0.7241 & 100.0\% & 100.0\% & 99.40\% & 0.03 & 26.03 & 0.8948 & 99.99\% & 100.0\% & 95.80\% \\
 \midrule
\multirow{3}{*}{\begin{tabular}[c]{@{}c@{}}CelebA\\ (StegaStamp)\end{tabular}} & \multirow{3}{*}{31.39} & \multirow{3}{*}{0.9360} & \multirow{3}{*}{99.97\%} & \multirow{3}{*}{100.0\%} & 0.10 & 30.01 & 0.9089 & 100.0\% & 100.0\% & 100.0\% & 0.01 & 33.33 & 0.9505 & 99.99\% & 100.0\% & 100.0\% \\
 &  &  &  &  & 0.25 & 25.60 & 0.8485 & 100.0\% & 100.0\% & 100.0\% & 0.02 & 32.89 & 0.9489 & 100.0\% & 100.0\% & 100.0\% \\
 &  &  &  &  & 0.50 & 19.48 & 0.6509 & 99.90\% & 100.0\% & 100.0\% & 0.03 & 30.57 & 0.9412 & 100.0\% & 100.0\% & 100.0\% \\
 \midrule
\multirow{3}{*}{\begin{tabular}[c]{@{}c@{}}COCO\\ (HiDDeN)\end{tabular}} & \multirow{3}{*}{32.84} & \multirow{3}{*}{0.9418} & \multirow{3}{*}{83.40\%} & \multirow{3}{*}{100.0\%} & 0.10 & 27.66 & 0.8500 & 81.48\% & 99.67\% & 100.0\% & 0.01 & 29.65 & 0.9140 & 82.76\% & 100.0\% & 100.0\% \\
 &  &  &  &  & 0.25 & 24.09 & 0.7659 & 79.01\% & 99.92\% & 100.0\% & 0.02 & 29.59 & 0.9106 & 81.09\% & 100.0\% & 100.0\% \\
 &  &  &  &  & 0.50 & 21.76 & 0.7009 & 74.60\% & 100.0\% & 100.0\% & 0.03 & 29.59 & 0.9100 & 80.04\% & 100.0\% & 100.0\% \\
 \midrule
\multirow{3}{*}{\begin{tabular}[c]{@{}c@{}}CelebA\\ (HiDDeN)\end{tabular}} & \multirow{3}{*}{35.51} & \multirow{3}{*}{0.9604} & \multirow{3}{*}{88.89\%} & \multirow{3}{*}{100.0\%} & 0.10 & 26.24 & 0.8104 & 86.90\% & 100.0\% & 100.0\% & 0.01 & 32.08 & 0.9426 & 89.87\% & 100.0\% & 100.0\% \\
 &  &  &  &  & 0.25 & 22.52 & 0.7131 & 80.01\% & 100.0\% & 100.0\% & 0.02 & 31.67 & 0.9423 & 88.37\% & 100.0\% & 100.0\% \\
 &  &  &  &  & 0.50 & 20.85 & 0.7011 & 85.09\% & 100.0\% & 100.0\% & 0.03 & 32.10 & 0.9428 & 87.58\% & 100.0\% & 100.0\% \\
 \midrule
Average & 32.63 & 0.9424 & 93.02\% & 100.0\% &  & 24.11 & 0.78157 & 90.58\% & 99.97\% & 99.57\% &  & 30.24 & 0.9258 & 92.47\% & 100.0\% & 98.77\%  \\
\bottomrule
\end{tabular}
\label{tab:certified_acc}
\vspace{-2mm}
\end{table*}

\begin{figure}[t]
    \centering
    \begin{subfigure}[b]{0.85\columnwidth}
        \centering
        \includegraphics[width=\textwidth]{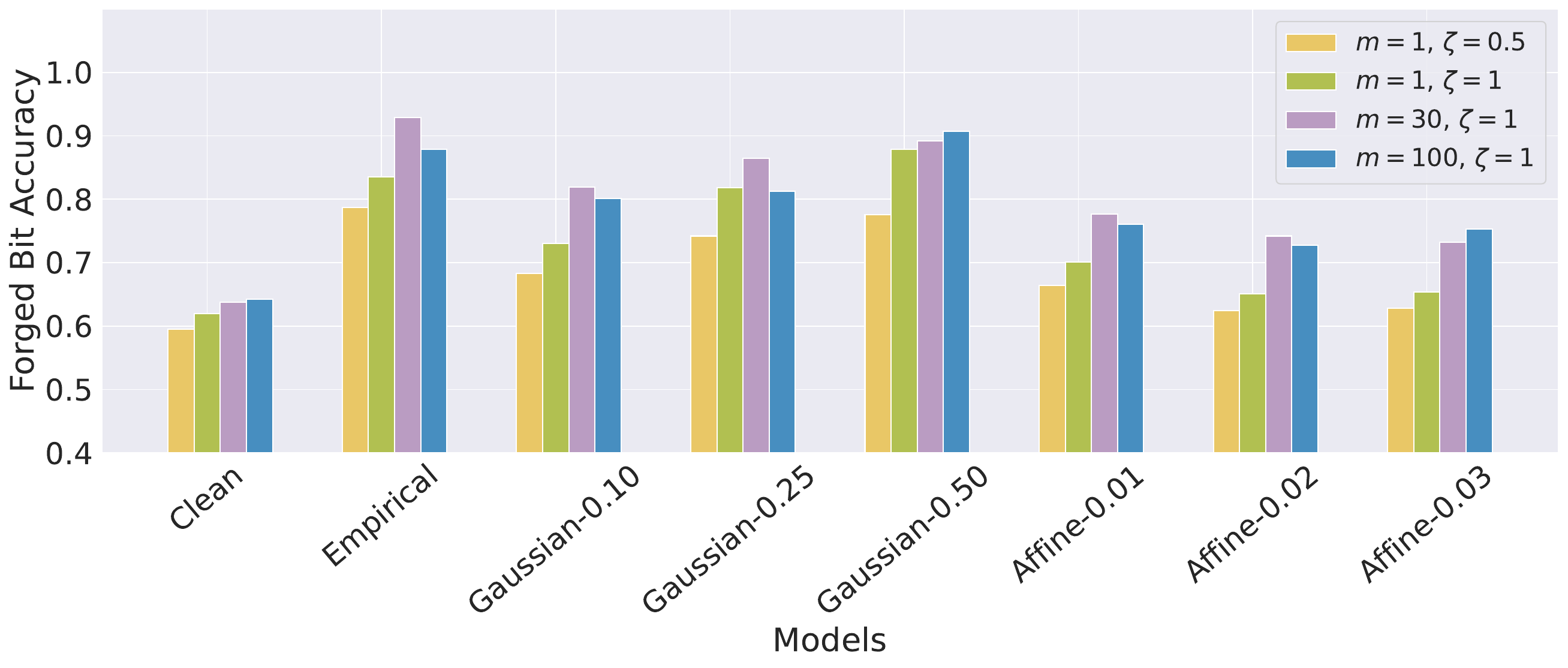}
    \end{subfigure}
\vspace{-3mm}
\caption{Impact of $m$ and $\zeta$ on COCO (StegaStamp).} 
\label{fig:different params}
\vspace{-4mm}
\end{figure}


\vspace{0.05in}
\textbf{\Rmnum{2}. Robustness Enhancement in W-CR:}
\label{sec:res-WRS-noIB}
Here we report the certified robustness results of W-CR without the designed $\CL_{RIL}$. Please refer to Appendix~\ref{sec:res-WCR-IB} for detailed robustness results of W-CR with $\CL_{RIL}$. The results indicate that our $\CL_{RIL}$ maintains the certified robustness of W-CR without $\CL_{RIL}$. 

\vspace{0.05in}
\noindent\textbf{Certified Robustness Results.}
Table~\ref{tab:certified_acc} summarizes the robustness, authentication, and visual quality of W-CR against additive Gaussian noise and affine transformation on different datasets, models, and noise magnitudes. To our knowledge, W-CR is the first scheme in the domain of post-processing image watermarking that offers provable robustness against pixel value and coordinate perturbations. We have the following key observations: 
1) Compared to the clean model, W-CR sacrifices only a marginal bit accuracy and nearly no clean accuracy. 
In addition, it achieves nearly 100\% certified accuracy: the average certified accuracy of W-CR stands at $99.57\%$ and $98.77\%$ under additive Gaussian noise and affine transformations across all datasets and models, respectively. 
2) Comparing different datasets, CelebA outperforms COCO in terms of model robustness, authentication performance, and visual quality. This is due to the larger scale of  CelebA and the uniformity of its image content (i.e., all are face images), which facilitates model convergence. 
3) As the noise magnitude increases, there is a decline in the 
authentication performance and visual quality. However, different from classic classification 
where model accuracy is significantly impacted by robust training, the clean (bit) accuracy of watermarking is less affected by robust training, though with a slight decrease in watermarked image quality. 


\begin{figure}
    \centering
    \begin{subfigure}[b]{0.48\columnwidth}
        \centering
        \includegraphics[width=\textwidth]{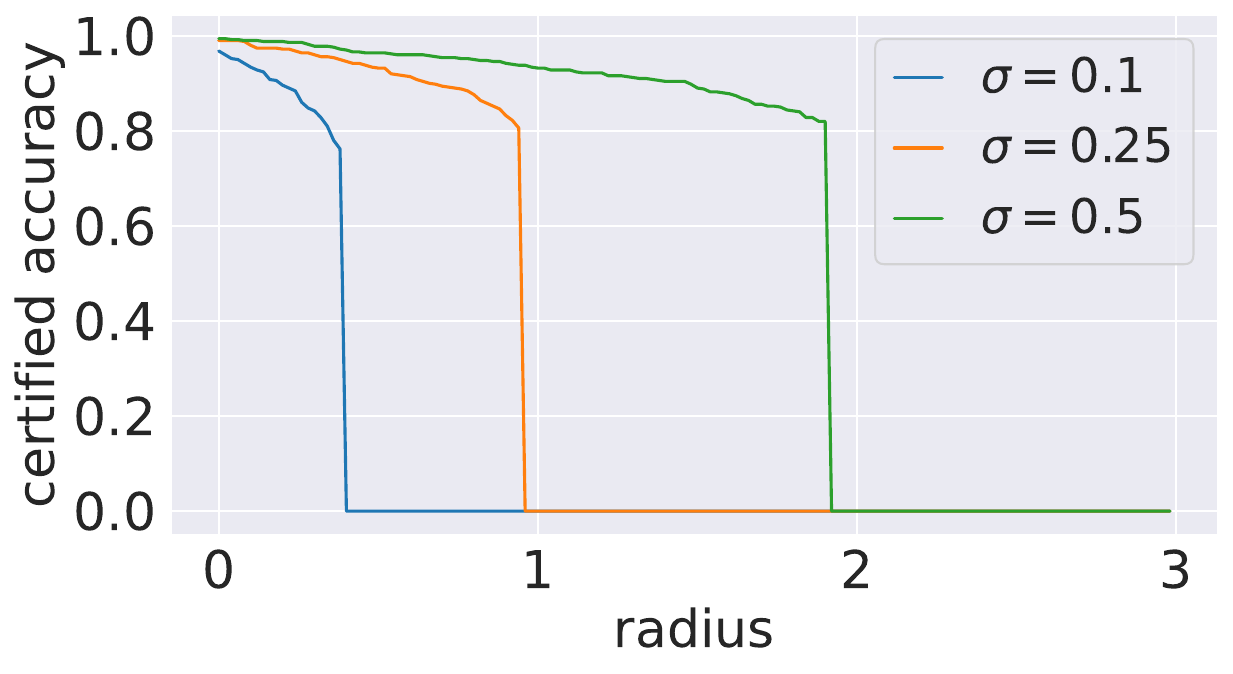}
    \end{subfigure}
    \begin{subfigure}[b]{0.48\columnwidth}
        \centering
        \includegraphics[width=\linewidth]{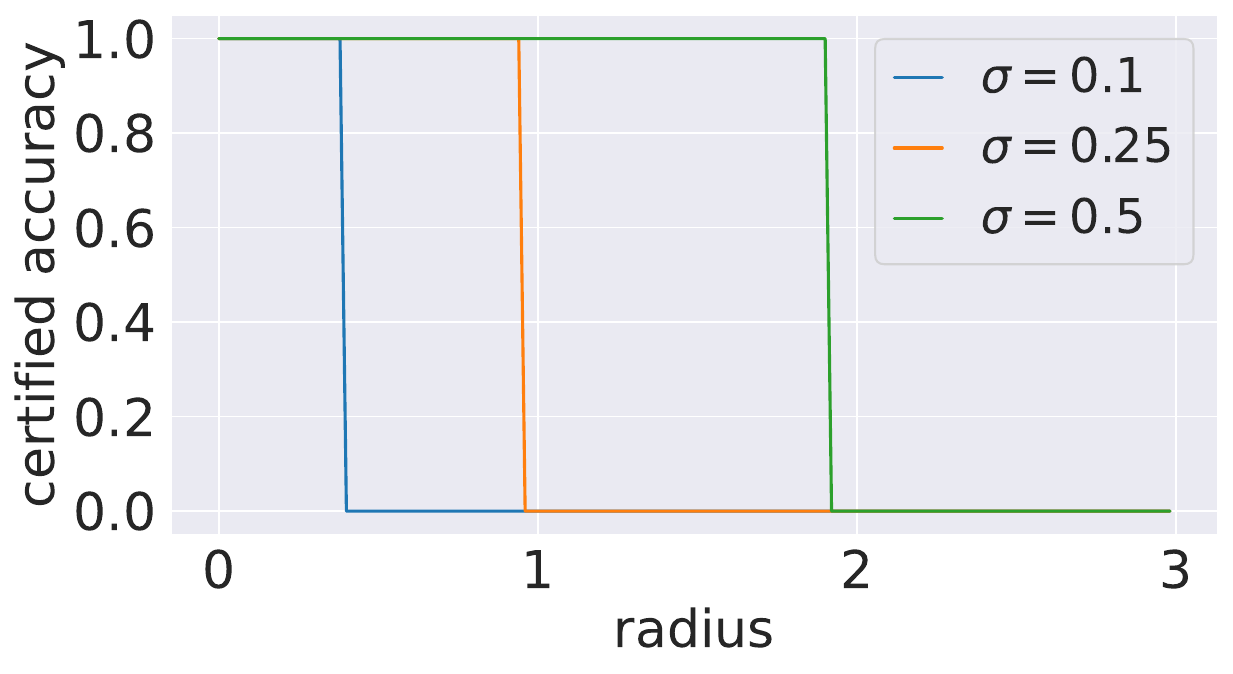}
    \end{subfigure}
    
    \begin{subfigure}[b]{0.48\columnwidth}
        \centering
        \includegraphics[width=\textwidth]{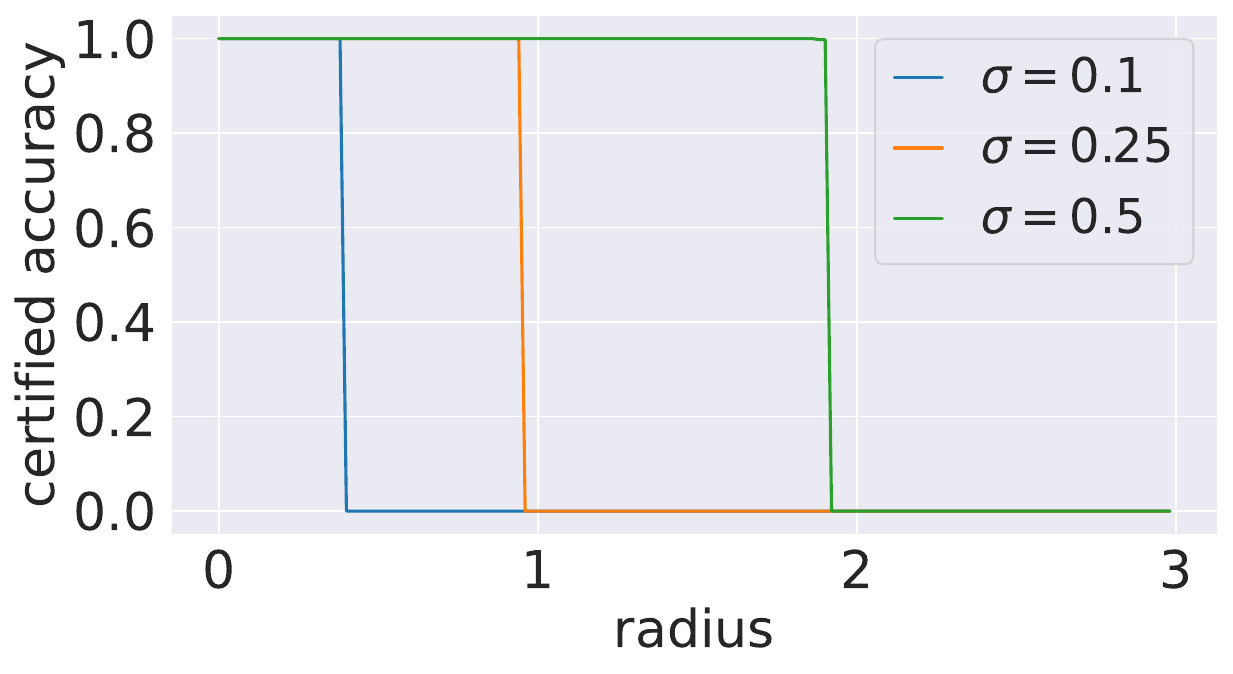}
    \end{subfigure}
    \begin{subfigure}[b]{0.48\columnwidth}
        \centering
        \includegraphics[width=\linewidth]{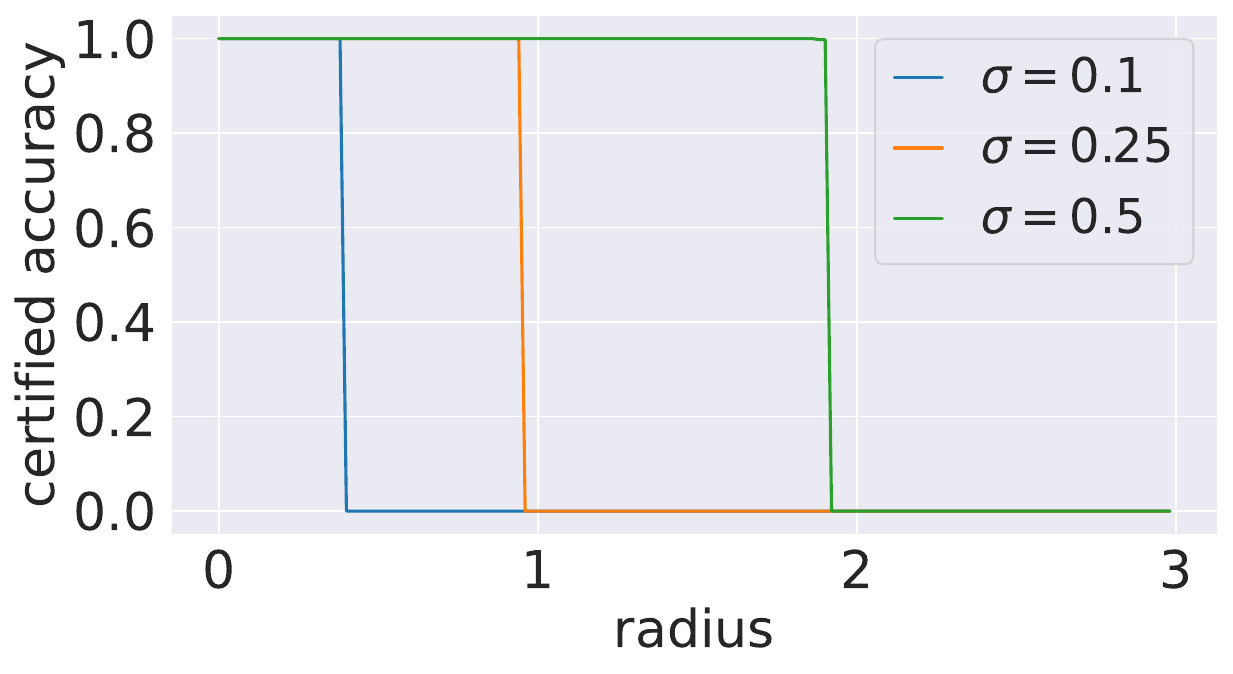}
    \end{subfigure}   

\vspace{-3mm}
\caption{Certified accuracy at different radii against additive Gaussian noise. Datasets: left-COCO and right-CelebA; Models: top-StegaStamp and bottom-HiDDeN.} 
\label{fig:additive_certify_result_GN}
\vspace{-4mm}
\end{figure}

\vspace{0.05in}
\noindent\textbf{Certified Accuracy under Different Radii.}
Figures~\ref{fig:additive_certify_result_GN} and \ref{fig:affine_certify_result_Affine} depict the impact of certified radii at varying noise magnitudes on the certified accuracy against additive Gaussian noise and affine transformation. The results show the certified accuracy decreases as the radius increases, and it drops to $0$ abruptly at a certain threshold of the radius. Moreover,
increasing the noise magnitude is shown to enlarge the certified radius. At the same radius, higher magnitudes noise achieves the highest certified accuracy; however, this is at the expense of image quality (see PSNR and SSIM in Table~\ref{tab:certified_acc}). We also observe that on HiDDeN,  a $100\%$ certified accuracy under various radii is achieved across datasets, perturbations, and noise levels. This implies the test images for certification are all certified robust under the considered setting on HiDDeN. 


    


\subsection{Identity Protection in W-IR} \label{sec:result_mitigation}


\begin{table*}[!th]
\centering
\scriptsize
\setlength\tabcolsep{5pt}
\caption{Comparison of clean, W-ER, and W-CR for identity leakage, with and without  $\CL_{RIL}$. A lower value ($\downarrow$) indicates less identity leakage. The \textbf{bold} and \textbf{\color[HTML]{478ec0} blue} indicate the highest and second-highest levels of identity protection.}
\vspace{-2mm}
\begin{tabular}{cccccccccc}
\toprule
 & \multicolumn{1}{c}{} & \multicolumn{2}{c}{Clean vanilla} & \multicolumn{2}{c}{W-ER} & \multicolumn{2}{c}{W-CR: Gaussian noise} & \multicolumn{2}{c}{W-CR: Affine} \\
 \cmidrule(r){3-4} \cmidrule(r){5-6}  \cmidrule(r){7-8} \cmidrule(r){9-10} 
Metrics & Dataset (Model) & w/o $\CL_{RIL}$ & w/ $\CL_{RIL}$ & w/o $\CL_{RIL}$ & w/ $\CL_{RIL}$ & w/o $\CL_{RIL}$ & w/ $\CL_{RIL}$ & w/o $\CL_{RIL}$ & w/ $\CL_{RIL}$ \\
 \midrule
 & COCO (StegaStamp) & \textbf{\color[HTML]{478ec0} 65.44\%} & \textbf{61.48\%} & 98.99\% & 94.89\% & 91.10\% & 88.42\% & 81.34\% & 71.18\% \\
 & CelebA (StegaStamp) & \textbf{\color[HTML]{478ec0} 84.80\%} & \textbf{76.75\%} & 100.0\% & 95.00\% & 98.61\% & 92.95\% & 94.29\% & 92.96\% \\
 & COCO (HiDDeN) & 54.46\% & \textbf{\color[HTML]{478ec0} 50.68\%} & 55.93\% & 55.45\% & 56.23\% & \textbf{50.13\%} & 64.90\% & 53.93\% \\
 & CelebA (HiDDeN) & 57.48\% & \textbf{\color[HTML]{478ec0} 50.76\%} & 58.78\% & 58.40\% & 61.21\% & \textbf{46.46\%} & 57.73\% & {  52.60\%} \\
\cmidrule(r){2-10}
\multirow{-5}{*}{\begin{tabular}[c]{@{}c@{}}Identity forgery attack:\\ Forged bit accuracy (↓)\end{tabular}} & {  Average} & \textbf{\color[HTML]{478ec0} 65.55\%} & \textbf{59.92\%} & 78.43\% & 75.93\% & 76.79\% & 69.49\% & 74.57\% & 67.67\% \\
 \midrule
 & COCO (StegaStamp) & \textbf{\color[HTML]{478ec0} 83.00\%} & \textbf{79.00\%} & 97.00\% & {  83.00\%} & 90.00\% & 86.00\% & 93.00\% & 90.00\% \\
 & CelebA (StegaStamp) & \textbf{\color[HTML]{478ec0} 76.00\%} & \textbf{74.00\%} & 98.00\% & 78.00\% & 88.00\% & 81.00\% & 88.00\% & 78.00\% \\
 & COCO (HiDDeN) & \textbf{\color[HTML]{478ec0} 63.00\%} & \textbf{61.00\%} & 66.00\% & 64.00\% & 69.00\% & 65.00\% & 65.00\% & 62.00\% \\
 & CelebA (HiDDeN) & 67.00\% & \textbf{61.00\%} & 69.00\% & 65.00\% & 80.00\% & 70.00\% & 68.00\% & \textbf{\color[HTML]{478ec0} 64.00\%} \\
\cmidrule(r){2-10}
\multirow{-5}{*}{\begin{tabular}[c]{@{}c@{}}Identity extraction attack:\\ Attack bit accuracy (↓)\end{tabular}} &  Average & \textbf{\color[HTML]{478ec0} 72.25\%} & \textbf{68.75\%} & 82.50\% & 72.50\% & 81.75\% & 75.50\% & 78.50\% & 73.50\% \\
\bottomrule
\end{tabular}
\label{tab:identity_acc_loss}
\vspace{-4mm}
\end{table*}

According to the results in Section~\ref{sec:leakage_t}, W-ER (w/o $\CL_{RIL}$) shows no significant variation in identity leakage across different noise levels. Consequently, we analyze the effects of $\CL_{RIL}$ on identity protection under low-level noise, which exhibited the best authentication effectiveness. 
Table~\ref{tab:identity_acc_loss} illustrates the effectiveness of our proposed residual information loss $\CL_{RIL}$, in mitigating identity leakage across various settings. The findings indicate that $\CL_{RIL}$ effectively reduces identity leakage under three types of attacks involving clean models, W-ER, and W-CR. Among these models, the clean model provides the highest identity protection; however, it lacks robustness enhancement. 
Notably, W-CR (w/ $\CL_{RIL}$) achieves a level of identity protection comparable to or even exceeding that of the clean model (w/o $\CL_{RIL}$). For instance, in identity linking attacks, the average silhouette score for W-CR under Gaussian noise is $0.4181$, indicating the second-highest level of identity protection. This suggests that W-CR maintains robustness without increasing identity leakage compared to clean vanilla, successfully balancing identity protection with robustness. In the case of W-ER, although identity leakage is typically highest w/o $\CL_{RIL}$, our proposed $\CL_{RIL}$ significantly mitigates this leakage. For example, under the identity extraction attack, the average attack bit accuracy of W-ER w/ $\CL_{RIL}$ is comparable to that of clean models (w/o $\CL_{RIL}$).

\section{Related Work}
\subsection{Image Watermarking}

\noindent\textbf{Post-processing Methods}
embed the watermark message into images after generation.
Conventional digital image watermarking predominantly relies on image processing techniques to embed watermark messages in the images' spatial or transform domain~\cite{an2024benchmarking}. In spatial domain watermarking, the watermark message is embedded by making slight adjustments to pixel values. 
For instance, Chopra et al.~\cite{chopra2012lsb} propose embedding messages to the least significant bit of pixel values.
Frequency domain watermarking requires transforming images from the spatial to the frequency domain. 
Al-Haj et al.~\cite{DWT-DCT} combine DWT with DCT techniques
to capitalize on both methods' advantages, which is employed by open-source generative model Stable Diffusion~\cite{rombach2022high}. However, these methods show vulnerability against image perturbations such as JPEG compression and additive noise~\cite{stablesignature,jiang2023evading,zhu2018hidden,huLearningbasedImageSteganography2024}.

Neural networks enhance post-processing methods by improving robustness and visual quality, often using encoder-decoder architectures for effective watermark embedding and extraction~\cite{stegastamp}.
For instance, Zhu et al.~\cite{zhu2018hidden} introduce a discriminator to enhance invisibility and a noise simulation layer 
to bolster robustness against mimic real-world perturbations.
Fang et al.~\cite{fang2023flow} apply invertible neural networks to watermarking, leveraging their performance in image generation and high-resolution tasks to achieve high quality watermarked images. 
Jia et al.~\cite{jia2021mbrs} propose MBRS to enhance robustness against JPEG compression by incorporating real and simulated JPEG compression during training. 

\vspace{0.03in}

\noindent\textbf{In-processing Methods} 
are commonly utilized in generative AI detection and traceability, where the watermark message is embedded directly during the image generation process.
Inspired by model fingerprinting, Yu et al.~\cite{YU1} propose embedding artificial fingerprints (i.e., watermark messages) into the training data of generative models, demonstrating that these watermark messages transfer from the training data to the model's output. 
Yu et al.~\cite{yu2021responsible} and Lukas et al.~\cite{lukas2023ptw} streamline the process by embedding watermark messages directly into model weights and fine-tuning pre-trained generative models.
Fernandez et al.~\cite{stablesignature} achieve lower training overhead by fine-tuning only the decoder part of stable diffusion. Wen et al.~\cite{wen2023tree} embed watermarks directly in the Fourier space of latent variables in a diffusion model, with extraction via diffusion inversion.
However, in-processing methods are limited in their application to non-generated images, primarily applicable to the detection of AI-generated content~\cite{an2024benchmarking} and model intellectual property protection\cite{zhao2023recipe}.

\subsection{Image Watermarking Adversarial Attacks}

\noindent\textbf{Tampering Attacks} 
manipulate the watermarked image using different transformations to prevent the correct decoding of the watermark message.
An effective and straightforward method involves applying rule-based transformations to the watermarked image and enhancing the attack's effectiveness by adjusting the transformation parameters.
For example, 
Tancik et al.~\cite{stegastamp} incorporate noise layers such as crop, Gaussian blur, affine, etc.
to simulate real-world interferences to images. 
Ma et al.~\cite{CIN} include additional operations like 
hue adjustment and dropout to enhance robustness. 
These attacks can generate numerous adversarial examples by setting various transformation parameters at a low cost, without the knowledge of watermarking neural networks~\cite{an2024benchmarking}.

Another approach involves employing finely tuned optimization algorithms to search for adversarial perturbations. For instance, Jiang et al.~\cite{jiang2023evading} propose to use PGD~\cite{PGD} and HopSkipJump~\cite{chen2020hopskipjumpattack} to search for adversarial examples. An et al.~\cite{an2024benchmarking} and Hu et al.~\cite{TransferAttackImage} trained surrogate models simulating a watermark decoder to determine the presence of a watermark in an image. 
However, they typically involve algorithms of higher complexity, requiring additional time for model training and adversarial examples generation~\cite {InvisibleImageWatermarks2023}.



\vspace{0.03in}

\noindent\textbf{Reconstruction Attacks} 
aim to recreate images without watermarks based on the content of watermarked images.
These attacks view the addition of a watermark message as noise, which can be removed from the watermarked image using well-trained reconstruction models, such as super-resolution models~\cite{lukas2023ptw}, image compression models\cite{balle2018variational,cheng2020learned,esser2021taming}, and pre-trained Denoising Diffusion Implicit Model~\cite{song2020denoising, an2024benchmarking}.
However, these methods heavily rely on the reconstruction capability of the models and often fail to remove the watermark while maintaining acceptable visual quality~\cite{lukas2023ptw,stablesignature}.

\section{Conclusion}

We reveal that post-processing image watermarking is vulnerable to both adversarial attacks and identity leakage-- adversaries can exploit leaked identity information to craft watermarked images that fool the watermark owner (even with invisible watermarks). Additionally, we observe that robust models can exacerbate such identity leakage. To address the issues, we develop W-IR, the first image watermarking framework that ensures both identity protection and robustness. 
We perform extensive experiments to demonstrate the effectiveness of the identity leakage attack and the efficacy of our W-IR in providing robustness and identity protection.



\appendices






\bibliographystyle{IEEEtran}
\bibliography{watermark,alan}






\section{Proof for Residual Information Loss}
\label{appendix:RIL_proof}

\noindent\textbf{Theorem~3.}
To deal with the estimation of mutual information in Eq.~\ref{eq:residual_objective}, we introduce the following theorem as an estimate of residual information objective to mitigate identity leakage:
\begin{equation}
\begin{aligned}
\max I(w;t) - I(z;t)  \Leftarrow \max f_{KL}[\mathbb{P}_w \| \mathbb{P}_z] - f_{KL}[\mathbb{P}_z \| \mathbb{P}_w].
\end{aligned}
\end{equation}

\begin{proof}
    
In accordance with the definition of mutual information, given three random variables $w, z,$ and $t$, the difference of mutual information can be expressed as follows:

\begin{equation*}
\small
    \begin{aligned}
    I(w;t) - I(z;t)=&\iint p(w,t) \log \frac{p(w,t)}{p(w)p(t)} dwdt-\\&\iint p(z,t) \log \frac{p(z,t)}{p(z)p(t)} dzdt
    \end{aligned}
\end{equation*}
\begin{equation}
    \begin{aligned}
    = &\iint p(w)p(t|w) \log \frac{p(t|w)}{p(t)} dwdt-\\&\iint p(z)p(t|z) \log \frac{p(t|z)}{p(t)} dzdt\\
    = &\iint p(w)p(t|w) \log \frac{p(t|w)p(t|z)}{p(t)p(t|z)} dwdt-\\&\iint p(z)p(t|z) \log \frac{p(t|z)p(t|w)}{p(t)p(t|w)} dzdt.
    \end{aligned}
    \label{eq:origin_mi_objective}
\end{equation}

By factorizing the double integrals in Eq.~\ref{eq:origin_mi_objective} into another two components, we show the following:

\begin{equation}
\small
\begin{aligned}
 &\iint p(w)p(t|w) \log \frac{p(t|w)p(t|z)}{p(t)p(t|z)} dwdt\\
= & \iint p(w)p(t|w) \log \frac{p(t|w)}{p(t|z)} dwdt +\\
&\iint p(w)p(t|w) \log \frac{p(t|z)}{p(t)} dwdt\\
= & \iint p(w)p(t|w) \log \frac{p(t|w)}{p(t|z)} dwdt +\\
&\int \left(\int p(w)p(t|w) dw \right) \log \frac{p(t|z)}{p(t)} dt\\
= & \iint p(w)p(t|w) \log \frac{p(t|w)}{p(t|z)} dwdt +\int p(t) \log \frac{p(t|z)}{p(t)} dt\\
= & \int p(w) f_{KL}[p(t|w)||p(t|z)] dw + \int p(t) \log \frac{p(t|z)}{p(t)} dt,
\end{aligned}
\end{equation}

where $f_{KL}$ represents KL-divergence. Conduct similar factorization for the second term in Eq.~\ref{eq:origin_mi_objective}, we have:
\begin{equation}
\small
\begin{aligned}
&\iint p(z)p(t|z) \log \frac{p(t|z)p(t|w)}{p(t)p(t|w)} dzdt\\
= & \iint p(z)p(t|z) \log \frac{p(t|z)}{p(t|w)} dzdt +\\
&\iint p(z)p(t|z) \log \frac{p(t|w)}{p(t)} dzdt\\
= & \iint p(z)p(t|z) \log \frac{p(t|z)}{p(t|w)} dzdt +\\
&\int \left(\int p(z)p(t|z) dz \right) \log \frac{p(t|w)}{p(t)} dt\\
= & \iint p(z)p(t|z) \log \frac{p(t|z)}{p(t|w)} dzdt +\int p(t) \log \frac{p(t|w)}{p(t)} dt\\
= & \int p(z) f_{KL}[p(t|z)||p(t|w)] dz + \int p(t) \log \frac{p(t|w)}{p(t)} dt
\end{aligned}
\end{equation}

Therefore, our objective is equivalent to:
\begin{equation*}
\begin{aligned}
&I(w;t) - I(z;t)  \\=&\int p(w) f_{KL}[p(t|w)||p(t|z)] dw + \int p(t) \log \frac{p(t|z)}{p(t)} dt - \\& \int p(z) f_{KL}[p(t|z)||p(t|w)] dz - \int p(t) \log \frac{p(t|w)}{p(t)} dt\\
\end{aligned}
\end{equation*}

\begin{equation}
\begin{aligned}
= &\int p(w) f_{KL}[p(t|w)||p(t|z)] dw - \\&\int p(z) f_{KL}[p(t|z)||p(t|w)] dz + \int p(t) \log \frac{p(t|z)p(t)}{p(t)p(t|w)} dt \\
= &\underbrace{\int p(w) f_{KL}[p(t|w)||p(t|z)] dw}_{V_1} - \\&\underbrace{\int p(z) f_{KL}[p(t|z)||p(t|w)] dz}_{V_2} + \underbrace{\int p(t) \log \frac{p(t|z)}{p(t|w)} dt}_{Q}.
\end{aligned}
\end{equation}
We can see that:
\begin{equation}
\begin{aligned}
V_1-V_2=\mathbb{E}_{w \sim \mathbb{E}_E(\CW \mid \CT,\CX)}\{f_{KL}[\mathbb{P}_w \| \mathbb{P}_z] - f_{KL}[\mathbb{P}_z \| \mathbb{P}_w]\},
\end{aligned}
\end{equation}
where $\mathbb{P}_w=p(t|w), \mathbb{P}_z=p(t|z)$ denote the probability distributions. Note that, in our system, the residual image $z$ is the difference between the watermarked image $w$ and the original image $x$. Our goal is that $w$ contains all information of secret watermark $t$ while $z$ does not contain any information of $t$, i.e., $t$ and $z$ are orthogonal to each other. Therefore, in maximizing $V_1-V_2$, $t$ and $z$ become progressively orthogonal to each other, which allows to explicitly approximate $p(t|z)$ as $ p(t)$, so $Q \approx \int p(t) \log \frac{p(t)}{p(t|w)} dt = f_{KL}[p(t)||p(t|w)]$. Based on the non-negativity of KL-divergence, we further have:
\begin{equation}
\begin{aligned}
\max I(w;t) - I(z;t)  \Leftarrow \max f_{KL}[\mathbb{P}_w \| \mathbb{P}_z] - f_{KL}[\mathbb{P}_z \| \mathbb{P}_w].
\end{aligned}
\end{equation}
At the extreme, where $z$ does not contain any information about $t$, it is further shown that $w$ is sufficient for $t$. 
\end{proof}

\section{Authentication Algorithm}
\begin{algorithm}[!h]
\small
\caption{Authentication Algorithm (W-CR)}  
\begin{algorithmic}[1]
\Require Test sample's pixel value $w$, pixel coordinate $v$, authentication model $h$, perturbation type $P$, pixel transformation $\phi_P(\cdot, \epsilon)$, coordinate transformation $\theta_P(\cdot, \rho)$, secret watermark $t$, verification threshold $\tau$, $N$, $N_0$, $\alpha$.

\State $w$, $v\gets $ watermarked image pixel and coordinate
\State $\texttt{counts0} \gets \textsc{SampleUnderNoise}(h, \phi_P(w,\epsilon), \theta_P(v,\rho),$ $ t, \tau, N_0)$
\State $\hat{y}_A \gets$ top index in $\texttt{counts0}$, where $\hat{y}_A\in\{0,1\}$
\State $\texttt{counts} \gets \textsc{SampleUnderNoise}(h, \phi_P(w,\epsilon), \theta_P(v,\rho),$ $ t, \tau, N)$
\State $\underline{p_A} \gets \textsc{LowerConfBound}(\texttt{counts}[\hat{y}_A], N, 1-\alpha)$
\State \textbf{if} $\underline{p_A} > \frac{1}{2}$ \textbf{return} prediction $\hat{y}_A$ and radius $R=\sigma \Phi^{-1}(\underline{p_A})$
\State \textbf{else} \textbf{return} ABSTAIN
\end{algorithmic}
\label{alg:certify}
\end{algorithm}

\section{W-IR Training Algorithm} \label{sec:W-IR_train}

\begin{algorithm}[!h]
\small
\caption{Training Algorithm for W-IR}  
\begin{algorithmic}[1]
\Require Training dataset $\CD$, the original image $x\subset \CX$, secret watermark $t\subset \CT$, perturbation layer $l_{p}$, encoder $E$, decoder $D$.
\State Initialize encoder parameters $\psi_{E}$ and decoder parameters $\psi_{D}$. 
\For {epoch $\kappa=1,2,3,... K$}
    \For{each image $x$ in $\CD$}
    \State \textbf{Phase 1: Train Encoder and Decoder with $\CL$}
        \State Sample random secret watermark $t$ from $\CT$
        \State  $w \gets E(x, t)$ \Comment{Watermarked image}
        \State  $t_D \gets D(l_p(w))$
        \State Calculate loss $\CL(t^{\prime}, t, w, x)$ using Eq.~\ref{eq:L_loss}
        \State Update parameters $\psi_{E}$ and $\psi_{D}$ to minimize $\CL$.
\Statex
    \State \textbf{Phase 2: Train Encoder with $\CL_{RIL}$}
        \State Get residual image $z \gets  w - x$ 
        \State  $t_{w} \gets D(w)$
        \State  $t_{z} \gets D(z)$
        \State Calculate loss $\CL_{RIL}(t_{w},t_{z})$ using Eq.~\ref{eq:RIL_loss}
        \State Update only parameters $\psi_{E}$ to minimize $\CL_{RIL}$
    \EndFor
\EndFor
\State \textbf{Output:} Trained encoder and decoder parameters $\psi_{E}$, $\psi_{D}$.

\end{algorithmic}
\label{alg:W-IR_train}
\end{algorithm}

\section{Discussion and Limitations}

\noindent\textbf{The Knowledge of Residual Image.} We assume that the adversary in identity linking and forgery attacks has access to residual images of watermarked images. While the direct obtaining of residual images may be impractical in real-world scenarios, which is also beyond the scope of this work, alternative approaches exist. For instance, watermark removal methods such as VAE reconstruction~\cite{stablesignature} can provide estimates of the original images, from which residual images can be derived. To validate our attacks under such conditions, we conduct experiments under COCO (StegaStamp), employing VAE reconstruction to estimate original images and generate residual images for identity leakage assessment. Table~\ref{tab:VAEestimate} shows that despite the performance degradation, we can still successfully implement identity leakage attacks using VAE reconstruction estimates.

\vspace{0.03in}

\noindent\textbf{Identity Leakage Attacks in In-processing Watermarks.}
We investigate identity leakage in in-processing watermarks~\cite{wen2023tree,Gaussianshading}, where the embedded information strongly correlates with image semantics. We utilize Tree-Rings~\cite{wen2023tree}, a method that embeds watermark patterns in the Fourier domain of diffusion model noise. 
We generate watermarked images using four distinct patterns, each representing a different user. We use prompts sampled from the COCO dataset captions. Given that Tree-Rings does not embed conventional bit string watermarks, our investigation concentrates on identity linking and forgery attacks. The identity linking attack produces a modest Silhouette Score of 0.21, likely because the semantic watermark fundamentally alters the image content, causing residuals to retain substantial semantic information that hinders effective clustering. However, when conducting identity forgery attacks on the COCO dataset with access to a user's watermarked images, we achieve 100\% forgery accuracy, demonstrating significant vulnerabilities in identity protection.

\section{Impact of Knowing User Identity} \label{sec:know_id}
In identity forgery attacks, the adversary uses $k$-means clustering to cluster residual images without knowing their user identity.  
Here we further explore the attack when the adversary knows the user-specific residual images
and the results are shown in Table \ref{tab:ablation}. Note that when the identity is known, there is no need for the adversary to use the clustering algorithm.  
We observe that, with additive Gaussian noise, our attack achieves similar forged bit accuracy on knowing the user identity or not. For instance, on CelebA (StegaStamp) with additive Gaussian noise $\sigma=0.1$, the forged bit accuracy differs by only 0.02\%. 
With affine transformation, on average, the forged bit accuracy without knowing the identity is 2.97\% lower than that with knowing the identity. 
The results validate that the proposed k-mean clustering-based attack is effective, and knowing the user identity of residual images can further make the identity forgery attack more serious. 


\begin{table}[]
\centering
\scriptsize
\setlength\tabcolsep{2pt}
\caption{Results of identity forgery attacks where the user identity is known to the adversary or not. 
}
\vspace{-2mm}
\resizebox{\linewidth}{!}{
\begin{tabular}{cccccc}
\toprule
\multicolumn{2}{c}{}  & \multicolumn{2}{c}{COCO (StegaStamp)} & \multicolumn{2}{c}{CelebA (StegaStamp)} \\
\cmidrule(r){3-4} \cmidrule(r){5-6} 
\multicolumn{1}{c}{User identity} & & $\times$ & $\checkmark$  & $\times$ & $\checkmark$ \\
\midrule
\multicolumn{1}{l}{}  & \begin{tabular}[c]{@{}c@{}}Noise\\ ($\sigma$)\end{tabular} & \begin{tabular}[c]{@{}c@{}}Forged \\Bit Acc. \end{tabular} & \begin{tabular}[c]{@{}c@{}}Forged \\Bit Acc. \end{tabular} & \begin{tabular}[c]{@{}c@{}}Forged \\Bit Acc. \end{tabular} & \begin{tabular}[c]{@{}c@{}}Forged \\ Bit Acc. \end{tabular} \\

\midrule
Clean vanilla & - & 65.44\% & 78.72\% & 84.80\% & 94.49\% \\
\midrule
{W-ER} & - & 98.99\% & 98.96\% & 100.0\% & 100.0\% \\
\midrule
\multirow{3}{*}{\begin{tabular}[c]{@{}c@{}}W-CR:\\ Additive \\Gaussian noise\end{tabular}} & 0.10 & 91.10\% & 91.52\% & 98.61\% & 98.59\% \\
 & 0.25 & 95.22\% & 95.14\% & 98.74\% & 98.76\% \\
 & 0.50 & 96.66\% & 96.61\% & 92.13\% & 92.06\% \\
 \midrule
\multirow{3}{*}{\begin{tabular}[c]{@{}c@{}}W-CR:\\ Affine\\ transformation \end{tabular}} & 0.01 & 81.34\% & 87.26\% & 94.29\% & 94.63\% \\
 & 0.02 & 77.97\% & 86.98\% & 95.39\% & 95.36\% \\
 & 0.03 & 81.79\% & 84.34\% & 91.56\% & 91.58\% \\
\bottomrule
\end{tabular} \label{tab:ablation}
}
\end{table}


\section{Robustness Results on W-CR with $\CL_{RIL}$} \label{sec:res-WCR-IB}  

Table~\ref{tab:certified_acc_mi} summarizes the certified accuracy of W-CR under different settings. The results indicate that our residual information loss can maintain
the certified robustness of the W-CR without the residual information loss (refer to Table~\ref{tab:certified_acc}). For instance, their average certified accuracy are respectively 99.43\% vs 99.27\% under additive Gaussian noise, and 98.73\% vs 98.28\% under affine transformations, respectively. This validates that the designed residual information loss almost does not affect the 
model's robust training. 


We also report the certified accuracy across various radii under different datasets, models, and noise perturbations in Figure~\ref{fig:additive_certify_result_mi} and Figure~\ref{fig:affine_certify_result_mi}. The results also demonstrate that the introduced residual information loss does not affect the certified radius. 

\vspace{-0.05in}

\begin{table}[h]
\centering
 \small
\setlength\tabcolsep{2pt}
\caption{Certified robustness of W-CR with residual information loss $\CL_{RIL}$.}
\vspace{-2mm}
\resizebox{\linewidth}{!}{
\begin{tabular}{ccccc}
\toprule
 & \multicolumn{2}{c}{Additive Gaussian noise} & \multicolumn{2}{c}{Affine transformation} \\
\cmidrule(r){2-3} \cmidrule(r){4-5}
Dataset (Model) & Noise ($\sigma$) & Certified Acc.↑ & Noise ($\sigma$) & Certified Acc.↑ \\
\midrule
\multirow{3}{*}{\begin{tabular}[c]{@{}c@{}}COCO\\ (StegaStamp)\end{tabular}} & 0.10 & 98.80\% & 0.01 & 98.40\% \\
 & 0.25 & 97.60\% & 0.02 & 98.20\% \\
 & 0.50 & 98.00\% & 0.03 & 97.40\% \\
 \midrule
\multirow{3}{*}{\begin{tabular}[c]{@{}c@{}}CelebA\\ (StegaStamp)\end{tabular}} & 0.10 & 100.00\% & 0.01 & 100.00\% \\
 & 0.25 & 100.00\% & 0.02 & 100.00\% \\
 & 0.50 & 100.00\% & 0.03 & 100.00\% \\
 \midrule
\multirow{3}{*}{\begin{tabular}[c]{@{}c@{}}COCO\\ (HiDDeN)\end{tabular}} & 0.10 & 100.00\% & 0.01 & 100.00\% \\
 & 0.25 & 100.00\% & 0.02 & 100.00\% \\
 & 0.50 & 100.00\% & 0.03 & 100.00\% \\
 \midrule
\multirow{3}{*}{\begin{tabular}[c]{@{}c@{}}CelebA\\ (HiDDeN)\end{tabular}} & 0.10 & 100.00\% & 0.01 & 100.00\% \\
 & 0.25 & 100.00\% & 0.02 & 100.00\% \\
 & 0.50 & 100.00\% & 0.03 & 100.00\% \\
 \midrule
Average &  & 99.25\% &  & 99.50\% 
\\ \bottomrule
\end{tabular}
}\vspace{-0.15in}
\label{tab:certified_acc_mi}
\end{table}

\begin{table*}[htb]
\centering
\scriptsize
\setlength\tabcolsep{1.5pt}
\caption{
Authentication effectiveness and visual quality of different watermarking methods with $\CL_{RIL}$.} 
\vspace{-2mm} \resizebox{\linewidth}{!}{
\begin{tabular}{ccccccccccccccccccc}
\toprule
\multicolumn{1}{l}{} & \multicolumn{4}{c}{Clean vanilla} & \multicolumn{4}{c}{W-ER} & \multicolumn{5}{c}{ W-CR: Additive Gaussian noise} & \multicolumn{5}{c}{ W-CR: Affine transformation} \\
\cmidrule(r){2-5} \cmidrule(r){6-9} \cmidrule(r){10-14} \cmidrule(r){15-19} 
\begin{tabular}[c]{@{}c@{}}Dataset\\ (Model)\end{tabular} & PSNR↑ & SSIM↑ & \begin{tabular}[c]{@{}c@{}}Clean \\ Bit Acc.↑\end{tabular} & \begin{tabular}[c]{@{}c@{}}Clean \\ 
Acc.↑\end{tabular} & PSNR↑ & SSIM↑ & \begin{tabular}[c]{@{}c@{}}Clean \\ Bit Acc.↑\end{tabular} & \begin{tabular}[c]{@{}c@{}}Clean \\ Acc.↑\end{tabular} & \begin{tabular}[c]{@{}c@{}}Noise\\ ($\sigma$)\end{tabular} & PSNR↑ & SSIM↑ & \begin{tabular}[c]{@{}c@{}}Clean\\  Bit Acc.↑\end{tabular} & \begin{tabular}[c]{@{}c@{}}Clean\\  Acc.↑\end{tabular} & \begin{tabular}[c]{@{}c@{}}Noise\\ ($\sigma$)\end{tabular} & PSNR↑ & SSIM↑ & \begin{tabular}[c]{@{}c@{}}Clean\\  Bit Acc.↑\end{tabular} & \begin{tabular}[c]{@{}c@{}}Clean\\  Acc.↑\end{tabular} \\
\midrule
\multirow{3}{*}{\begin{tabular}[c]{@{}c@{}}COCO\\ (Stega-\\Stamp)\end{tabular}} & \multirow{3}{*}{27.66} & \multirow{3}{*}{0.9041} & \multirow{3}{*}{99.88\%} & \multirow{3}{*}{100.0\%} & \multirow{3}{*}{25.77} & \multirow{3}{*}{0.7767} & \multirow{3}{*}{100.0\%} & \multirow{3}{*}{100.0\%} 
& 0.10 & 24.78 & 0.8775 & 99.98\% & 100.0\% & 0.01 & 27.84 & 0.9144 & 99.95\% & 100.0\% \\
&  &  &  &  &  &  &  &  & 0.25 & 23.69 & 0.8240 & 100.0\% & 100.0\% & 0.02 & 26.97 & 0.9039 & 99.98\% & 100.0\% \\
&  &  &  &  &  &  &  &  & 0.50 & 20.03 & 0.7363 & 99.99\% & 100.0\% & 0.03 & 25.03 & 0.8909 & 99.98\% & 100.0\% \\
 \midrule
\multirow{3}{*}{\begin{tabular}[c]{@{}c@{}}CelebA\\ (Stega-\\Stamp)\end{tabular}} & \multirow{3}{*}{27.48} & \multirow{3}{*}{0.8975} & \multirow{3}{*}{99.69\%} & \multirow{3}{*}{100.0\%} & \multirow{3}{*}{27.46} & \multirow{3}{*}{0.7609} & \multirow{3}{*}{99.97\%} & \multirow{3}{*}{100.0\%} 
& 0.10 & 23.37 & 0.8202 & 99.99\% & 100.0\% & 0.01 & 33.61 & 0.9554 & 99.98\% & 100.0\% \\
&  &  &  &  &  &  &  &  & 0.25 & 23.20 & 0.7992 & 100.0\% & 100.0\% & 0.02 & 31.14 & 0.9525 & 99.99\% & 100.0\% \\
&  &  &  &  &  &  &  &  & 0.50 & 20.38 & 0.6457 & 99.98\% & 100.0\% & 0.03 & 26.61 & 0.8929 & 99.98\% & 100.0\% \\
 \midrule
\multirow{3}{*}{\begin{tabular}[c]{@{}c@{}}COCO\\ (HiDDeN)\end{tabular}} & \multirow{3}{*}{32.27} & \multirow{3}{*}{0.9397} & \multirow{3}{*}{79.23\%} & \multirow{3}{*}{100.0\%} & \multirow{3}{*}{27.94} & \multirow{3}{*}{0.8752} & \multirow{3}{*}{81.08\%} & \multirow{3}{*}{100.0\%} 
& 0.10 & 25.71 & 0.8034 & 85.34\% & 99.92\% & 0.01 & 29.77 & 0.9154 & 81.28\% & 100.0\% \\
&  &  &  &  &  &  &  &  & 0.25 & 21.83 & 0.6949 & 80.23\% & 99.83\% & 0.02 & 29.75 & 0.9125 & 81.52\% & 100.0\% \\
&  &  &  &  &  &  &  &  & 0.50 & 19.51 & 0.6044 & 79.96\% & 100.0\% & 0.03 & 29.88 & 0.9113 & 80.34\% & 100.0\% \\
 \midrule\multirow{3}{*}{\begin{tabular}[c]{@{}c@{}}CelebA\\ (HiDDeN)\end{tabular}} & \multirow{3}{*}{35.14} & \multirow{3}{*}{0.9599} & \multirow{3}{*}{88.45\%} & \multirow{3}{*}{100.0\%} & \multirow{3}{*}{32.26} & \multirow{3}{*}{0.9472} & \multirow{3}{*}{77.82\%} & \multirow{3}{*}{100.0\%} 
 & 0.10 & 26.20 & 0.8105 & 83.77\% & 100.0\% & 0.01 & 32.07 & 0.9419 & 89.13\% & 100.0\% \\
 &  &  &  &  &  &  &  &  & 0.25 & 22.68 & 0.7178 & 77.37\% & 100.0\% & 0.02 & 31.61 & 0.9414 & 87.88\% & 100.0\% \\
 &  &  &  &  &  &  &  &  & 0.50 & 21.21 & 0.7199 & 76.67\% & 99.86\% & 0.03 & 22.66 & 0.9330 & 78.78\% & 99.88\% \\
 \midrule
Average & 30.64 & 0.9253 & 91.81\% & 100.0\% & 28.36 & 0.8400 & 89.72\% & 100.0\% &  & 22.72 & 0.7545 & 90.27\% & 99.97\% &  & 28.91 & 0.9221 & 91.57\% & 99.99\%
\\ \bottomrule
\end{tabular}
} 
\label{tab:auth_acc_mi}
\end{table*}

\begin{figure}[th]
    \centering
    \begin{subfigure}[b]{0.45\columnwidth}
        \centering
        \includegraphics[width=\textwidth]{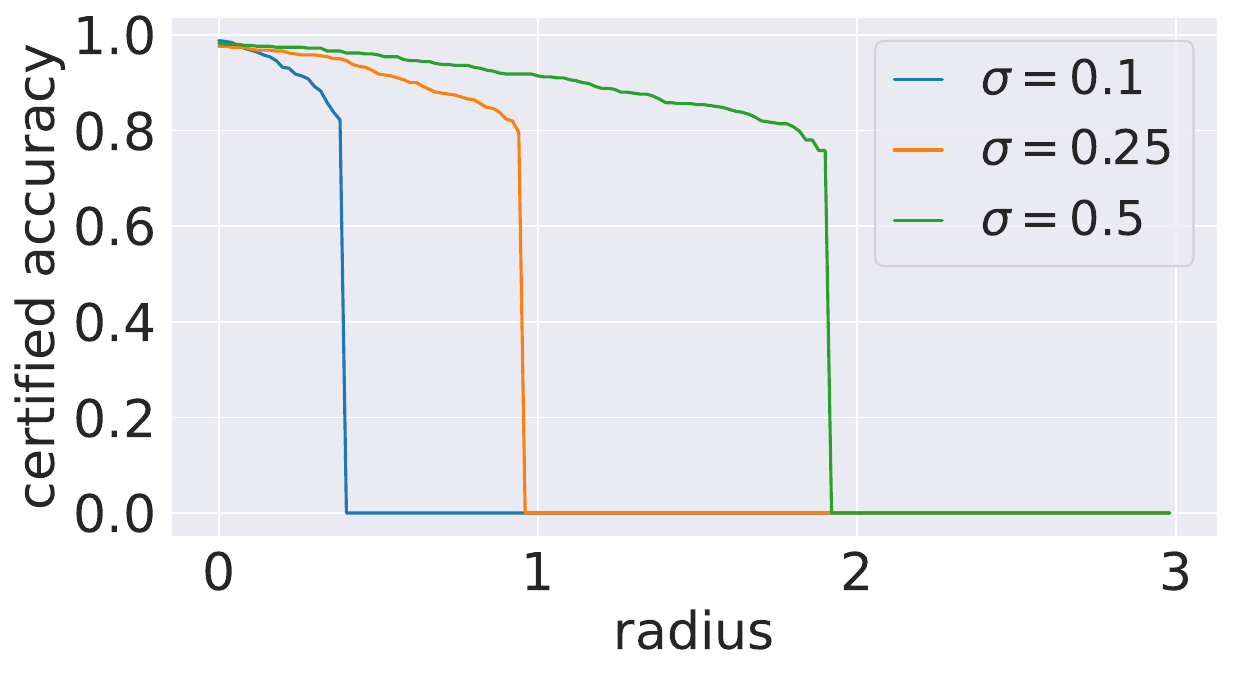}
    \end{subfigure}
    \begin{subfigure}[b]{0.45\columnwidth}
        \centering
        \includegraphics[width=\linewidth]{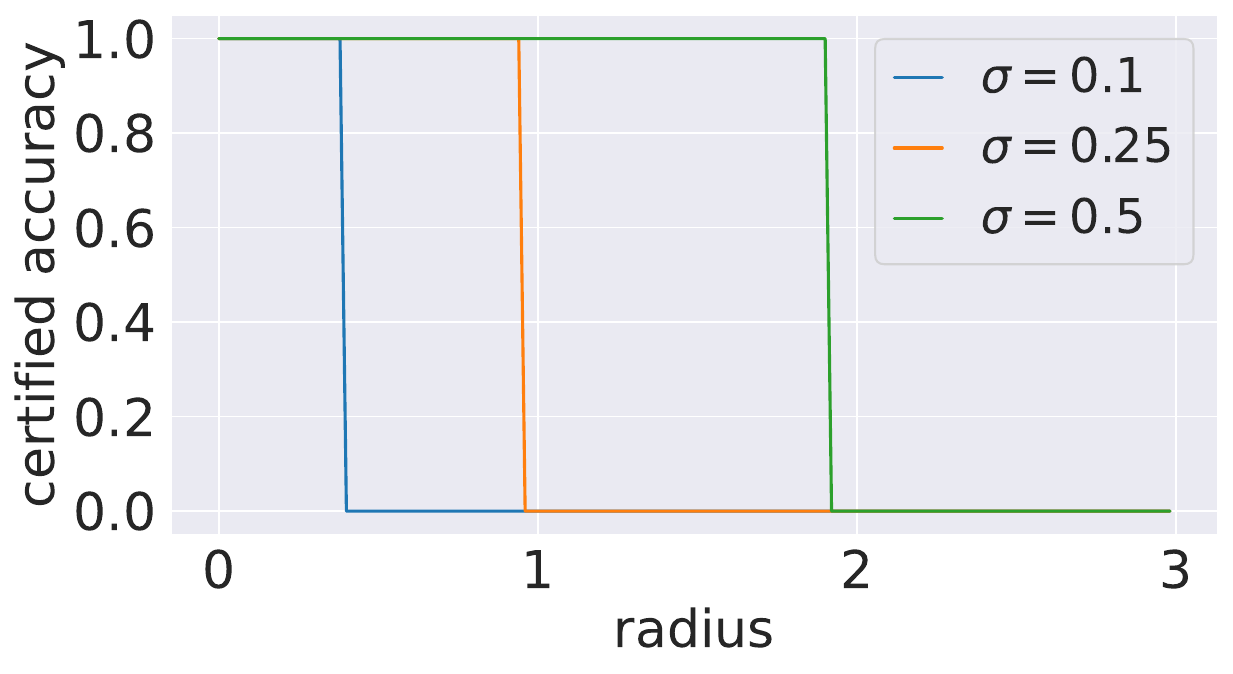}
    \end{subfigure}
    
    \begin{subfigure}[b]{0.45\columnwidth}
        \centering
        \includegraphics[width=\textwidth]{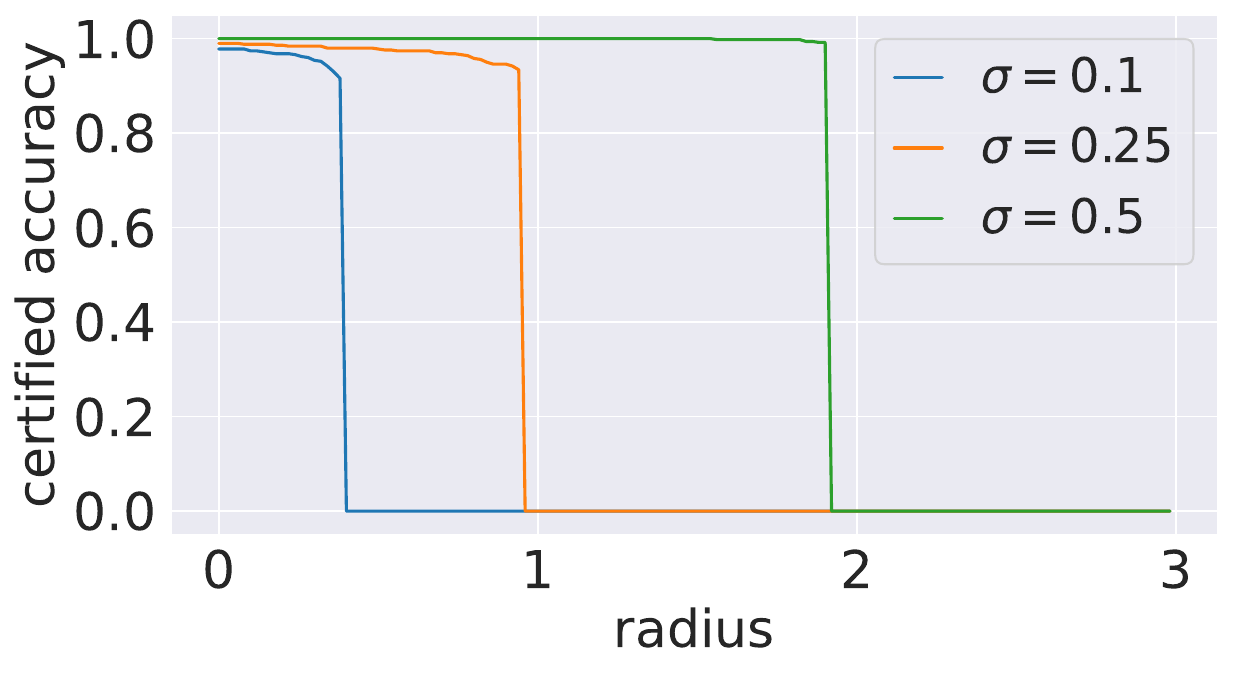}
    \end{subfigure}
    \begin{subfigure}[b]{0.45\columnwidth}
        \centering
        \includegraphics[width=\linewidth]{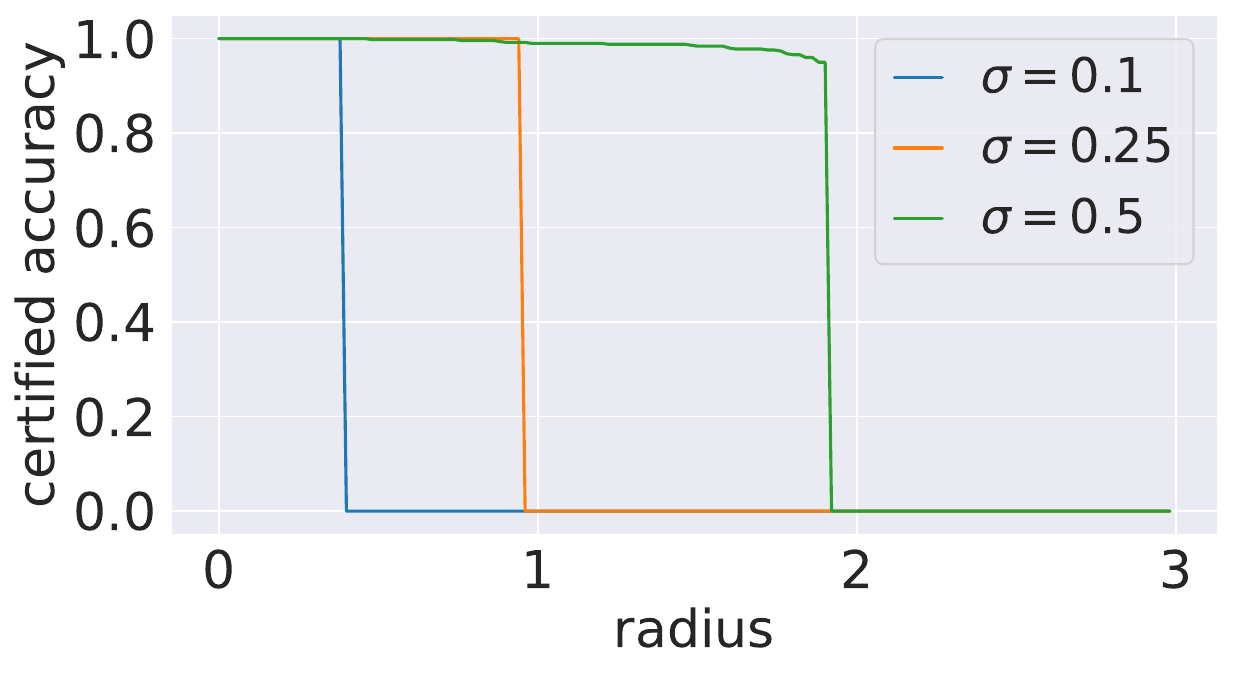}
    \end{subfigure}   

\vspace{-2mm}
\caption{Certified accuracy (w/ $\CL_{RIL}$) at different radii against additive Gaussian noise. Datasets: left-COCO and right-CelebA; Models: top-StegaStamp and bottom-HiDDeN.} 
\label{fig:additive_certify_result_mi}
\vspace{-4mm}
\end{figure}

\begin{figure}[th]
    \centering
    \begin{subfigure}[b]{0.45\columnwidth}
        \centering
        \includegraphics[width=\textwidth]{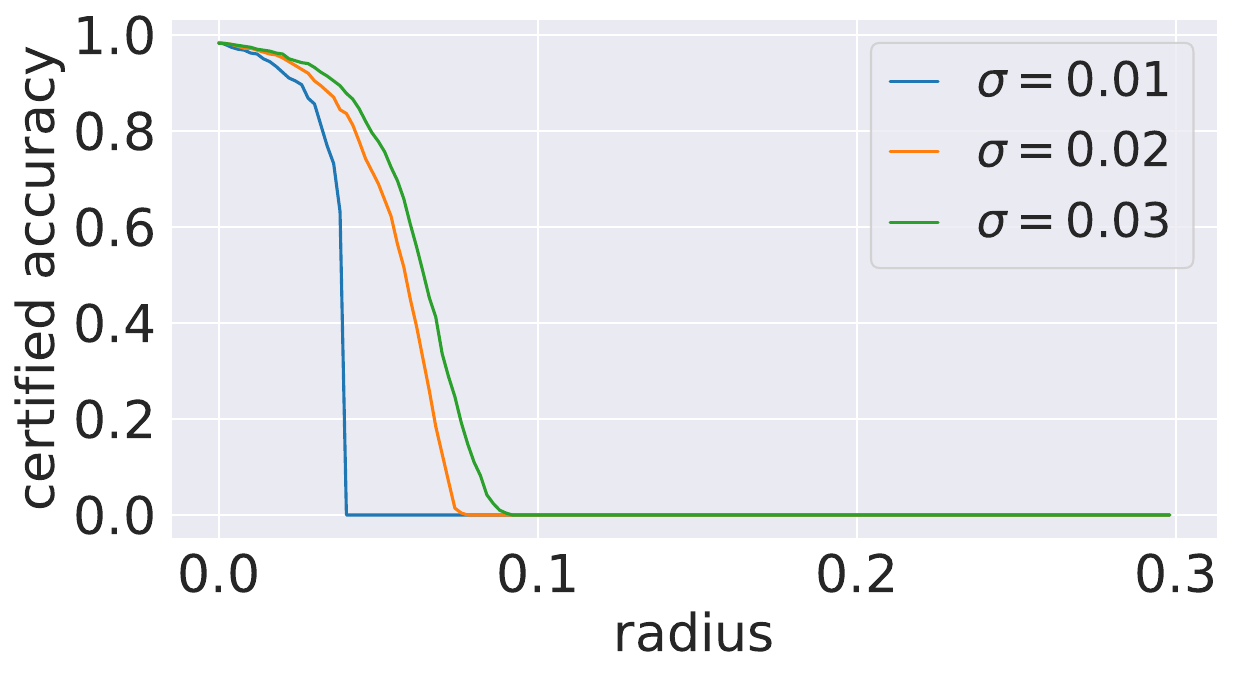}
    \end{subfigure}
    \begin{subfigure}[b]{0.45\columnwidth}
        \centering
        \includegraphics[width=\linewidth]{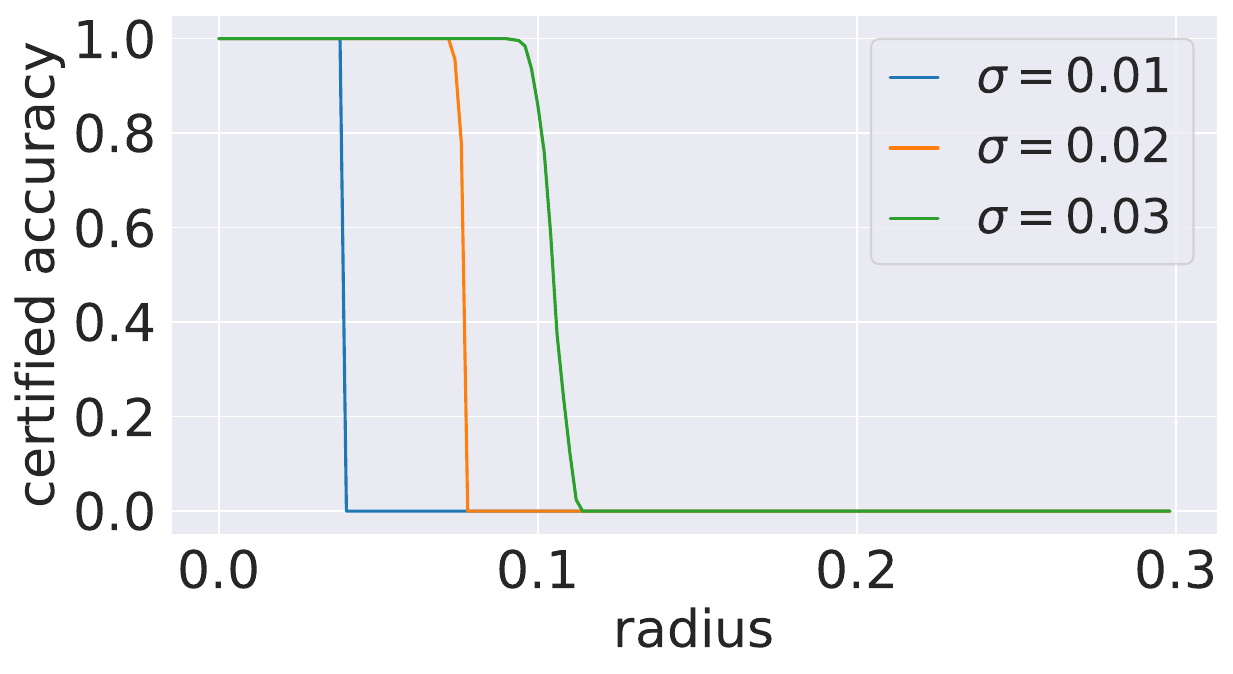}
    \end{subfigure}
    
    \begin{subfigure}[b]{0.45\columnwidth}
        \centering
        \includegraphics[width=\textwidth]{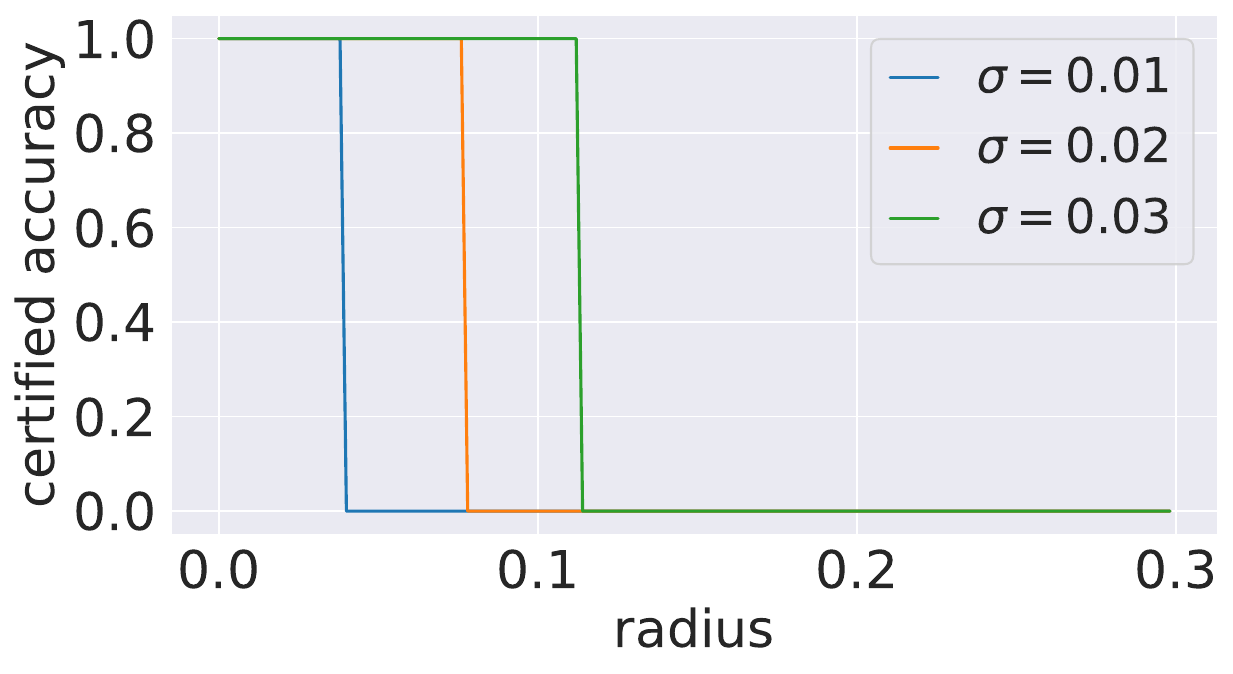}
    \end{subfigure}
    \begin{subfigure}[b]{0.45\columnwidth}
        \centering
        \includegraphics[width=\linewidth]{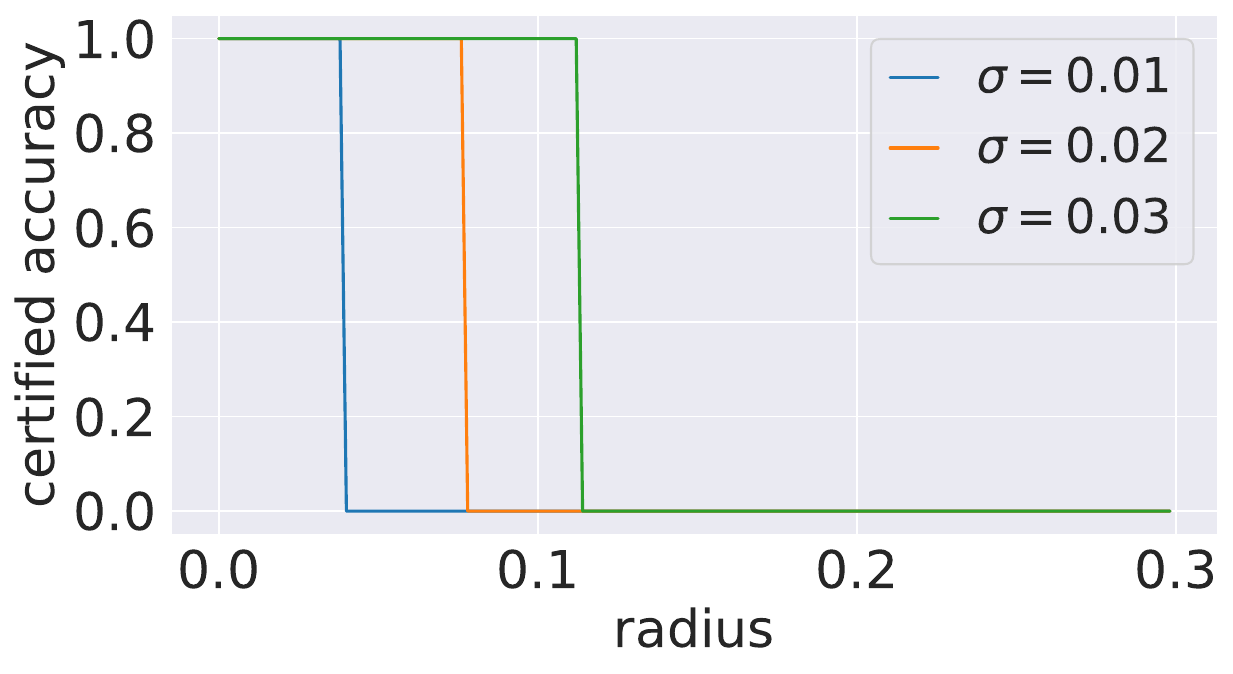}
    \end{subfigure}   

\vspace{-2mm}
\caption{Certified accuracy (w/ $\CL_{RIL}$) at different radii against affine transformation. Datasets: left-COCO and right-CelebA; Models: top-StegaStamp and bottom-HiDDeN.} 
\label{fig:affine_certify_result_mi}
\end{figure}

\section{Additional Experimental Results}


\begin{figure}[th]
    \centering
    \begin{subfigure}[b]{0.48\columnwidth}
        \centering
        \includegraphics[width=\textwidth]{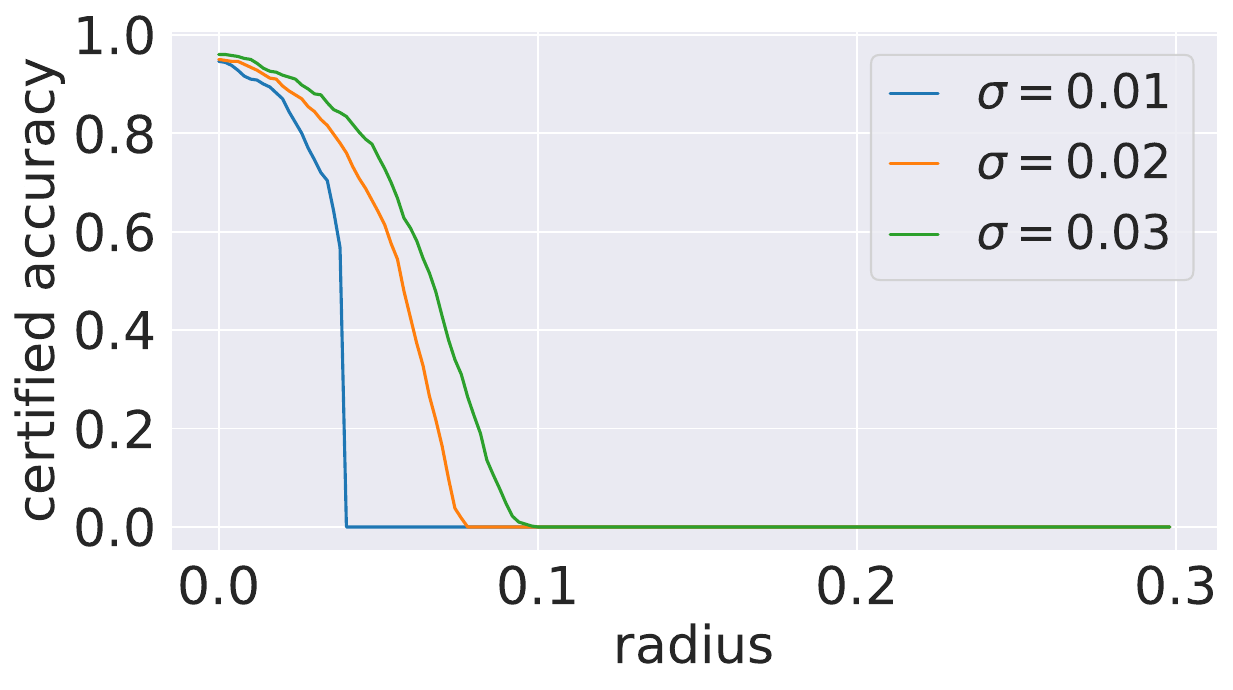}
    \end{subfigure}
    \begin{subfigure}[b]{0.48\columnwidth}
        \centering
        \includegraphics[width=\linewidth]{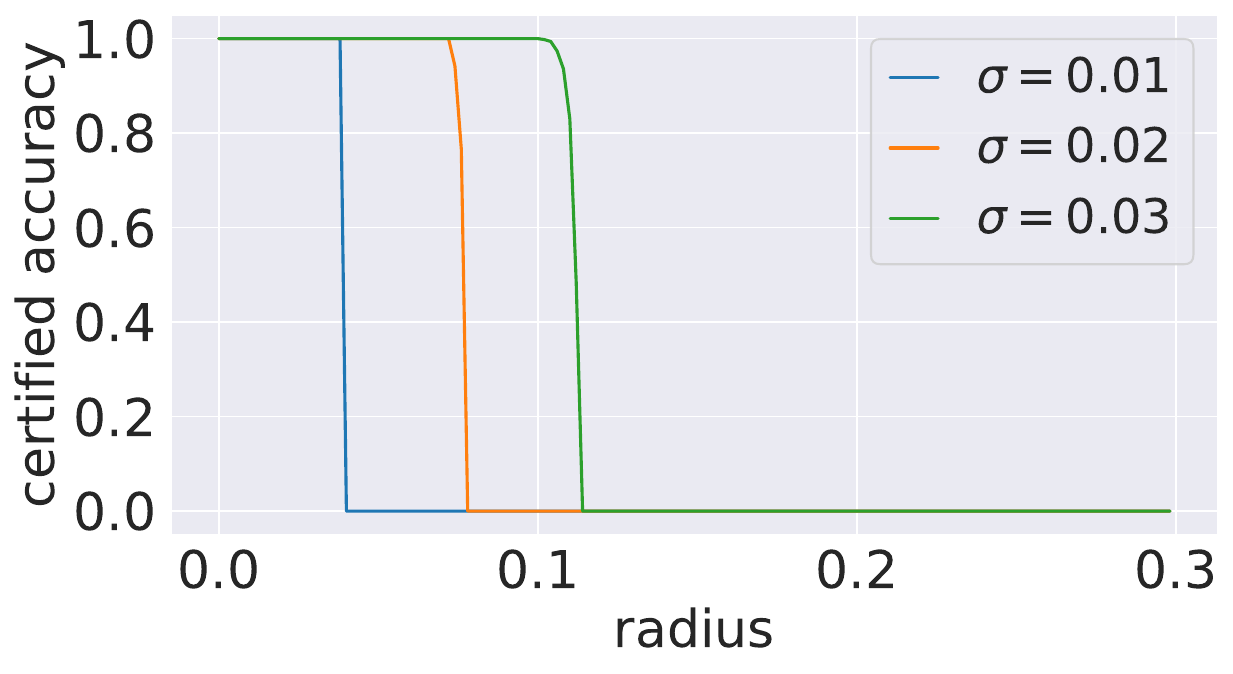}
    \end{subfigure}
    
    \begin{subfigure}[b]{0.48\columnwidth}
        \centering
        \includegraphics[width=\textwidth]{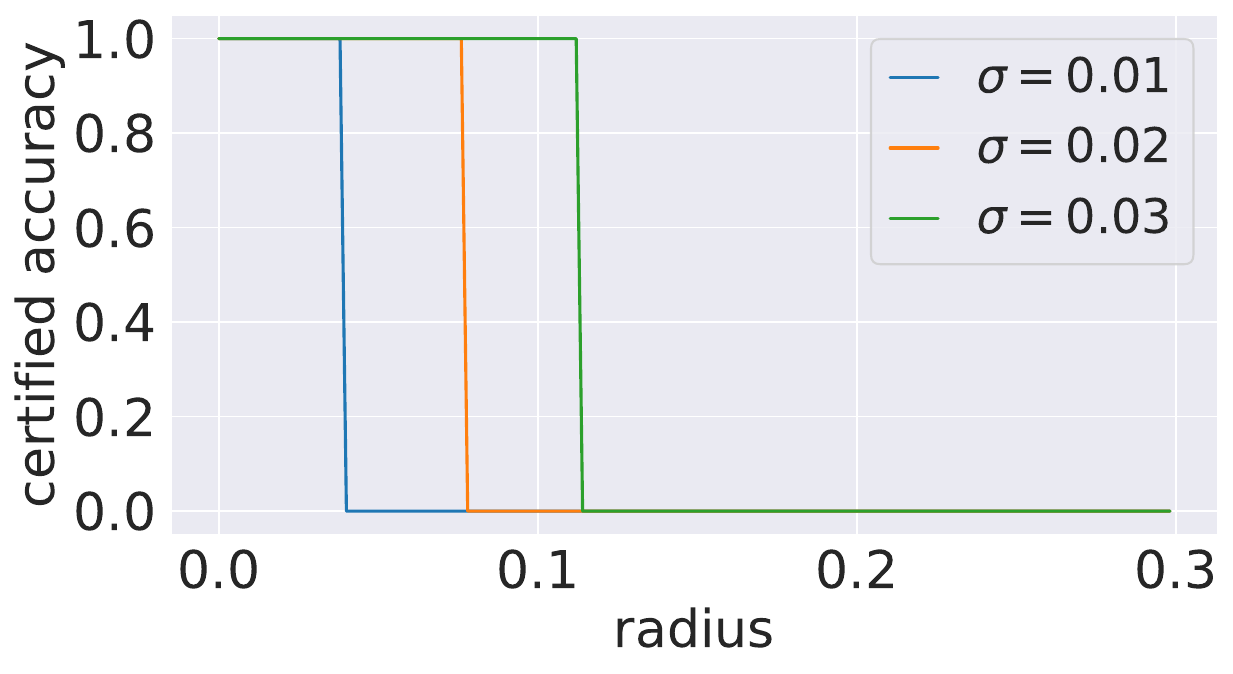}
    \end{subfigure}
    \begin{subfigure}[b]{0.48\columnwidth}
        \centering
        \includegraphics[width=\linewidth]{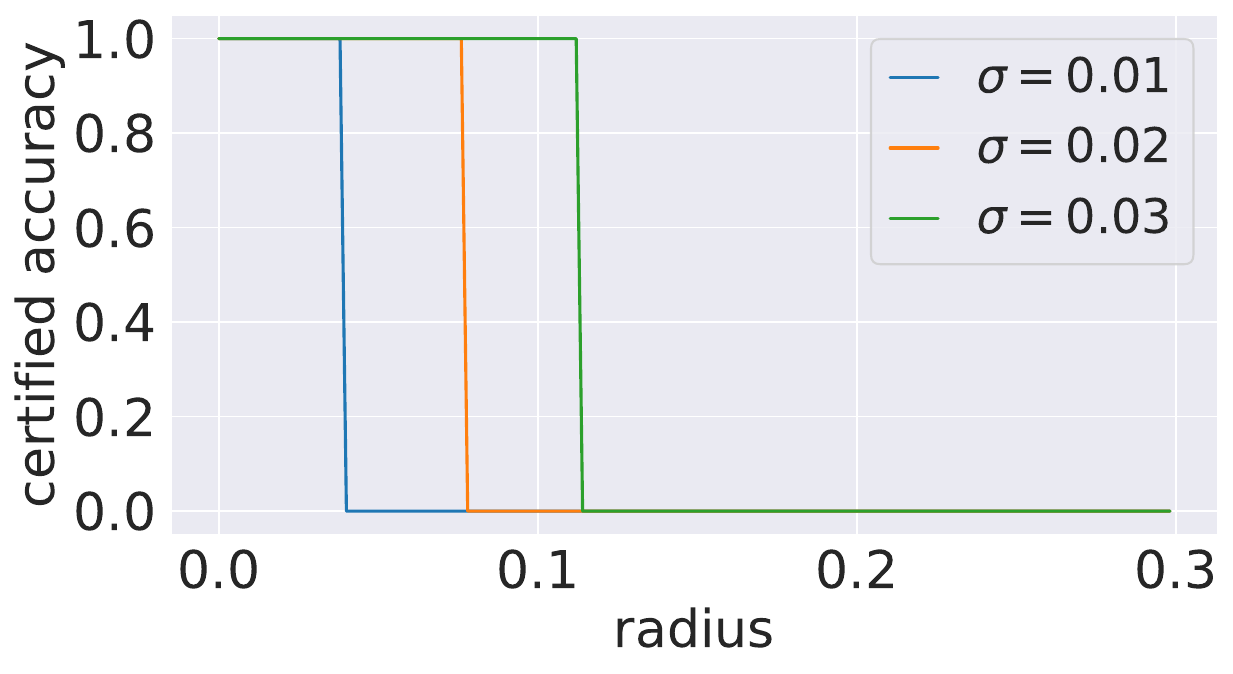}
    \end{subfigure}   

\vspace{-2mm}
\caption{Certified accuracy (w/o $\CL_{RIL}$) at different radii against affine transformation. Datasets: left-COCO and right-CelebA; Models: top-StegaStamp and bottom-HiDDeN.} 
\label{fig:affine_certify_result_Affine}
\end{figure}

\begin{figure}[th]
    \centering
    \begin{subfigure}[b]{0.78\columnwidth}
        \centering
        \includegraphics[width=\linewidth]{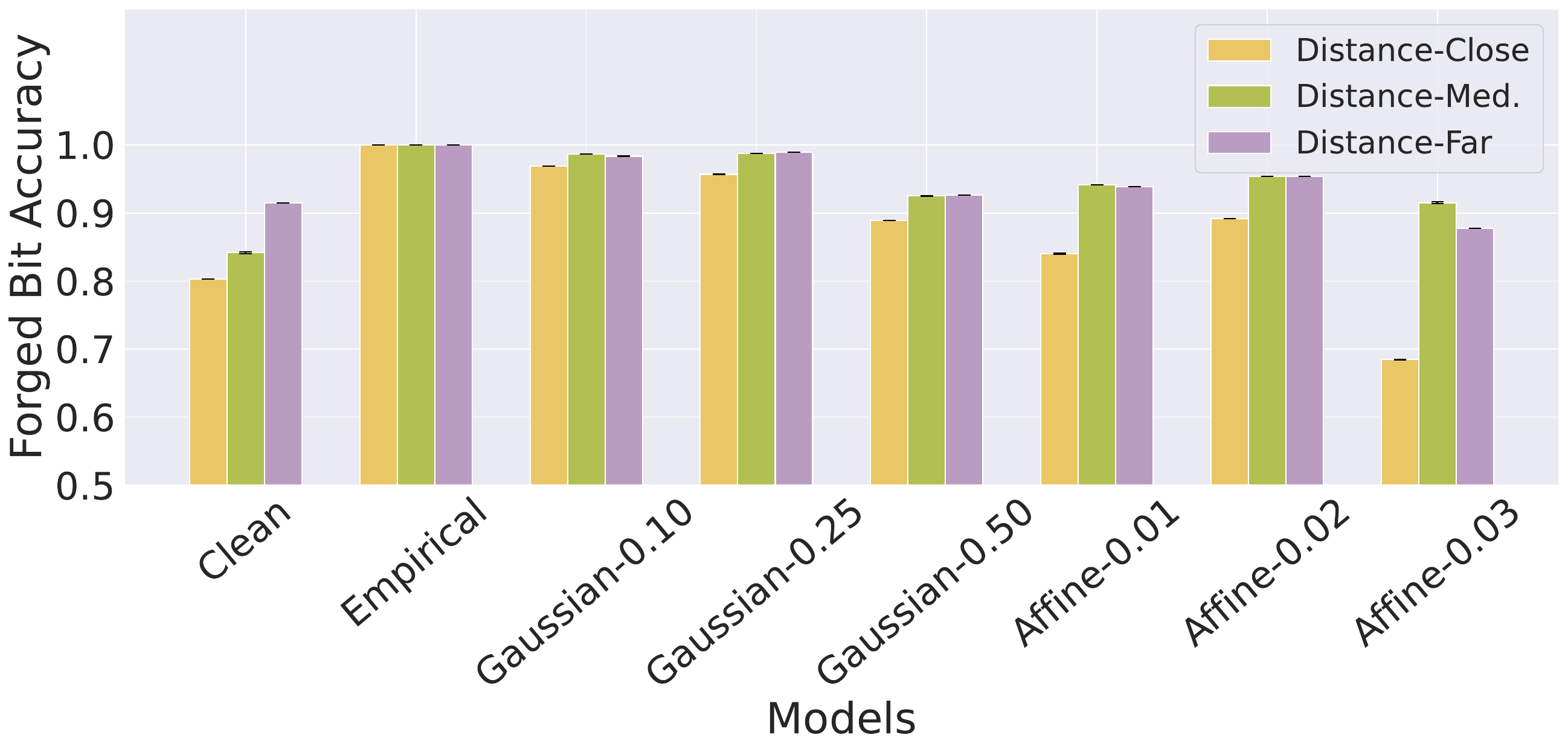}
    \end{subfigure}
\vspace{-2mm}
\caption{Impact of secret watermark pairs' distances under CelebA (StegaStamp).}
\label{fig:forgery_result_CelebA}
\end{figure}

Our robust training (both W-ER and W-CR) maintains $O(1)$ complexity vs. vanilla training, with minor overhead on noise injection. During standard inference, W-ER maintains $O(1)$ complexity, identical to vanilla inference. For certification inference, W-CR operates at $O(N/k)$ complexity ($N$:noise sampling count; $k$:inference batch size). Table~\ref{tab:computation_cost} shows that both W-ER and W-CR are efficient in large-scale applications.

\begin{table}[H]
\centering
\scriptsize
\setlength\tabcolsep{2pt}
\caption{Training and inference time (seconds) of W-IR, including W-ER and W-CR.}
\vspace{-2mm}
\resizebox{\linewidth}{!}{
\begin{tabular}{cccccc}
\toprule
Dataset(Model) & \multicolumn{1}{l}{Phase} & \begin{tabular}[c]{@{}c@{}}Clean \\ vanilla \end{tabular} & \begin{tabular}[c]{@{}c@{}}W-ER \\ without $L_{RIL}$\end{tabular} & \begin{tabular}[c]{@{}c@{}}W-CR \\ without $L_{RIL}$\end{tabular} & \begin{tabular}[c]{@{}c@{}}W-CR \\ with $L_{RIL}$\end{tabular} \\
\midrule
\multirow{2}{*}{\begin{tabular}[c]{@{}c@{}}COCO\\ (StegaStamp)\end{tabular}} & Training & 4073 & 4105 & 4106 & 4394 \\
 & Inference & \textless{}1 & \textless{}1 & 4.46 & 4.46 \\
\midrule
\multirow{2}{*}{\begin{tabular}[c]{@{}c@{}}Imagenet\\ (StegaStamp)\end{tabular}} & Training & 4923 & 5184 & 4903 & 5832 \\
 & Inference & \textless{}1 & \textless{}1 & 12.7 & 12.79
 \\ \bottomrule
\end{tabular}
}\vspace{-2mm}
\label{tab:computation_cost}
\end{table}

\begin{table}[H]
\centering
\scriptsize
\setlength\tabcolsep{2pt}
\caption{Accuracy of W-CR and W-ER against unseen adversarial examples.}
\vspace{-2mm}
\resizebox{\linewidth}{!}{
\begin{tabular}{ccccccc}
\toprule
 & Clean vanilla & W-ER & \multicolumn{2}{c}{W-CR (Gaussian noise)} & \multicolumn{2}{c}{W-CR (Affine)} \\
\cmidrule(r){2-2} \cmidrule(r){3-3} \cmidrule(r){4-5} \cmidrule(r){6-7}
Dataset(Model) & Acc.↑ & Acc.↑ & Noise($\sigma$) & Certified Acc.↑ & Noise($\sigma$) & Certified Acc.↑ \\
\midrule
\multirow{3}{*}{\begin{tabular}[c]{@{}c@{}}CelebA\\ (StegaStamp)\end{tabular}} & \multirow{3}{*}{0\%} & \multirow{3}{*}{10.00\%} & 0.10 & 100\% & 0.01 & 100\% \\
 &  &  & 0.25 & 100\% & 0.02 & 100\% \\
 &  &  & 0.50 & 100\% & 0.03 & 100\%
 \\ \bottomrule
\end{tabular}
}
\label{tab:unseen_attacks}
\end{table}


To evaluate W-CR's robustness against unseen adversarial examples, we combine pixel and coordinate transformations to generate 100 novel adversarial samples and evaluate their classification accuracy under CelebA (StegaStamp). The results in Table~\ref{tab:unseen_attacks} demonstrate that the W-ER model shows minimal resistance (10\%) to unseen attacks. W-CR models across various noise levels exhibit high defensive capabilities against these adversarial examples.

\begin{table}[!ht]
\centering
\scriptsize
\setlength\tabcolsep{4pt}
\caption{Comparison of identity leakage of original and estimated residual images under COCO (StegaStamp). The original refers to the known original image, and `Estimate' uses VAE to estimate the original image. }
\vspace{-2mm}
\resizebox{\linewidth}{!}{
\begin{tabular}{cccccc}
\toprule
 &  & \multicolumn{2}{c}{Silhouette Score} & \multicolumn{2}{c}{Forged Bit Accuracy} \\
 \cmidrule(r){3-4} \cmidrule(r){5-6}
Perturbations & Noise($\sigma$) & \begin{tabular}[c]{@{}c@{}}Original \end{tabular} & \begin{tabular}[c]{@{}c@{}}Estimate\end{tabular} & \begin{tabular}[c]{@{}c@{}}Original \end{tabular} & \begin{tabular}[c]{@{}c@{}}Estimate\end{tabular} \\
\midrule
\multirow{3}{*}{\begin{tabular}[c]{@{}c@{}} Additive \\Gaussian noise \end{tabular}} & 0.10 & 0.5519 & 0.3637 & 91.10\% & 71.74\% \\
 & 0.25 & 0.7416 & 0.2885 & 95.22\% & 66.34\% \\
 & 0.50 & 0.8327 & 0.3311 & 96.66\% & 68.46\% \\
 \midrule
\multirow{3}{*}{\begin{tabular}[c]{@{}c@{}}Affine \\ transformation\end{tabular}} & 0.01 & 0.3661 & 0.3772 & 81.34\% & 71.13\% \\
 & 0.02 & 0.3792 & 0.3604 & 77.97\% & 72.69\% \\
 & 0.03 & 0.3869 & 0.3775 & 81.79\% & 65.54\%
 \\
 \bottomrule
\end{tabular}
}
\label{tab:VAEestimate}
\end{table}

\end{document}